   \newcommand{\Ab}[1]{\llbracket #1 \rrbracket}
\renewcommand{\Gamma}{\mathbb{Z}^d}
\newcommand{\epsi}{\varepsilon}
 \newcommand{\R}{ \mathbb{R} }
\newcommand{\N}{ \mathbb{N} }
\newcommand{\Z}{ \mathbb{Z} }
\newcommand{\D}{\mathrm{d}}
\newcommand{\Or}{{\mathcal{O}}}
\newcommand{\bm}{\begin{pmatrix}}
\newcommand{\Em}{\end{pmatrix}}
 \newcommand{\Aloc}{\mathcal{A}_{\rm loc}}
  \newcommand{\SLT}{SLT}
 \newcommand{\Lambdak}{{\Lambda_k}} 
   \newcommand{\Lambdal}{{\Lambda_l}} 
     \newcommand{\LambdaM}{{\Lambda_M}} 
    \newcommand{\LambdaMp}{{\Lambda_{M'}}} 
        \newcommand{\LambdaN}{{\Lambda_N}}
\begin{document}
	\allowdisplaybreaks
\numberwithin{equation}{section}
 	\theoremstyle{definition}
	\newtheorem{defi}{Definition}[section]
\newtheorem{rmk}[defi]{Remark}

\theoremstyle{plain}
	\newtheorem{thm}[defi]{Theorem}
		\newtheorem{lem}[defi]{Lemma}
	\newtheorem{cor}[defi]{Corollary}
	\newtheorem{prop}[defi]{Proposition}

\title{Adiabatic theorem in the thermodynamic limit:\\ Systems with a uniform gap.} 
\author{Joscha Henheik \\
	\textit{\footnotesize Mathematisches Institut, Eberhard-Karls-Universit\"at,} \\{ \it \footnotesize Auf der Morgenstelle 10, 72076
		T\"ubingen, Germany} \\
	{\footnotesize and} \\
	{\it \footnotesize IST Austria, Am Campus 1, 3400 Klosterneuburg, Austria} \\[7mm]
	Stefan Teufel\footnote{stefan.teufel@uni-tuebingen.de} \\ \textit{\footnotesize Mathematisches Institut, Eberhard-Karls-Universit\"at,} \\{ \it \footnotesize Auf der Morgenstelle 10, 72076
		T\"ubingen, Germany} }
\maketitle

\begin{abstract}
We show that recent results on adiabatic theory for  interacting gapped many-body systems on finite lattices remain valid in the thermodynamic limit. More precisely, we prove a generalised super-adiabatic theorem  for the automorphism group  describing the   infinite volume dynamics on the quasi-local algebra of observables.
The key assumption is the existence of a sequence of {\em gapped} finite volume Hamiltonians which generates the same infinite volume dynamics in the thermodynamic limit.
Our adiabatic theorem holds also for certain perturbations of gapped ground states that close the spectral gap (so it is an adiabatic theorem also for resonances and in this sense `generalised'), and it provides an adiabatic approximation to all orders in the adiabatic parameter (a property often called `super-adiabatic'). In addition to existing results for finite lattices, we also  perform a resummation of the adiabatic expansion and allow for observables that are not strictly local. Finally, as an application, we prove the validity of linear and higher order  response theory for our class of perturbations also for   infinite systems. While we consider the result and its proof as new and interesting in itself, they also lay the foundation for the proof of an adiabatic theorem for systems with a gap only in the bulk, which will be presented in a follow-up article.
\end{abstract}
\newpage
\section{Introduction}	
 We consider time-dependent families 
\[
H^\Lambda(t) = H_0^\Lambda(t) +\epsi V^\Lambda(t)\,,\quad t\in I \subset \R\,,
\]
 of many-body  Hamiltonians for  lattice fermions in boxes   $\Lambda\subset\Z^d$  with short-range interactions  in the thermodynamic limit $\Lambda\nearrow \Z^d$. We assume that the finite volume Hamiltonians  $H_0^\Lambda(t)$ have a uniformly gapped  spectral island $\sigma_*^\Lambda(t)\subset \sigma(H^\Lambda(t))$ of  uniformly bounded  multiplicity $\kappa^\Lambda(t)$ (typically  the ground state) with corresponding spectral projection $P_*^{\Lambda}(t)$ 
 and that w$^*$-$\lim_{\Lambda\nearrow\Z^d}   \frac{1}{\kappa^{\Lambda}(t)} P_*^{\Lambda}(t) = \rho_0(t)$ exists as a state on the algebra $\mathcal{A}$ of quasi-local observables of the infinite system for all $t\in I$.    The perturbation $V^\Lambda(t)$ can be either a Hamiltonian with short range interactions or a  possibly unbounded external potential, or a sum of both.

Our main result is a generalised super-adiabatic theorem   for the automorphism group $\mathfrak{U}^{\epsi, \eta}_{t,t_0}$
generated by  the densely defined time-dependent derivation $\mathcal{L}_{H(t)}\Ab{\cdot} := \lim_{\Lambda\nearrow \Z^d}  \frac{ 1}{\eta}[H^{\Lambda}(t),\cdot\,] $ on   $\mathcal{A}$. It follows from taking the thermodynamic limit in recent  results on the adiabatic theorem for the finite-volume dynamics \cite{BDF,MT,Teu17}.

Let us briefly explain what we mean by a generalised super-adiabatic theorem.
 For $\epsi=0$, our result is a `standard' super-adiabatic theorem and establishes the existence of super-adiabatic states $\rho_0^\eta(t)$   on $\mathcal{A}$ close to $\rho_0(t)$, i.e.\ 
\[
|\rho_0^\eta(t)(A) - \rho_0(t)(A) | = \Or (\eta)\,,
\]
such that the adiabatic evolution $\mathfrak{U}^{ \eta}_{t,t_0}$ intertwines the super-adiabatic states to all orders in the adiabatic parameter $\eta$, i.e.\
\begin{equation}\label{standardadi}
|\rho_0^\eta(t_0)(\mathfrak{U}_{t,t_0}^{ \eta} \Ab{A}) - \rho_0^\eta (t)(A)|=\Or(\eta^\infty)\,,
\end{equation}
for all $A$ in a dense subspace $\mathcal{D}\subset \mathcal{A}$.
Here and in the following, for better readability, we write the arguments of (densely defined) linear operators on $\mathcal{A}$ inside the brackets~$\Ab{\cdot}$.

For $\epsi\not=0$ the perturbation $\epsi V^\Lambda(t)$ might close the spectral gap and turn the ground state $\rho_0^\Lambda(t)$ of $H^\Lambda_0(t)$ into a `resonance state' $\Pi^{\Lambda,\epsi}(t)$ for $H^\Lambda(t)$ with life-time of order $\Or(\epsi^{-\infty})$,   which we call non-equilibrium almost-stationary state (NEASS). Think, for example, of a linear potential across the sample $\Lambda$ with slope $\epsi\ll 1$ that closes the spectral gap of $H^\Lambda(t)$ once the linear dimension $L$ of $\Lambda$ is sufficiently large. But on each  local term  in $H^\Lambda_0(t)$ such a slowly varying perturbation  acts approximately like a constant shift in energy and thus the NEASS behaves almost like a gapped ground state.
 In particular, as is known for certain types of resonances \cite{AF,EH}, a generalised adiabatic theorem still holds for NEASS. It states that there are super-adiabatic NEASSs $\Pi^{ \epsi,\eta}(t)$ on $\mathcal{A}$ close to $\Pi^{ \epsi }(t)$ such that 
 the adiabatic evolution $\mathfrak{U}^{\epsi, \eta}_{t,t_0}$   approximately intertwines the super-adiabatic NEASSs in the following sense: for any $n>d$ and  for all $A\in\mathcal{D} $
 \begin{equation}\label{IntroThm}
\left|\Pi^{ \epsi,\eta}(t_0)(\mathfrak{U}_{t,t_0}^{\epsi, \eta}\Ab{A}) - \Pi^{ \epsi,\eta} (t)(A)\right|=\Or\left(\eta^{n-d}+\frac{\epsi^n}{\eta^d}   \right)
\end{equation}
uniformly for $t$ in compact sets.
While for $\epsi=0$ this statement reduces to the standard adiabatic theorem \eqref{standardadi},  for $0<\epsi \ll1$ the right hand side of \eqref{IntroThm} is small if only if also $\eta$ is small, but not too small compared to $\epsi$, i.e.\ $  \epsi^{n/d}\ll \eta\ll1$ for some $n\in\N$.
 Physically, this  means that the adiabatic approximation breaks down when the adiabatic switching occurs at times that exceed the lifetime of the NEASS, an effect that has already been observed in adiabatic theory for resonances before, see, e.g., ~\cite{AF,EH}. 
 In addition, we prove a similar result for any initial state $\tilde \rho_0(t_0)$ that is the weak$^*$ limit of a sequence of rank one projections with range contained in the range of $P_*^{\Lambda}(t)$. In the case that 
 there exists different such limits $\tilde \rho_0(t_0)$, an infinite volume version of Kato's parallel transport (including higher order corrections) is required for the construction of super-adiabatic states. 
  
The main motivation for proving such a type of adiabatic theorem stems from linear response theory  and we refer to \cite{HT,MPT} for a   discussion of the problem of rigorously justifying linear response theory for gapped extended systems and the role of our version of the adiabatic theorem in it. In \cite{HT}  we also announced the results of the present paper and put them into context. 

Here we concentrate instead on the mathematical aspects.
Recently, adiabatic theorems for many-body lattice systems have been obtained with error estimates that are uniform in the size of the system, see \cite{BDF,MT,Teu17}. While this uniformity clearly indicates that the corresponding adiabatic theorems should also apply to the infinite system dynamics, i.e.\  after a thermodynamic limit has been taken, on a technical level this is not a simple corollary. 
In short, our main new contribution is to show that in the expansion established in \cite{Teu17} (which generalises \cite{BDF}  from ground states to NEASSs) all relevant objects have a thermodynamic limit. 
 For this purpose we adapt recent results   from \cite{NSY} to our setting.
Moreover, as a second innovation compared to \cite{BDF,MT,Teu17} we   construct super-adiabatic approximations to all orders in~$\epsi$ and $\eta$ by performing suitable resummations of the objects constructed in the finite volume results.

Several of the technical tools developed in the present paper will also be instrumental  for the proof of  an adiabatic theorem for systems  with a gap only in the bulk, which we present in the sequel   \cite{HT2} to this paper. 
  More precisely, in  \cite{HT2}  we first prove  an adiabatic theorem for the evolution of the infinite system  $\mathfrak{U}^{\epsi, \eta}_{t,t_0}$     that holds   under the  condition that the GNS Hamiltonian of  the infinite volume derivation  $\mathcal{L}_{H_0(t)}$  has a spectral gap,  i.e.\ without assuming that a uniformly gapped  sequence of finite volume Hamiltonians $H_0^\Lambda(t)$ with $\lim_{\Lambda\nearrow\Z^d} \mathcal{L}_{H_0^\Lambda(t)}=\mathcal{L}_{H_0(t)}$ exists.
  Then we show that under additional assumptions on the convergence of ground states  an adiabatic theorem holds also for finite systems with a gap in the bulk.
 This result then covers, for example, also the situation of quantum Hall systems with edges, where edge states close the spectral gap but a gap in the bulk remains.  
 
We conclude this introduction with a brief plan of the paper.
In Section~\ref{setting} we describe the mathematical framework and introduce the relevant objects and spaces. In Section~\ref{spacetimeadiabatic} we first formulate the main results as announced in \cite{HT} and briefly discuss them. Then we state the most general adiabatic theorem of this work, which    is proved in Section~\ref{proofs}.  
In order to keep the proofs manageable and transparent, we moved many technical arguments and some preliminaries to several appendices: in Appendix~\ref{finitevolumeresults} we state the main result of \cite{Teu17} in the form we use it in our proof in Section~\ref{proofs}; in Appendix~~\ref{infinitevolumedynamics} we adapt results of \cite{NSY} on the existence of the infinite volume dynamics to our needs in order to define spaces of operators that `have a   thermodynamic limit'; Appendix~\ref{conditionalexpectation} briefly recalls a few facts about conditional expectations for fermionic operators; in Appendix~\ref{technicalApp} we prove the main new technical tools needed for our result, namely that the operations used in the construction of the super-adiabatic NEASS leave spaces of operators that have a  thermodynamic limit invariant;
in Appendix~\ref{resummation} we prove the technical results needed for implementing the resummation procedure; and finally in Appendix~\ref{localtoflocalizedappendix} we prove some facts about certain subspaces of the quasi-local algebra of the infinite system that allow us to extend the main results to  observables that are not strictly local.

\section{Mathematical framework} \label{setting}
\subsection{The lattice and    algebras of local observables}
 We consider fermions with $r$ spin or other internal degrees of freedom on the lattice $\mathbb{Z}^d$. Let $\mathcal{P}_0(\Gamma) := \set{X \subset \Gamma : \vert X \vert < \infty}$ denote the set of finite subsets of $\Gamma$, where $\vert X \vert $ is the number of elements in $X$, and let 
 \begin{equation*}
\Lambda_k := \set{-k, ..., +k}^d\,.
 \end{equation*}
be the centred box of size $2k$. With each $\Lambdak$ we associate a metric $d^{\Lambdak}(\cdot,\cdot)$   which may differ from the standard $\ell^1$-distance $d(\cdot, \cdot )$ on $\Gamma$ restricted to $\Lambdak$, e.g.\ if one considers discrete tube or torus geometries, but satisfies the bulk-compatibility condition
\begin{equation*}  
\forall k \in \mathbb{N} \ \forall x,y \in \Lambdak: d^\Lambdak(x,y) \le d(x,y) \ \  \text{and} \ \   d^{\Lambdak}(x,y) = d(x,y) \ \text{whenever} \ d(x,y) \le k . 
\end{equation*}  
E.g., for a torus geometry, opposite points on the boundary of $\Lambdak$ are considered neighbours and their distance is set to one, while for a cube geometry their distance is $2k$. 

Now, for each $X \in \mathcal{P}_0(\Gamma)$, the corresponding one-particle Hilbert space is $\mathfrak{h}_{X} := \ell^2(X,\mathbb{C}^r)$, the $N$-particle Hilbert space is its $N$-fold anti-symmetric tensor product $\mathfrak{H}_{X,N} := \bigwedge_{j=1}^N \mathfrak{h}_{X}$ and the fermionic Fock space is $\mathfrak{F}_X := \bigoplus_{N=0}^{{r} \vert X\vert} \mathfrak{H}_{X,N}$, where $\mathfrak{H}_{X,0} := \mathbb{C}$. All these Hilbert spaces are finite dimensional and thus all linear operators on them are bounded. The local $C^*$-algebras $\mathcal{A}_{X} := \mathcal{L}(\mathfrak{F}_X)$ are generated by the identity element $\mathbf{1}_{\mathcal{A}_X}$ and the creation and annihilation operators $a_{x,i}^*, a_{x,i}$ for $x \in X$ and $1 \le i \le r$, which satisfy the canonical anti-commutation relations (CAR), i.e.
\begin{equation*}
\{a_{x,i},a_{y,j}\} = \{a_{x,i}^*,a_{y,j}^*\}=0, \;\{a_{x,i},a_{y,j}^*\} = \delta_{x,y}\delta_{i,j}\mathbf{1}_{\mathcal{A}_X}, \;   \forall x,y \in X, \ 1\le i,j\le   r. 
\end{equation*}
Here, $\{A,B\} := AB+BA$ denotes the anti-commutator of $A$ and $B$. If we have $X\subset X'$, $\mathcal{A}_{X}$ is naturally embedded as a subalgebra of $\mathcal{A}_{X'}$. For any $X \in \mathcal{P}_0(\Gamma)$, one defines the parity automorphism
\begin{equation*}
\sigma_X(A) := (-1)^{N_X} A (-1)^{N_X}, \quad A \in \mathcal{A}_X,
\end{equation*}
where 
\begin{equation*}  
N_X := \sum_{x \in X} a^*_x \hspace{-0.5mm}\cdot \hspace{-0.5mm} a_x := \sum_{x \in X} \sum_{i=1}^{r}a^*_{x,i} a_{x,i}. 
\end{equation*}
is the number operator, which has integer spectrum.

For the infinite system, the algebra of local observables is defined as the inductive limit 
\begin{equation*}
\mathcal{A}_{\mathrm{loc}}  := \bigcup_{X \in \mathcal{P}_0(\Gamma)} 	\mathcal{A}_{X}\,,
\qquad\mbox{and its closure}\qquad
\mathcal{A} := 	\overline{\mathcal{A}_{\mathrm{loc}}}^{\Vert \cdot \Vert}
\end{equation*}
with respect to the induced norm $\Vert \cdot \Vert$ is a $C^*$-algebra, called the quasi-local algebra. Due to this structure, there exists a unique automorphism $\sigma: \mathcal{A} \to \mathcal{A}$, for which $\sigma \vert_{\mathcal{A}_X} = \sigma_X$ for any $X \in \mathcal{P}_0(\Gamma)$ and $\sigma^2 = \mathbf{1}_{\mathcal{A}}$. Now, each $A \in \mathcal{A}$ can be uniquely decomposed with respect to the automorphism by
\[
A = A^+ + A^- , \qquad A^{\pm} = \frac{A \pm \sigma(A)}{2}\,.
\]
It follows that $\sigma(A^{\pm })= \pm A^{\pm}$, the even elements $\mathcal{A}^+ \subset \mathcal{A}$ form a $C^*$-subalgebra, but the odd elements $\mathcal{A}^- \subset \mathcal{A}$ do not form a $C^*$-subalgebra. This decomposition is analogously defined for all $X\in \mathcal{P}_0(\Gamma)$. Furthermore, we have $\sigma(\mathcal{A}_{X}) = \mathcal{A}_{X}$ and $[\mathcal{A}_{X}^+,\mathcal{A}_{X'}^+] = \{0\}, [\mathcal{A}_{X}^+,\mathcal{A}_{X'}^-]= \{0\}, \{\mathcal{A}_{X}^-,\mathcal{A}_{X'}^-\} = \{0\}$ whenever $X \cap X' = \emptyset$. The distinction between even and odd elements of the   algebra is in particular relevant for the notion of conditional expectations (see Appendix~\ref{conditionalexpectation}). Also, note that for any $X \in \mathcal{P}_0(\Gamma)$ the set of elements $\mathcal{A}_{X}^N$ commuting with the number operator $N_{X}$, forms a subalgebra of the even subalgebra, i.e. $\mathcal{A}_X^N \subset \mathcal{A}_X^+ \subset \mathcal{A}_X$. 

Consider the following subset of $\mathcal{A}^+$ (cf.\ \cite{MO} for an analogous definition for quantum spin systems), which will be important in the latter. Let $\mathbb{E}_{\Lambdak}: \mathcal{A}^+\to \mathcal{A}_{\Lambdak}^+$ be the conditional expectation on even observables (see Appendix \ref{conditionalexpectation}) and $f: [0,\infty) \to (0, \infty)$ be a bounded, non-increasing function with $\lim\limits_{r \to \infty}f(r) =0 $. For each $A \in \mathcal{A}^+$, let
\begin{equation*}
\Vert A \Vert_f := \Vert A \Vert + \sup_{k \in \mathbb{N}} \left(\frac{\Vert A - \mathbb{E}_{\Lambdak}(A)\Vert }{f(k)}\right). 
\end{equation*}
We denote by $\mathcal{D}_f^+$ the set of all $A \in \mathcal{A}^+$ such that $\Vert A \Vert_f < \infty$.
  As shown in Lemma~B.1 of \cite{MO}, $\mathcal{D}_f$ is a $*$-algebra and    $(\mathcal{D}_f,\|\cdot\|_f)$  a Banach space.
Although  $(\mathcal{D}_f,\|\cdot\|_f)$ is not a normed algebra, it follows from the proof of Lemma~B.1 in \cite{MO} that multiplication $\mathcal{D}_f\times \mathcal{D}_f\to \mathcal{D}_f$, $(A,B)\mapsto AB$, is continuous. 
 
As only  even observables will be relevant to our considerations, we will drop the superscript `$^+$' from now on and redefine $\mathcal{A} := \mathcal{A}^+$.

\subsection{Interactions and associated operator families}

An {\em interaction on a domain $\Lambda_k$} is a map 
\[
\Phi^{\Lambdak} : \left\{X \subset \Lambdak\right\} \to \mathcal{A}_{\Lambdak}^N\,,\quad  X \mapsto \Phi^{\Lambdak}(X)\in \mathcal{A}_{X}^N\,,
\]
with values in the self-adjoint operators. 
It defines an associated self-adjoint operator   $A^{\Lambdak}\in\mathcal{A}_\Lambdak$ by 
\[
A^{\Lambdak} := A^{\Lambdak}(\Phi) := \sum_{X \subset \Lambdak} \Phi^{\Lambdak} (X) \in \mathcal{A}_{\Lambdak}^N\,.
\]
While any self-adjoint operator $A^{\Lambdak} \in \mathcal{A}_\Lambdak$  is trivially given by an interaction, e.g.\ 
by $ \Phi_A^{\Lambdak}(\Lambdak) = A^\Lambdak$ and $\Phi_A^{\Lambdak}(X) = 0$ for $X\not= \Lambdak$, 
the idea is that $\Phi_A^{\Lambdak}(X)$ should be negligible, in a sense to be made precise below, for sets $X$ with large diameter.
 
In order to describe fermionic system on the lattice $\Gamma$ in the thermodynamic limit, we consider  sequences 
$\Phi = \left(\Phi^{\Lambdak}\right)_{k \in \mathbb{N}}$ of interactions on domains $\Lambdak$ 
and call  the whole sequence an {\em  interaction}.  
Finally, by 
an  {\em infinite volume interaction} we mean a map
\[
 \Psi  : \mathcal{P}_0(\Gamma)  \to \mathcal{A}_\mathrm{loc}^N\,,\quad X \mapsto \Psi (X)\in \mathcal{A}_{X}^N \,,
\]
again with values in the self-adjoint operators. Such an interaction for the infinite system defines  an interaction on each domain $\Lambdak$ by restriction, i.e.\ $\Psi^\Lambdak := \Psi|_{\mathcal{P}_0(\Lambdak) }$, and thus also an interaction $\Psi = \left(\Psi^{\Lambdak}\right)_{k \in \mathbb{N}}$ in the general sense.
While for many questions it suffices to consider exclusively infinite volume interactions, we need to work with the more general concept of interactions for two reasons. First, they  allow us to implement non-trivial boundary conditions. E.g., for periodic boundary conditions a hopping term $a^*_{  (k , x_2,\ldots,x_d)  } \hspace{-0.5mm} \cdot \hspace{-0.5mm}a_{ (-k , x_2,\ldots,x_d)  }$   might only appear in the interaction for the Hamiltonian for that specific value of $k$ in order to connect opposite points on the boundary of $\Lambdak$.    Second, even if we would assume that the physical Hamiltonian is defined in terms of  an infinite volume interaction, the algebraic operations involved in the adiabatic perturbation theory lead to other operators that are no longer defined in terms of infinite volume interactions but in terms of general interactions. 
 
Note that  the maps $\Phi^{\Lambdak} : \left\{X \subset \Lambdak\right\} \to \mathcal{A}_{\Lambdak}^N$ can  be extended to maps on all of  $\mathcal{P}_0(\Gamma)$ by declaring $\Phi^{\Lambdak}(Z) := 0$, whenever $Z \cap (\Gamma \setminus \Lambdak) \neq \emptyset$. This new mapping is called the extension of $\Phi^{\Lambdak}$ and is denoted by the same symbol. Similarly, given $\Phi^{\Lambdak}$ and $\Lambdal \subset \Lambdak$, we define  the restriction of   $\Phi^{\Lambdak}$  to $\Lambdal$ by 
\begin{equation*}
\Phi^{\Lambdak} \vert_{\Lambdal} : \{X \subset \Lambdal\} \to  \mathcal{A}_{\Lambdal}^N\,,\quad 	\Phi^{\Lambdak} \vert_{\Lambdal}(X)  := \Phi^{\Lambdak}(X) \,.
\end{equation*}

In order to   control the range of an interaction, i.e.\ to make precise the  additional locality properties of interactions alluded to before, 
 one introduces the following functions. Let 
\begin{equation*}
 F_{\zeta}(r) := \frac{\zeta(r)}{(1+r)^{d+1}}
\end{equation*}
where $\zeta:[0,\infty) \to (0,\infty)$ is a bounded, non-increasing function, which is logarithmically superadditive, i.e.
\begin{equation*}
\zeta(r+s) \ge \zeta(r) \zeta(s), \quad \forall r,s \in [0,\infty)\,. 
\end{equation*}
In particular, the constant function mapping to $1$ satisfies these properties.
 The functions $F_{\zeta}$ are summable and one defines
\begin{equation*}
\Vert F_{\zeta} \Vert_{\Gamma} := \sup_{k \in \mathbb{N}} \sup_{y\in \Lambdak} \sum_{x \in \Lambdak} F_{\zeta}(d^{\Lambdak}(x,y)) < \infty \,.
\end{equation*}
Moreover, they have a finite convolution constant defined by
\begin{equation*}
C_{\zeta} := \sup_{k \in \mathbb{N}} \sup_{x,y\in \Lambdak} \sum_{z \in \Lambdak} \frac{F_{\zeta}(d^{\Lambdak}(x,z)) \ F_{\zeta}(d^{\Lambdak}(z,y))}{F_{\zeta}(d^{\Lambdak}(x,y))} < \infty\,. 
\end{equation*}
 For any such $\zeta$ and $n\in \mathbb{N}_0$, a norm on the vector space of interactions is  \begin{equation}\label{normdefinition}
	\Vert \Phi \Vert_{\zeta,n} :=  \sup_{k \in \mathbb{N}}\sup_{x,y \in \Gamma} \sum_{\substack{ X \in \mathcal{P}_0(\Gamma): \\x,y\in X}} d^\Lambdak\mbox{-diam}(X)^n \frac{\Vert \Phi^{\Lambdak}(X)\Vert}{F_{\zeta}(d^{\Lambdak}(x,y))}\,.
\end{equation}
Note that these norms depend on the sequence of metrics $d^\Lambdak$ on the cubes $\Lambdak$, i.e.\ on the boundary conditions. While this will in general not be made explicit in the notation, we add a superscript $^\circ$  to the norm and to the normed spaces defined below, if we want to emphasise the use of open boundary conditions, i.e.\ 
$d^\Lambdak \equiv d$. The compatibility condition for the metrics $d^{\Lambdak}$ implies that $\Vert \Psi \Vert_{\zeta, n } \le \Vert \Psi \Vert_{\zeta, n }^\circ $. 

A sequence $A =  (A^\Lambdak)_{k\in\N}$ of operators $A^\Lambdak\in\mathcal{A}_\Lambdak$ 
is now called an \SLT~operator family\footnote{The abbreviation stands for `sum of local terms'. Operators of this kind are often called `local Hamiltonians',   but this could lead to confusion in our considerations, because not all appearing operators with this property play the role of a physical Hamiltonian.}, 
if there exists a corresponding
  interaction $\Phi_A$ with  $\Vert \Phi_A\Vert_{1, 0 } < \infty$.

In order to quantify the difference of interactions ``in the bulk'' we also introduce for any 
interaction $\Phi^\Lambdal$ on the domain $\Lambdal$ and any $\Lambda_M\subset \Lambdal$ the quantity 
  \begin{equation*} 
 \Vert \Phi^\Lambdal \Vert_{\zeta, n,\Lambda_M} := \sup_{x,y \in \Lambda_M} \sum_{\substack{ X \subset \Lambda_M: \\x,y\in X}} \mbox{diam}(X)^n \frac{\Vert \Phi^{\Lambdal}(X)\Vert}{F_{\zeta}(d (x,y))}\,,
\end{equation*}
where $d$ and diam now refer to the   $\ell^1$-distance on $\Gamma$.    

 Let $\mathcal{B}_{\zeta, n }$ be the Banach space of interactions with finite $\Vert \cdot \Vert_{\zeta, n }$-norm. Note that $\mathcal{B}_{\zeta, n } \subset \mathcal{B}_{\zeta, m }$ whenever $n \ge m$ and $\mathcal{B}_{\zeta, 0 } \subset \mathcal{B}_{1,0 }$ for any $\zeta$. 
Now, we consider the class $\mathcal{S}$ of (almost) exponentially fast decaying functions in the sequel: A function $\zeta$ belongs to $\mathcal{S}$ if and only if $\zeta:[0,\infty) \to (0,\infty)$ is bounded, non-increasing, logarithmically superadditive, and decays faster than any inverse polynomial, i.e.
\begin{equation*}
\sup_{r \ge 0} r^n \zeta(r) < \infty, \quad \forall n \in \mathbb{N}_0\,.
\end{equation*}
The prime example for a function $\zeta \in \mathcal{S}$ is the exponential $\zeta(r) = \mathrm{e}^{-ar}$ for some $a >0$. 
Moreover, define
\begin{align*}
\mathcal{B}_{\zeta, \infty } := \bigcap_{n \in \mathbb{N}_0} \mathcal{B}_{\zeta, n }, \quad  
\mathcal{B}_{\mathcal{S},n } := \bigcup_{\zeta \in \mathcal{S}} \mathcal{B}_{\zeta, n }, \quad \mathcal{B}_{\mathcal{S},\infty } := \bigcap_{n \in \mathbb{N}_0} \mathcal{B}_{\mathcal{S}, n }
\end{align*}
as spaces of interactions used in the sequel. The corresponding spaces of operator-families are denoted as $\mathcal{L}_{\zeta, n }$, $\mathcal{L}_{\zeta, \infty }$, $\mathcal{L}_{\mathcal{S},n }$, and $\mathcal{L}_{\mathcal{S},\infty }$. Lemma A.1 in \cite{MT} shows that the spaces $\mathcal{B}_{\mathcal{S},n }$ and therefore also $\mathcal{B}_{\mathcal{S},\infty }$ are indeed vector spaces.

We now consider explicit time-dependence of interactions. Let $I \subset \mathbb{R}$ be an interval. We say that a map $A: I \to \mathcal{L}_{\zeta, n }$ is smooth and bounded whenever it is given by interactions $\Phi_A(t)$ such that the maps 
\begin{equation*}
I \to \mathcal{A}_{X}^N, \quad t \mapsto \Phi_A^{\Lambdak}(t,X)
\end{equation*}
are infinitely differentiable for all $\Lambdak$ and $X \subset \Lambdak$ and for all $i\in\N_0$ there exists $\zeta^{(i)}\in \mathcal{S}$ such that $\zeta^{(0)}=\zeta$ and
\begin{equation*}
\sup_{t\in I} \Vert \Phi_A^{(i)}(t)\Vert_{\zeta^{(i)}, n } < \infty\,.
\end{equation*}
Here $\Phi_A^{(i)}(t) = \left(\frac{\mathrm{d}^i}{\mathrm{d}t^i}\Phi^{\Lambdak}_A(t)\right)_{k \in \mathbb{N}}$ denotes the interaction defined by the term-wise derivatives. The corresponding spaces of smooth and bounded time-depen\-dent interactions and operator families are denoted by $\mathcal{B}_{I, \zeta, n }$ and $\mathcal{L}_{I, \zeta, n }$ and are equipped with the norm $\Vert\Phi \Vert_{I, \zeta, n } := \sup_{t \in I} \Vert \Phi (t)\Vert_{\zeta, n }$. 

We say that $A: I \to \mathcal{L}_{\mathcal{S}, \infty }$ is smooth and bounded if for any $n \in \mathbb{N}_0$ there is a $\zeta_n \in \mathcal{S}$ such that $A: I \to \mathcal{L}_{\zeta_n,n }$ is smooth and bounded and we write $\mathcal{L}_{I, \mathcal{S}, \infty }$ for the corresponding space. 
Note, that an interaction $\Phi \in \mathcal{B}_{\zeta, n }$ can be trivially viewed as an element of $\mathcal{B}_{I,\zeta, n }$ by setting $\Phi(t) = \Phi$ for all $t \in I$. 
\subsection{Lipschitz potentials}
As perturbations we will consider   sequences   of external potentials
$
v = \left( v^{\Lambdak}: \Lambdak \to \mathbb{R}\right)_{ k \in \mathbb{N}} $
that satisfy a Lipschitz condition of the following type, 
	\begin{equation*}
	C_v :=  \sup_{k \in \mathbb{N}} \sup_{\substack{x,y \in \Lambdak \\ x \neq y}} \frac{\vert v^{\Lambdak}(x)- v^{\Lambdak}(y)\vert}{ d^{\Lambdak}(x,y)} < \infty,
	\end{equation*}
and call them for short  {\em Lipschitz potentials}. With a Lipschitz potential $v$ we associate the corresponding operator-sequence $V_v = \left(V_v^{ \Lambdak}\right)_{ k \in \mathbb{N}}$ defined by 
\begin{equation*} 
	V_v^{\Lambdak} := \sum_{x \in \Lambdak} v^{ \Lambdak}(x) \hspace{0.5mm} a^*_x \hspace{-0.5mm}\cdot \hspace{-0.5mm} a_x\,. 
\end{equation*}
The space of Lipschitz potentials is denoted by $\mathcal{V}$.  As above, we add a superscript $^\circ$  to the constant and to the space of Lipschitz potentials, if we want to emphasise the use of open boundary conditions, i.e.\ 
	$d^\Lambda_k \equiv d$. The compatibility condition for the metrics $d^{\Lambda_k}$ implies that $C_v \ge C_v^\circ $. Note that a Lipschitz function $v_\infty:\Gamma\to \R$ defines, again by restriction, a Lipschitz potential in $\mathcal{V}^\circ$ and we write $v_\infty \in \mathcal{V}^\circ$ with a slight abuse of notation. We call $v_\infty$ an \textit{infinite volume Lipschitz potential}. Note that a Lipschitz potential is in general not an \SLT~operator since $\sup_k\sup_x |v^{ \Lambdak}(x)|$ need not be finite. 

We say that a map $v: I \to \mathcal{V}$ is smooth and bounded whenever the maps
\begin{equation*}
I \to \mathbb{R}, \quad t \mapsto v^{\Lambdak} (t,x)
\end{equation*}
are infinitely differentiable for all $k \in \mathbb{N}$ and $x \in \Lambdak$, and we have for all $i \in \mathbb{N}_0$ that 
\begin{equation*}
\sup_{t \in I} C_{v^{(i)}(t)} < \infty\,, 
\end{equation*}
where $v^{(i)}(t) = \left(\frac{\mathrm{d}^i}{\mathrm{d}t^i}v^{\Lambdak}(t)\right)_{k \in \mathbb{N}}$ denotes the Lipschitz potential defined by the term-wise derivatives. 
 The space of  Lipschitz potentials on the interval $I$ such that the map $v: I \to \mathcal{V}$ is  smooth  and bounded in the above sense  is denoted by $\mathcal{V}_I$.   
Note, that a Lipschitz potential $v \in \mathcal{V}$ can be trivially viewed as an element of $\mathcal{V}_I$ by setting $v(t) = v$ for all $t \in I$. 

\subsection{Interactions and potentials with a thermodynamic limit}
As a last ingredient we need the notion of \textit{interactions having a thermodynamic limit} and \textit{Lipschitz potentials having a thermodynamic limit}, which will ensure the convergence of the finite volume dynamics, {see Definition 3.7 in \cite{NSY}. }
\begin{defi}  {\rm (Thermodynamic limit of interactions and potentials)}  \label{cauchydefinition}
	~ 
	\begin{itemize} 
		\item[(a)] A time-dependent interaction $\Phi  \in \mathcal{B}_{I,\zeta, n}$ is said to \textit{have a thermodynamic limit} if there exists an infinite volume interaction $ \Psi \in \mathcal{B}_{I,\zeta, n}^\circ$
		such that 
				\begin{align*} \hspace{-.5cm}
		\forall i \in \mathbb{N}_0, M \in \mathbb{N} : \; \lim_{k\to\infty}\sup_{t \in I}\left\Vert \frac{\D^i}{\D t^i}\left( { \Psi}- \Phi^{ \Lambdak}\right) (t)  \right\Vert_{\zeta^{(i)}, n,\LambdaM} \hspace{-6pt}=0. 
		\end{align*}
	We write $\Phi\stackrel{\rm t.d.}{\rightarrow} \Psi$ in this case.
		A time-dependent interaction $\Phi  \in \mathcal{B}_{I,\mathcal{S}, \infty }$ is said to \textit{have a thermodynamic} limit if for all $n \in \mathbb{N}_0$ exists a $\zeta_n \in \mathcal{S}$ such that $\Phi  \in \mathcal{B}_{I,\zeta_n, n }$ has a thermodynamic limit (similarly for $\mathcal{B}_{I, \zeta, \infty }$ and $\mathcal{B}_{I, \mathcal{S}, n }$). 
		
		An SLT operator family is said to \textit{have a thermodynamic limit} if and only if the corresponding interaction does. 
		\item[(b)] 	A Lipschitz potential 
		$v \in \mathcal{V}_I$
		is said to \textit{have a thermodynamic limit} if there exists an infinite volume Lipschitz potential $v_\infty \in \mathcal{V}_I^\circ$   such that 
		\begin{align*}
		\forall M \in \mathbb{N} \quad \exists K\ge M  \quad \forall k  \ge K,  t\in I: v^{ \Lambdak}(t, \cdot) |_{\LambdaM} = v_{\infty}(t, \cdot) |_{\LambdaM}\,.
		\end{align*}
		Again we write $v\stackrel{\rm t.d.}{\rightarrow} v_\infty$. 
	\end{itemize}
	
\end{defi}

Note that  the interactions in $\mathcal{B}_{I,\zeta, n }$ and the Lipschitz potentials in $\mathcal{V}_I$ that have thermodynamic limit form sub-vectorspaces and that any interaction $\Psi=(\Psi^\Lambdak)$ which arises from restricting an infinite volume interaction has this infinite volume interaction as its thermodynamic limit. We provide an equivalent characterization of interactions having a thermodynamic limit in terms of a Cauchy condition in Lemma \ref{lem:equivalentchar}.

 The following proposition shows that the property of having a thermodynamic limit for the interaction resp.\ potential guarantees also the existence of a thermodynamic limit for the associated evolution operators. Its proof is a simple adjustment of a known argument based on Lieb-Robinson bounds and is given in Appendix \ref{infinitevolumedynamics} (cf.~\cite{BR, BD, NSY}).
\begin{prop}  {\rm (Thermodynamic limit of evolution operators)} \label{tdlofcauchyinteractions} ~\\ 
	Let $H_0 \in \mathcal{L}_{I, \zeta, 0 }$ and $v \in \mathcal{V}_I$ both have a thermodynamic limit, i.e.\ $\Phi_{H_0}\stackrel{\rm t.d.}{\rightarrow} \Psi_{H_0} $ and $v\stackrel{\rm t.d.}{\rightarrow} v_\infty$.
	 Let $H = H_0 + V_v$ 
		and  let $U^{\eta,\Lambdak}(t,t_0)$ denote the evolution family generated by $H^{\Lambdak} (t)$ in scaled time with $\eta >0$, i.e.\ the solution to the Schrödinger equation 
		\begin{equation*} 
		\mathrm{i }\eta\frac{\mathrm{d}}{\mathrm{d}t} U^{\eta,\Lambdak}(t,t_0) = H^{\Lambdak} (t) U^{  \eta,\Lambdak}(t,t_0) 
		\end{equation*}
		with $U^{\eta,\Lambdak}(t_0,t_0) = \mathrm{id}$. Then there exists a  co-cycle of automorphisms $\mathfrak{U}_{t,t_0}^{\eta}:\mathcal{A} \to \mathcal{A}$ such that for all $A \in \Aloc $
		\begin{equation*}
		\mathfrak{U}^{\eta}_{t,t_0}\Ab{A} = \lim\limits_{k \to \infty}  \mathfrak{U}^{\eta,\Lambdak}_{t,t_0}\Ab{A}  := \lim\limits_{k \to \infty} U^{\eta,\Lambdak}(t,t_0)^* A U^{\eta,\Lambdak}(t,t_0)\,.
		\end{equation*}
The co-cycle  $\mathfrak{U}^{\eta}_{t,t_0}$ depends only on $\Psi_{H_0}$ and $v_\infty$
		and  is generated by the time-dependent (closed) derivation  $ (\mathcal{L}_{H(t)}, D(\mathcal{L}_{H(t)}))$ associated with $ H(t)$.
\end{prop}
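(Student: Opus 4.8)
The plan is to run the by-now-standard construction of the infinite-volume Heisenberg dynamics from Lieb--Robinson bounds (as in \cite{BR,BD,NSY}), while keeping track of the boundary-condition metrics $d^\Lambdak$ and of the possibly unbounded on-site perturbation $V_v$. Write $\mathfrak{U}^{\eta,\Lambdak}_{t,t_0}\Ab{A} := U^{\eta,\Lambdak}(t,t_0)^* A\, U^{\eta,\Lambdak}(t,t_0)$; each $\mathfrak{U}^{\eta,\Lambdak}_{t,t_0}$ is a $*$-automorphism of $\mathcal{A}_\Lambdak$, hence a $\|\cdot\|$-isometry. The first move is to dispose of the potential: since all the $N_x$ commute, the evolution generated by $V_v^\Lambdak$ alone is the explicit gauge cocycle $W^{\eta,\Lambdak}(t,t_0)=\exp\!\big(-\tfrac{\I}{\eta}\int_{t_0}^t V_v^\Lambdak(s)\,\D s\big)$, whose adjoint action maps every $\mathcal{A}_X$ onto itself and preserves operator norms. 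Passing to the interaction picture with respect to $W^{\eta,\Lambdak}$ removes $V_v$ and replaces $H_0^\Lambdak(t)$ by a gauge-conjugated SLT operator family $\widetilde H_0^\Lambdak(t)$ with $\sup_t\|\widetilde\Phi_{H_0}^\Lambdak(t)\|_{\zeta,0}=\sup_t\|\Phi_{H_0}^\Lambdak(t)\|_{\zeta,0}$; because $v\stackrel{\rm t.d.}{\rightarrow}v_\infty$, on any fixed box the gauge phases are eventually independent of $k$ (equal to those of the $v_\infty$-gauge), so $\widetilde H_0^\Lambdak$ again has a thermodynamic limit — the $v_\infty$-gauge transform $\widetilde\Psi_{H_0}$ of $\Psi_{H_0}$ — and for large $k$ one has $\mathfrak{U}^{\eta,\Lambdak}_{t,t_0}\Ab{A}=\widetilde{\mathfrak{U}}^{\eta,\Lambdak}_{t,t_0}\big[\widehat A\big]$, where $\widehat A$ is the ($k$-independent) $v_\infty$-gauge transform of $A\in\mathcal{A}_\LambdaM$. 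It therefore suffices to prove convergence and the stated properties for $\widetilde H_0^\Lambdak$, i.e.\ with no unbounded term present.

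So assume $V_v=0$. Step 1 is a Lieb--Robinson bound \emph{uniform in $k$}: for $A\in\mathcal{A}_\LambdaM$ and $B$ supported on a set $X$ disjoint from $\LambdaM$,
\[
\big\|\big[\mathfrak{U}^{\eta,\Lambdak}_{t,t_0}\Ab{A},B\big]\big\| \;\le\; C\,\|A\|\,\|B\|\,|\LambdaM|\; G\!\left(d^\Lambdak(X,\LambdaM)-\tfrac{\tilde v}{\eta}\,|t-t_0|\right),
\]
with a velocity $\tilde v$, a decay profile $G$ built from $F_\zeta$, and a constant $C$ depending only on $\|\Phi_{H_0}\|_{I,\zeta,0}$, $\|F_\zeta\|_\Gamma$ and $C_\zeta$ — the classical estimate, the point being that all constants are manifestly uniform in the box size. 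Bulk-compatibility of the metrics then lets one replace $d^\Lambdak$ by the genuine $\ell^1$-distance $d$ once $X$ and $\LambdaM$ are at $d$-distance at most $k$, so that this becomes honest spatial decay around a fixed box; in particular $\mathfrak{U}^{\eta,\Lambdak}_{t,s}\Ab{A}$ is, up to an error controlled by $G$, supported in the $d$-ball $\Lambda_{R_{t,s}}$ of radius $R_{t,s}\sim M+\tilde v|t-s|/\eta$.

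Step 2 is to show that $(\mathfrak{U}^{\eta,\Lambdak}_{t,t_0}\Ab{A})_{k\in\N}$ is $\|\cdot\|$-Cauchy, uniformly for $t$ in compact subsets of $I$ and for $A\in\mathcal{A}_\LambdaM$. For $l>k$ this rests on the Duhamel identity
\[
\mathfrak{U}^{\eta,\Lambdal}_{t,t_0}\Ab{A}-\mathfrak{U}^{\eta,\Lambdak}_{t,t_0}\Ab{A} \;=\; \tfrac{\I}{\eta}\int_{t_0}^{t}\mathfrak{U}^{\eta,\Lambdal}_{s,t_0}\Big[\big[H_0^{\Lambdal}(s)-H_0^{\Lambdak}(s)\,,\,\mathfrak{U}^{\eta,\Lambdak}_{t,s}\Ab{A}\big]\Big]\,\D s\,,
\]
whose integrand one bounds using the isometry of $\mathfrak{U}^{\eta,\Lambdal}_{s,t_0}$ and Step 1, splitting $H_0^{\Lambdal}(s)-H_0^{\Lambdak}(s)$ into its terms supported inside an intermediate box $\LambdaN$ with $R_{t,s}\ll N\ll k$ — which have small $\|\cdot\|_{\zeta,0,\LambdaN}$-norm because $\Phi_{H_0}^{\Lambdal}(s)$ and $\Phi_{H_0}^{\Lambdak}(s)$ both converge to $\Psi_{H_0}(s)$ on fixed boxes (this is precisely where $\Phi_{H_0}\stackrel{\rm t.d.}{\rightarrow}\Psi_{H_0}$ enters) — and the remaining terms, which are either of large diameter, hence of small norm by the SLT estimate, or $d^\Lambdak$-far from $\LambdaM$ (including all wraparound boundary terms of $\Lambdak$ and $\Lambdal$), hence have small commutator with the light-cone-localised $\mathfrak{U}^{\eta,\Lambdak}_{t,s}\Ab{A}$ by Step 1, the relevant sums being finite because $F_\zeta$ is summable. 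Choosing $N=N(k)\to\infty$ slowly makes the whole bound $o_k(1)$, uniformly for $t$ in compacts, which proves the Cauchy property. Hence $\mathfrak{U}^{\eta}_{t,t_0}\Ab{A}:=\lim_{k\to\infty}\mathfrak{U}^{\eta,\Lambdak}_{t,t_0}\Ab{A}$ exists for all $A\in\Aloc$, and since $\|\mathfrak{U}^{\eta,\Lambdak}_{t,t_0}\Ab{A}\|=\|A\|$ it extends by continuity to a unital $*$-homomorphism $\mathfrak{U}^{\eta}_{t,t_0}:\mathcal{A}\to\mathcal{A}$. I expect Steps 1--2 to be the only real obstacle: securing the uniformity in $k$ of the Lieb--Robinson bound in the presence of the non-standard metrics $d^\Lambdak$, and inserting the thermodynamic-limit hypothesis into the Duhamel comparison at exactly the right place — as bulk convergence of the interactions on boxes $\LambdaN$ whose size is allowed to grow with $k$, so that the competition between the expanding light-cone radius $R_{t,s}$ and the shrinking bulk discrepancy can be balanced; the gauge reduction above is the one genuine adjustment of the textbook argument.

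The remaining claims follow by passing to the limit in the finite-volume identities, using throughout that the $\mathfrak{U}^{\eta,\Lambdak}_{t,t_0}$ are contractions, so that $\mathfrak{U}^{\eta,\Lambdak}_{t,t_0}\Ab{B_k}\to\mathfrak{U}^{\eta}_{t,t_0}\Ab{B}$ whenever $B_k\to B$ in norm. Running Step 2 for the backward evolution produces $\mathfrak{U}^{\eta}_{t_0,t}$, a two-sided inverse of $\mathfrak{U}^{\eta}_{t,t_0}$ (take $k\to\infty$ in $\mathfrak{U}^{\eta,\Lambdak}_{t,t_0}\circ\mathfrak{U}^{\eta,\Lambdak}_{t_0,t}=\mathrm{id}$), so $\mathfrak{U}^{\eta}_{t,t_0}$ is an automorphism; the cocycle relation comes likewise from $\mathfrak{U}^{\eta,\Lambdak}_{t,t_1}\circ\mathfrak{U}^{\eta,\Lambdak}_{t_1,t_0}=\mathfrak{U}^{\eta,\Lambdak}_{t,t_0}$, and the dependence of the limit on $(\Phi_{H_0}^\Lambdak)$ and $(v^\Lambdak)$ only through their thermodynamic limits is the same Duhamel-plus-Lieb--Robinson estimate applied to two sequences sharing the same limit. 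For the generator: given $A\in\Aloc$, only finitely many terms of $H^\Lambdak(t)$ meet $\mathrm{supp}\,A$, and since $\Phi_{H_0}^\Lambdak(t,X)\to\Psi_{H_0}(t,X)$ for each $X$, $v^\Lambdak(t,x)=v_\infty(t,x)$ for large $k$, and $\sum_{X\supset\mathrm{supp}\,A}\|\Phi_{H_0}^\Lambdak(t,X)\|$ is bounded uniformly in $k$ by the SLT bound, dominated convergence shows that $\mathcal{L}_{H(t)}\Ab{A}:=\lim_k\tfrac1\eta[H^\Lambdak(t),A]$ exists on $\Aloc$ and defines there a densely defined derivation; write $(\mathcal{L}_{H(t)},D(\mathcal{L}_{H(t)}))$ for its closure. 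Letting $k\to\infty$ in the finite-volume Heisenberg equation $\tfrac{\D}{\D t}\mathfrak{U}^{\eta,\Lambdak}_{t,t_0}\Ab{A}=\I\,\mathfrak{U}^{\eta,\Lambdak}_{t,t_0}\big[\tfrac1\eta[H^\Lambdak(t),A]\big]$ — the right-hand side converging uniformly on compacts by the estimates above, so that the limit exchanges with $\tfrac{\D}{\D t}$ — yields $\tfrac{\D}{\D t}\mathfrak{U}^{\eta}_{t,t_0}\Ab{A}=\I\,\mathfrak{U}^{\eta}_{t,t_0}\big[\mathcal{L}_{H(t)}\Ab{A}\big]$ on the core $\Aloc$, and uniqueness of the cocycle generated by a closed derivation identifies $\mathfrak{U}^{\eta}_{t,t_0}$ as asserted. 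Everything after the two quasi-locality estimates is routine limiting.
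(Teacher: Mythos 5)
Your proof is correct, but it takes a genuinely different route from the paper's. The paper organises the Cauchy estimate as a two--stage comparison through the auxiliary \emph{restricted dynamics}: it first bounds $\mathfrak{U}^{\Lambdal}_{t,s}-\mathfrak{U}^{\Lambdal}_{t,s}\vert_{\LambdaM}$ and $\mathfrak{U}^{\Lambdak}_{t,s}-\mathfrak{U}^{\Lambdak}_{t,s}\vert_{\LambdaM}$ (Theorem~\ref{differenceoftwodynamics}(ii), Lieb--Robinson plus summability of $F_\zeta$), and then bounds $\mathfrak{U}^{\Lambdal}_{t,s}\vert_{\LambdaM}-\mathfrak{U}^{\Lambdak}_{t,s}\vert_{\LambdaM}$ (Theorem~\ref{differenceoftwodynamics}(i), Duhamel controlled by $\Vert\Phi_{H_0}^{\Lambdal}-\Phi_{H_0}^{\Lambdak}\Vert_{\zeta,0,\LambdaM}$). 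The Lipschitz potential is handled \emph{without} gauging it away: in step~(i) the restricted potentials literally coincide for large $k,l$ because $v\stackrel{\rm t.d.}{\rightarrow}v_\infty$ is eventually constant on $\LambdaM$, and in step~(ii) the discarded potential terms $v(x)N_x$ with $x\notin\LambdaM$ commute with the $\LambdaM$-supported restricted dynamics and drop out of the Duhamel integrand. Your gauge reduction achieves the same end by a different mechanism: it removes $V_v$ before any Lieb--Robinson estimate is invoked, at the price of having to verify that the conjugated interaction $\widetilde\Phi_{H_0}^\Lambdak$ is again an SLT interaction with a thermodynamic limit (this does hold for the $n=0$ norm, which is all the Cauchy estimate needs, since conjugation by a product of on-site phases is a norm-preserving bijection of each $\mathcal{A}_X$ and the phases stabilise on fixed boxes). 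Similarly, your single Duhamel step with a $k$-dependent intermediate box $\LambdaN$ collapses the paper's three-term split into one estimate; the bookkeeping is slightly more delicate because $H_0^{\Lambdal}-H_0^{\Lambdak}$ carries two different metrics $d^{\Lambdal},d^{\Lambdak}$ simultaneously, whereas the paper's restriction to $\LambdaM$ switches to the fixed bulk metric $d$ before any comparison is made. Both routes are valid; the paper's is more modular against the boundary-condition metrics, yours isolates the unbounded potential entirely from the quasi-locality machinery. Your treatment of the cocycle property, the automorphism property, the independence of the boundary conditions, and the generator matches the paper's (which delegates the last point to the argument of Theorem~3.8(ii) of \cite{NSY}, essentially your dominated-convergence computation of $\mathcal{L}_{H(t)}$ on $\Aloc$ followed by passing to the limit in the Heisenberg equation).

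One small caveat worth recording: the gauge conjugation introduces, in each time derivative, an extra term $\frac{\I}{\eta}\,[V_v,\,\cdot\,]$ that costs one power of decay per derivative (cf.\ Lemma~2.1 of \cite{Teu17}), so $\widetilde H_0$ need not be `smooth and bounded' in the sense of $\mathcal{L}_{I,\zeta,0}$ even when $H_0$ is. This is harmless here, because the Cauchy estimate and the Lieb--Robinson bound only use the $i=0$ norm $\sup_t\Vert\widetilde\Phi_{H_0}(t)\Vert_{\zeta,0}$, which the gauge preserves exactly; but it means the gauge-reduced interaction should not be silently fed back into later parts of the paper that do need higher time-derivatives.
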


\subsection{States}

A normalised positive linear functional $\omega$ on the  $C^*$-algebra $\mathcal{A}$ is called a state. Here normalised means $\omega(\mathbf{1}_{\mathcal{A}}) = 1$ and positive means $\omega(A^*A) \ge 0$ for all $A \in \mathcal{A}$.

States $\omega^\Lambdak \in E_{\mathcal{A}_\Lambdak}$ on the algebras $\mathcal{A}_\Lambdak$ of the finite systems are given by positive trace-class operators $\rho^\Lambdak$ with tr$\rho^\Lambdak=1$ via $\mathcal{A}_\Lambdak \ni B \mapsto \mathrm{tr}(\rho^{\Lambdak}B)$.
By employing the conditional expectation $\mathbb{E}_{\Lambdak}$ (see Appendix~\ref{conditionalexpectation}), every such state  $\omega^\Lambdak \in E_{\mathcal{A}_\Lambdak}$ also defines a state $\omega^\Lambdak\circ \mathbb{E}_{\Lambdak} \in E_{\mathcal{A} }$ on $\mathcal{A}$.

 By the Banach-Alaoglu theorem, the set $E_{\mathcal{A}}$ of states  is weak$^*$-compact and  every sequence of states has a convergent subsequence. Hence, the assumptions in the next section concerning  convergence of states could be omitted at the cost of extracting appropriate subsequences.

\section{Main results} \label{spacetimeadiabatic}
\subsection{Assumptions} \label{sec:ass}

 We consider a time-dependent fermionic system on $\Z^d$ described by a time-dependent infinite volume interaction  $\Psi_{H^\epsi} = \Psi_{H_0} + \epsi (V_{v_\infty}  + \Psi_{H_1})$ with $ \Psi_{H_0}, \Psi_{H_1} \in  \mathcal{B}^\circ_{I,\exp(-a \, \cdot),\infty }$ for some $a >0$ and   $v_\infty \in \mathcal{V}_I^\circ$ an infinite volume  Lipschitz potential.
While the statement of our adiabatic theorem  concerns exclusively  the dynamics on the infinite volume, we need to assume the existence of a sequence of finite volume Hamiltonians 
$H_0, H_1 \in \mathcal{L}_{I,\exp(-a \, \cdot),\infty }$ and $v \in \mathcal{V}_I$
with appropriate boundary conditions (encoded in the definition of the norms defining the spaces $ \mathcal{L}$ resp.\ the Lipschitz condition), such that $\Phi_{H_0}\stackrel{\rm t.d.}{\rightarrow} \Psi_{H_0} $, $\Phi_{H_1}\stackrel{\rm t.d.}{\rightarrow} \Psi_{H_1} $ and $v \stackrel{\rm t.d.}{\rightarrow} v_\infty $  and such that the following gap condition holds.    In addition to  the ``standard gap condition'', i.e.\ the requirement  that there exists a spectral island $\sigma_*(t) \subset \sigma(H(t))$ of the time dependent Hamiltonian that has a positive distance to the rest of the spectrum for all times (see e.g.\ \cite{Teu03} Eq.\ (1.5)), we require that the diameter of $\sigma_*(t)$ must be smaller than the gap, that  $\sigma_*(t)$ consists of finitely many eigenvalues of finite multiplicity and  that all  conditions   hold uniformly in the volume.

\begin{quote}
 \textbf{Gap condition for $H_0$.}
There exists   $L \in \mathbb{N}$ such that for all $t \in I$, $k \ge L$ and corresponding $\Lambdak$ the operator $H_0^{\Lambdak}(t)$ has a gapped part $\sigma_*^{\Lambdak}(t) \subset \sigma(H_0^{\Lambdak}(t))$ of its spectrum with the following properties: There exist continuous functions $f_{\pm}^{\Lambdak} : I \to \mathbb{R}$ and constants $g > \tilde{g} >0$ such that 
 \begin{itemize}
 	\item [(i)]$f_{\pm}^{\Lambdak}(t) \in \rho(H_0^{\Lambdak}(t))$ and $f_+^{\Lambdak}(t) - f_-^{\Lambdak}(t) \le \tilde{g}$,
 	\item[(ii)] $[f_-^{\Lambdak}(t), f_+^{\Lambdak}(t)] \cap \sigma(H_0^{\Lambdak}(t)) = \sigma_*^{\Lambdak}(t)$, and 
 	\item[(iii)] $\mathrm{dist}(\sigma_*^{\Lambdak}(t), \sigma(H_0^{\Lambdak}(t)) \setminus  \sigma_*^{\Lambdak}(t)) \ge g$, for all $t \in I$. 
 \end{itemize}
 We denote by $P_*^{\Lambdak}(t)$ the spectral projection of $H_0^{\Lambdak}(t)$ corresponding to the spectrum $\sigma_*^{\Lambdak}(t) $ and assume that there exists $\kappa \in \mathbb{N}$ such that $1 \le \mathrm{dim}\,\mathrm{ran}\,P_*^{\Lambdak}(t) =: \kappa^{\Lambdak}(t) \le \kappa$ for all $t \in I$ and $k \ge L$.
 \end{quote}
 Furthermore, assume that for every $t \in I$ the sequence $\big( \frac{1}{\kappa^{\Lambdak}(t)} P_*^{\Lambdak}(t)\big)_{k \in \mathbb{N}}$  of states converges in the weak$^*$-topology to a state $P_*(t)$ on $\mathcal{A}$, which we call the \textit{gapped limit projection at $t \in I$}.\footnote{In all applications we have in mind, the gapped part of the spectrum is at the bottom of the spectrum and the gapped limit projection is a ground state for $\mathcal{L}_{H_0(t)}$.} 

 \begin{rmk} \label{gappedremark}
If a sequence $\big(\rho_0^{\Lambdak}(t)\big)_{k \in \mathbb{N}}$ of states satisfying $\rho_0^{\Lambdak}(t) = P_*^{\Lambdak}(t)\rho_0^{\Lambdak}(t) P_*^{\Lambdak}(t)$ for all $k \ge L$ has a weak$^*$-limit $\rho_0(t)$, we call it a \textit{gapped limit state at $t \in I$}. In the general adiabatic theorem of Section~\ref{gensec}, Theorem~\ref{existenceofneass2}, we consider also such states as initial states, which requires us to lift the Kato parallel transport also to states on $\mathcal{A}$ in the case that $\rho_0(t) \not = P_*(t)$.
 \end{rmk}

\subsection{Adiabatic theorem  and linear response in the thermodynamic limit}

In this section we formulate a concise version of our adiabatic theorem and its implication for    response theory. In the next section we then formulate the underlying more general but also more technical result. The proofs given in this section just explain how the corresponding result follows from the general result. 

Let us briefly summarise the content of the following adiabatic theorem in a non-technical way.
The standard adiabatic theorem of quantum mechanics states that, at leading order in the adiabatic parameter $\eta$, the   Heisenberg time evolution of the projector onto an initially gapped eigenspace    follows the projector onto time-dependent eigenspace of the time-dependent Hamiltonian as long as the gap remains open. 
 A small in $\eta$ and local-in-time change of the instantaneous eigenspaces, a so-called super-adiabatic transformation, results in a super-adiabatic state, to which the   time evolution adheres to any order in $\eta$. Finally, a perturbation of the instantaneous Hamiltonian, which changes the  eigenspace  or transforms it into a resonance with a long lifetime compared to the adiabatic time scale (a NEASS), can be included. 
 
\begin{thm}  {\rm (Adiabatic theorem for gapped   limit projections)}  \label{existenceofneass3} ~\\
Consider a fermionic system on $\Z^d$ described by a time-dependent infinite volume interaction  $\Psi_{H^\epsi} = \Psi_{H_0} + \epsi (V_{v_\infty}  + \Psi_{H_1})$ satisfying the assumptions spelled out in Section~\ref{sec:ass}.
Now, let $P_*(t)$ be the gapped limit projection at $t \in I$ and $\mathfrak{U}_{t,t_0}^{\varepsilon, \eta}$ be the Heisenberg time-evolution on $\mathcal{A}$ generated by $H^{\varepsilon}(t)$ with adiabatic parameter $\eta \in (0,1]$, cf.   Proposition~\ref{tdlofcauchyinteractions}. 
Then for any $\varepsilon, \eta \in (0,1]$ and $t \in I$ there exists a near-identity automorphism $\beta^{\varepsilon, \eta}(t)$\footnote{By near-identity we mean that $\beta^{\varepsilon, \eta}(t)$ is of the form $\beta^{\varepsilon, \eta}(t) = \mathrm{e}^{\mathrm{i} \varepsilon\mathcal{L}_{S^{\varepsilon, \eta}(t)}}$ for an \SLT-operator $\varepsilon S^{\varepsilon, \eta}(t)$ that has a thermodynamic limit. } such that the super-adiabatic NEASS defined by 
\begin{equation*}
\Pi^{\varepsilon, \eta}(t)(A) := P_*(t)( \beta^{\varepsilon, \eta}(t) \Ab{A})\qquad\mbox{for any } A\in\mathcal{A}
\end{equation*}
has the following properties:
\begin{enumerate}
\item It almost {\bf intertwines the   time evolution}: For any  $n \in\N$, there exists a constant $C_n$ such that for any $ t \in I$ and for all finite $X \subset \Gamma$ and $A \in \mathcal{A}_X \subset \mathcal{A}$
\begin{align} \label{adiabbound}
&\left\vert \Pi^{\varepsilon, \eta}(t_0)(\mathfrak{U}_{t,t_0}^{\varepsilon, \eta}\Ab{A}) - \Pi^{\varepsilon, \eta}(t)(A) \right\vert \nonumber \\ &\qquad \qquad  \le \ C_n \ \frac{\varepsilon^{n+1} + \eta^{n+1}}{\eta^{d+1}}    \left(1+\vert t- t_0\vert^{d+1}\right)   \Vert A \Vert \, \vert X \vert^2.
\end{align}
For any $f \in\mathcal{S}$  the same statement (with a different constant $C_n$) holds also for all  $A \in \mathcal{D}_f$ after replacing $\Vert A \Vert \vert X \vert^2$ by $ \Vert A \Vert_f$. 

\item It is  {\bf local in time}: 
   $\beta^{\varepsilon, \eta}(t)$ depends only on $H^{\varepsilon}$ and its time derivatives at time $t\in I$. 
	\item It is \textbf{stationary} whenever the Hamiltonian is stationary: 
	if for some $t\in I$ all time-derivatives of $H^{\varepsilon} $ vanish at time $t$,
	then $\Pi^{\varepsilon, \eta}({t}) = \Pi ^{\varepsilon, 0}({t})$.
\item It {equals the gapped limit projection of $H_0$} whenever the perturbation vanishes and the Hamiltonian is stationary:  if for some $t\in I$ all time-derivatives of $H^{\varepsilon} $   vanish at time $t$ and $V(t)=0$, 
	  then $\Pi^{\varepsilon, \eta}({t}) = \Pi^{\varepsilon, 0}({t}) = P_*(t)$.
 \end{enumerate}
\end{thm}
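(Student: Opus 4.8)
The plan is to obtain Theorem~\ref{existenceofneass3} as a corollary of the more general Theorem~\ref{existenceofneass2} announced in Remark~\ref{gappedremark}, by specializing the initial state to the distinguished gapped limit projection $P_*(t)$. Thus the bulk of the work is really in the general theorem (proved in Section~\ref{proofs}); here I only need to explain how the four listed properties are inherited once that machinery is in place. First I would invoke the finite-volume super-adiabatic theorem of \cite{Teu17} (recalled in Appendix~\ref{finitevolumeresults}): for each $k \ge L$ it produces an \SLT-operator $S^{\varepsilon,\eta,\Lambdak}(t)$, built locally in time from $H^{\varepsilon,\Lambdak}$ and finitely many of its time derivatives via iterated commutators, inverses of the Liouvillian restricted to the gap, and the quasi-local ``$\mathcal{I}$'' operator, such that $\Pi^{\varepsilon,\eta,\Lambdak}(t) := P_*^{\Lambdak}(t)\circ\mathrm{e}^{\mathrm{i}\varepsilon\mathcal{L}_{S^{\varepsilon,\eta,\Lambdak}(t)}}$ intertwines the finite-volume Heisenberg evolution $\mathfrak{U}^{\varepsilon,\eta,\Lambdak}_{t,t_0}$ up to errors $\Or((\varepsilon^{n+1}+\eta^{n+1})/\eta^{d+1})$ uniformly in $k$.

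Next I would pass to the thermodynamic limit. The key technical input — established in Appendix~\ref{technicalApp} — is that each of the algebraic operations used to build $S^{\varepsilon,\eta,\Lambdak}(t)$ maps interactions that have a thermodynamic limit to interactions that have a thermodynamic limit, with uniform norm control. Given the assumption $\Phi_{H_0}\stackrel{\rm t.d.}{\rightarrow}\Psi_{H_0}$, $\Phi_{H_1}\stackrel{\rm t.d.}{\rightarrow}\Psi_{H_1}$, $v\stackrel{\rm t.d.}{\rightarrow}v_\infty$, this yields an \SLT-operator $S^{\varepsilon,\eta}(t)$ with a thermodynamic limit, hence a near-identity automorphism $\beta^{\varepsilon,\eta}(t)=\mathrm{e}^{\mathrm{i}\varepsilon\mathcal{L}_{S^{\varepsilon,\eta}(t)}}$ on $\mathcal{A}$. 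Since $\big(\tfrac{1}{\kappa^{\Lambdak}(t)}P_*^{\Lambdak}(t)\big)_k$ converges weak$^*$ to $P_*(t)$ by assumption, and $\tfrac{1}{\kappa^{\Lambdak}(t)}P_*^{\Lambdak}(t)$ is itself a finite-volume state of the required form, $\Pi^{\varepsilon,\eta}(t) = P_*(t)\circ\beta^{\varepsilon,\eta}(t)$ is the weak$^*$ limit of $\Pi^{\varepsilon,\eta,\Lambdak}(t)$ on $\Aloc$. Property~1 then follows by taking $k\to\infty$ in the finite-volume intertwining estimate: one writes $\Pi^{\varepsilon,\eta}(t_0)(\mathfrak{U}^{\varepsilon,\eta}_{t,t_0}\Ab A)-\Pi^{\varepsilon,\eta}(t)(A)$ as a limit of the finite-volume differences (using Proposition~\ref{tdlofcauchyinteractions} for convergence of $\mathfrak{U}^{\varepsilon,\eta,\Lambdak}_{t,t_0}\Ab A \to \mathfrak{U}^{\varepsilon,\eta}_{t,t_0}\Ab A$ when $A\in\Aloc$, and weak$^*$ convergence of the states), and the bound survives because the constant $C_n$ and the volume/norm factors $\Vert A\Vert\,|X|^2$ are $k$-independent. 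The extension from $A\in\mathcal{A}_X$ to $A\in\mathcal{D}_f$ uses Appendix~\ref{localtoflocalizedappendix} together with density of $\Aloc$ in $(\mathcal{D}_f,\Vert\cdot\Vert_f)$.

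Properties~2--4 are then essentially structural. Locality in time (property~2) is immediate from the explicit construction: $S^{\varepsilon,\eta,\Lambdak}(t)$, and hence its limit $S^{\varepsilon,\eta}(t)$, depends only on $H^{\varepsilon,\Lambdak}$ and finitely many time derivatives evaluated at $t$; taking the thermodynamic limit preserves this. For property~3, if all time derivatives of $H^\varepsilon$ vanish at $t$ then every term in the adiabatic expansion of $S^{\varepsilon,\eta}(t)$ that carries a factor of $\eta$ (equivalently, a time derivative) drops out, so $S^{\varepsilon,\eta}(t)=S^{\varepsilon,0}(t)$ and thus $\beta^{\varepsilon,\eta}(t)=\beta^{\varepsilon,0}(t)$, giving $\Pi^{\varepsilon,\eta}(t)=\Pi^{\varepsilon,0}(t)$. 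For property~4, if additionally $V(t)=0$ then the remaining $\varepsilon$-dependent (NEASS) part of the generator also vanishes — the construction of $S^{\varepsilon,0}$ is driven entirely by the perturbation $\varepsilon V$ — so $\beta^{\varepsilon,0}(t)=\mathrm{id}$ and $\Pi^{\varepsilon,0}(t)=P_*(t)\circ\mathrm{id}=P_*(t)$.

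\emph{Main obstacle.} The genuinely hard part is not in this deduction but in the two ingredients imported from the appendices: (a) showing the constructed \SLT-operators actually \emph{have} a thermodynamic limit — i.e.\ that inverting the Liouvillian on the gap, the $\mathcal{I}$-integral, and the resummation of the $\varepsilon,\eta$-expansion all preserve the Cauchy-in-the-bulk property of Definition~\ref{cauchydefinition} with uniform decay and norm bounds (Appendices~\ref{technicalApp} and~\ref{resummation}); and (b) extending the intertwining estimate from strictly local observables to $\mathcal{D}_f$, which requires the approximation results of Appendix~\ref{localtoflocalizedappendix}. At the level of the present theorem, the only subtlety to be careful about is that the weak$^*$ limit commutes with the error estimate, which is clean precisely because the finite-volume bound \eqref{adiabbound} is uniform in $k$.
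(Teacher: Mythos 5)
Your plan follows the paper's route — finite-volume result of \cite{Teu17}, thermodynamic limit via the stability results of Appendix~\ref{technicalApp}, then the $\mathcal{D}_f$ extension — and you correctly identify all the ingredients, including the need for a resummation. However, the logical structure of your first step is off: the finite-volume theorem (Theorem~\ref{existenceofneassfinite}) does \emph{not} produce a single operator $S^{\varepsilon,\eta,\Lambdak}(t)$ whose associated state is accurate to order $\varepsilon^{n+1}+\eta^{n+1}$ for every $n$; it produces, for each $n$, an operator $S_n^{\varepsilon,\eta,\Lambdak}(t)$ whose state is accurate to that order only for that fixed $n$. You acknowledge the resummation as a ``technical input from the appendices,'' but you never explain where it enters the deduction. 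The paper's actual proof is precisely this missing step, made via a triangle inequality: by Theorem~\ref{existenceofneass2} one has $\bigl\vert \Pi_n^{\varepsilon,\eta}(t_0)(\mathfrak{U}^{\varepsilon,\eta}_{t,t_0}\Ab{A})-\Pi_n^{\varepsilon,\eta}(t)(A)\bigr\vert \le C_n\frac{\varepsilon^{n+1}+\eta^{n+1}}{\eta^{d+1}}(1+\vert t-t_0\vert^{d+1})\Vert A\Vert\,\vert X\vert^2$, Lemma~\ref{resummation3} bounds $\vert\Pi^{\varepsilon,\eta}(t)(A)-\Pi_n^{\varepsilon,\eta}(t)(A)\vert$, and Lemma~\ref{resummation4} bounds $\vert\Pi^{\varepsilon,\eta}(t_0)(\mathfrak{U}^{\varepsilon,\eta}_{t,t_0}\Ab{A})-\Pi_n^{\varepsilon,\eta}(t_0)(\mathfrak{U}^{\varepsilon,\eta}_{t,t_0}\Ab{A})\vert$; the three estimates combine to give \eqref{adiabbound}. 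In particular, the single near-identity automorphism $\beta^{\varepsilon,\eta}(t)$ is generated by the resummed operator $\varepsilon S^{\varepsilon,\eta}(t)$ of Lemma~\ref{resummation1} (built from the $S_n^{\varepsilon,\eta}(t)$ by a cutoff argument with thresholds $\delta_j\to 0$), and its thermodynamic limit is controlled by Lemma~\ref{resummation2} — that chain, not a direct reading of the finite-volume theorem, is what produces an $n$-independent object. Once you insert this triangle-inequality step your argument coincides with the paper's; the remaining parts (properties 2--4 as structural consequences, and the $\mathcal{D}_f$ extension via Lemma~\ref{localtoflocalized}) are as you describe.
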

\begin{proof}
	The result follows by application of Lemma \ref{resummation3} and Lemma \ref{resummation4} in combination with the triangle inequality and Theorem \ref{existenceofneass2}. The generalisation to $A \in \mathcal{D}_f$ follows from Lemma \ref{localtoflocalized}.
\end{proof}
 The right hand side  in  \eqref{adiabbound}  is asymptotically small if 
\[
1\gg \eta \gg \epsi^\frac{n+1}{d+1} \,.
\]
Thus,   the adiabatic approximation is valid as long as the Hamiltonian changes slowly  on the time-scale defined by the energy gap ($\eta\ll 1$), but fast on the time scale defined by the life-time of the NEASS ($ \eta \gg \epsi^\frac{n+1}{d+1}$ for some $n\in \N$).
 
In order to prepare the application to  linear response theory,  we first determine the asymptotic expansion of the super-adiabatic NEASS~$\Pi^{\varepsilon, \eta}(t)$.
\begin{prop}  {\rm (Asymptotic expansion of $\Pi^{\varepsilon, \eta}(t)$)} \label{asymptoticexp3} ~\\
	Under the assumptions of Theorem \ref{existenceofneass3} , there exist time-dependent closed linear maps $\mathcal{K}_j^{\varepsilon, \eta} : D(\mathcal{K}_j^{\varepsilon, \eta}) \to \mathcal{A}$, $j \in \mathbb{N}_0$, where $\mathcal{A}_{\mathrm{loc}} \subset \mathcal{D}_f\subset D(\mathcal{K}_j^{\varepsilon, \eta})$  for all $f\in\mathcal{S}$ 
	and $\mathcal{K}_j^{\varepsilon,0}$ is independent of $\varepsilon$, such that for any $n\in \mathbb{N}$ there exists a constant $C_{n}$ such that for any finite $X \subset \Gamma$ and $A \in \mathcal{A}_X \subset \mathcal{A}$ it holds that 
	\begin{align} \label{expansionbound}
	\sup_{t \in I} \ \left\vert \Pi^{\varepsilon, \eta}(t)(A)- \sum_{j=0}^{n} \varepsilon^j \, P_*(t)\left(\mathcal{K}_j^{ \varepsilon, \eta}(t)\Ab{A}\right)\right\vert  \le  C_{n} \, (\varepsilon^{n+1} \, +\eta^{n+1})  \, \Vert A \Vert \, \vert X\vert^2.
	\end{align}
	For any $f \in\mathcal{S}$  the same statement (with a different constant $C_n$) holds also for all  $A \in \mathcal{D}_f$ after replacing $\Vert A \Vert \vert X \vert^2$ by $ \Vert A \Vert_f$. 
	\\
	The constant term is $\mathcal{K}_0^{\varepsilon, \eta}(t) = \mathbf{1}_{\mathcal{A}}$, which shows that $\beta^{\varepsilon, \eta}(t)$ is indeed a near-identity automorphism when $\varepsilon, \eta \ll 1$. The linear term $\mathcal{K}_1^{\varepsilon, \eta}(t)$ is a densely defined derivation on $\mathcal{A}$ that satisfies\footnote{See Appendix \ref{technicalApp} for a definition of the \SLT~invers of the Liouvillian $\mathcal{I}$.}
\[ 
	 P_* (t) \left(\mathcal{K}_1^{\varepsilon, \eta}(t)\Ab{A}\right) = \lim\limits_{k\to \infty} \mathrm{i} \ \mathrm{tr}\left(P_*^{\Lambdak}(t) \left[\mathcal{I}_{H_0^{\Lambdak}(t)}\left( \frac{\eta}{\varepsilon} \,\mathcal{I}_{H_0^{\Lambdak}(t)} \left(\dot{H}_0^{\Lambdak}(t)\right)- V^{ \Lambdak} (t) \right), A\right]\right)
	\]
	for every $A \in \mathcal{D}_f$. From this expression it follows that $P_*(t) \left(\mathcal{K}_1^{ \varepsilon, \eta}(t)\Ab{A}\right)$ depends on $V_v$ only through the limiting function $v_{\infty}$. 
\end{prop}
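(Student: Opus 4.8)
The plan is to obtain Proposition~\ref{asymptoticexp3} as a direct consequence of the corresponding finite-volume asymptotic expansion of the super-adiabatic NEASS (the main result of \cite{Teu17}, recalled in Appendix~\ref{finitevolumeresults}), combined with the resummation lemmas (Lemma~\ref{resummation3}, Lemma~\ref{resummation4}) that produce $\beta^{\varepsilon,\eta}(t)$ and the thermodynamic-limit machinery of Appendix~\ref{technicalApp}. Concretely, for each $k\ge L$ the finite-volume construction yields \SLT-operators $S_j^{\Lambdak,\eta}(t)$ and an associated NEASS $\Pi^{\Lambdak,\varepsilon,\eta}(t) = \tfrac{1}{\kappa^\Lambdak(t)}P_*^\Lambdak(t)\circ\mathrm{e}^{\mathrm{i}\varepsilon\mathcal{L}_{S^{\Lambdak,\varepsilon,\eta}(t)}}$ with a Taylor expansion in $\varepsilon$ whose coefficients are traces against $P_*^\Lambdak(t)$ of nested commutators built from $\mathcal{I}_{H_0^\Lambdak(t)}$ applied to $\dot H_0^\Lambdak(t)$ and $V^\Lambdak(t)$. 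The first step is to define $\mathcal{K}_j^{\varepsilon,\eta}(t)$ as the thermodynamic limit of these finite-volume coefficient maps: by the invariance results of Appendix~\ref{technicalApp}, each operation used in their construction (the \SLT-inverse of the Liouvillian $\mathcal{I}$, commutators, the resummation) preserves the space of operator families having a thermodynamic limit, so the limits exist as closed, densely defined maps on $\mathcal{A}$ with $\Aloc\subset\mathcal{D}_f\subset D(\mathcal{K}_j^{\varepsilon,\eta})$ for all $f\in\mathcal{S}$; that $\mathcal{K}_j^{\varepsilon,0}$ is $\varepsilon$-independent is inherited termwise from the $\eta$-only (gapped) construction where $V$ enters only at the explicit orders of $\varepsilon$.

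The second step is to pass the finite-volume error bound to the limit. The finite-volume result gives, uniformly in $k$, a bound of the form $|\Pi^{\Lambdak,\varepsilon,\eta}(t)(A)-\sum_{j=0}^n\varepsilon^j\tfrac{1}{\kappa^\Lambdak(t)}\mathrm{tr}(P_*^\Lambdak(t)\,\mathcal{K}_j^{\Lambdak,\varepsilon,\eta}(t)\Ab{A}))|\le C_n(\varepsilon^{n+1}+\eta^{n+1})\|A\|\,|X|^2$ for $A\in\mathcal{A}_X$; since $P_*^\Lambdak(t)/\kappa^\Lambdak(t)\stackrel{w^*}{\to}P_*(t)$ and each $\mathcal{K}_j^{\Lambdak,\varepsilon,\eta}(t)\Ab{A}\to\mathcal{K}_j^{\varepsilon,\eta}(t)\Ab{A}$ in norm for fixed local $A$, taking $k\to\infty$ inside the (finite) sum and inside the state evaluation yields \eqref{expansionbound}. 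The uniformity of $C_n$ in $k$ is exactly what makes this a legitimate limit rather than just a pointwise one. The extension from $A\in\mathcal{A}_X$ to $A\in\mathcal{D}_f$ is then handled by Lemma~\ref{localtoflocalized} (approximation of $f$-localized observables by strictly local ones together with the $\|\cdot\|_f$-continuity of all objects involved), precisely as in the proof of Theorem~\ref{existenceofneass3}.

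The third step identifies the low-order terms. That $\mathcal{K}_0^{\varepsilon,\eta}(t)=\mathbf{1}_{\mathcal{A}}$ is immediate because the zeroth-order term of $\mathrm{e}^{\mathrm{i}\varepsilon\mathcal{L}_{S}}$ applied inside the state is just the identity, uniformly in $k$; this also shows $|\beta^{\varepsilon,\eta}(t)(A)-A|=\Or(\varepsilon+\eta)$, i.e.\ near-identity. For $\mathcal{K}_1^{\varepsilon,\eta}(t)$ one reads off from the finite-volume expansion that the order-$\varepsilon$ coefficient is $\mathrm{i}\,\mathrm{tr}(P_*^\Lambdak(t)[S_1^{\Lambdak,\varepsilon,\eta}(t),A])/\kappa^\Lambdak(t)$ with $S_1^{\Lambdak,\varepsilon,\eta}(t)=\mathcal{I}_{H_0^\Lambdak(t)}(\tfrac{\eta}{\varepsilon}\mathcal{I}_{H_0^\Lambdak(t)}(\dot H_0^\Lambdak(t))-V^\Lambdak(t))$ (this is the explicit first generator from \cite{Teu17}); taking $k\to\infty$ gives the displayed formula, and since $\mathcal{I}$ applied to a commutator is again a derivation, $\mathcal{K}_1^{\varepsilon,\eta}(t)$ is a densely defined derivation. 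The dependence on $v$ only through $v_\infty$ follows from Proposition~\ref{tdlofcauchyinteractions} and the fact that $\mathcal{I}_{H_0^\Lambdak(t)}(V^\Lambdak(t))$ has a thermodynamic limit depending only on $v_\infty$ and $\Psi_{H_0}$.

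The main obstacle I expect is not any single estimate but the bookkeeping needed to guarantee that \emph{every} object entering the finite-volume expansion — in particular the resummed generator $S^{\Lambdak,\varepsilon,\eta}(t)$ obtained by summing infinitely many orders — lies in a space of \SLT-operator families that have a thermodynamic limit, with $k$-uniform norm control; this is where the results of Appendix~\ref{technicalApp} and Appendix~\ref{resummation} do the real work, and one has to check that the resummation does not destroy the thermodynamic limit (convergence of the $\zeta$-norms must be uniform along the resummation, using that at each order one may pass to a slightly weaker decay function $\zeta^{(i)}\in\mathcal{S}$). Once that invariance is in hand, the passage to the limit in both the coefficients and the error term is routine.
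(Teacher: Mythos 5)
Your proposal follows essentially the same route as the paper's (very short) proof: combine the thermodynamic‐limit expansion for the $n$-truncated NEASS (Proposition~\ref{asymptoticexp2}), the resummation comparison $|\Pi^{\varepsilon,\eta}(t)(A)-\Pi_n^{\varepsilon,\eta}(t)(A)|\le C_{n,q}(\varepsilon^n+\eta^n)\|A\|\,|X|^{1+1/q}$ from Lemma~\ref{resummation3} via the triangle inequality, and then extend from $\Aloc$ to $\mathcal{D}_f$ using Lemma~\ref{localtoflocalized} together with Lemma~\ref{domainofderivlemma}. Two small imprecisions worth flagging: the closedness of $\mathcal{K}_j$ and the inclusion $\mathcal{D}_f\subset D(\mathcal{K}_j^{\varepsilon,\eta})$ come from Lemmas~\ref{closureofderivation} and~\ref{domainofderivlemma} (a dissipativity argument and an $\|\cdot\|_f$-continuity estimate), not from the thermodynamic-limit invariance lemmas of Appendix~\ref{technicalApp}; and the reason $\mathcal{K}_1^{\varepsilon,\eta}(t)$ is a derivation is simply that $\mathcal{K}_1^{\varepsilon,\eta,\Lambdak}(t)=-\mathrm{i}\,\mathrm{ad}_{A_1^{\varepsilon,\eta,\Lambdak}(t)}$ is an inner derivation and its strong limit is again a densely defined derivation (Lemma~\ref{cauchy4}), rather than anything about ``$\mathcal{I}$ applied to a commutator being again a derivation.''
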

\begin{proof}
	The result follows by application of Lemma \ref{resummation3} in combination with the triangle inequality and Proposition \ref{asymptoticexp2}. 
	The generalisation to $A \in \mathcal{D}_f$ follows from Lemma \ref{localtoflocalized} and Lemma \ref{domainofderivlemma}.
\end{proof}
As mentioned above, an important corollary of our results is a rigorous justification of linear response theory and Kubo's formula in the thermodynamic limit (cf.\ Theorem~4.1 in \cite{Teu17} and Theorem~3.3 in \cite{HT}   for further context). 
\begin{cor} {\rm (Response theory to all orders)}\label{responsetheorytoallorders} ~\\ Consider   time-independent infinite volume interactions  $\Psi_{H_0}$, $\Psi_{H_1}$ and $v_\infty$ that satisfy the assumptions spelled out in Section~\ref{sec:ass}.
 Let $f: \mathbb{R} \to [0,1]$ be a smooth  `switching' function satisfying $f(t) = 0$ for $t \le -1$ and $f(t) = 1$ for $t \ge 0$. Moreover, let $\mathfrak{U}_{t,t_0}^{\varepsilon, \eta,f}$ be the Heisenberg time-evolution on $\mathcal{A}$ generated by $\Psi_{H^\epsi} := \Psi_{H_0} + \epsi f(t) (V_{v_\infty}  + \Psi_{H_1})$ with adiabatic parameter $\eta \in (0,1]$. Let $g \in \mathcal{S}$ and define for $A \in \mathcal{D}_{g}$ the response function 
	\begin{equation*}
	\sigma_A^{\varepsilon, \eta, f}(t ) := P_*(\mathfrak{U}_{t,-1}^{\varepsilon, \eta,f}\Ab{A}) - P_*(A)
	\end{equation*}
	and for $j \in \mathbb{N}$ the $j$th order response coefficient as 
	\begin{equation*}
	\sigma_{A,j}:= P_*(\mathcal{K}_j^{}\Ab{A}), 
	\end{equation*}
	where the $\mathcal{K}_j^{}$'s were defined in Proposition \ref{asymptoticexp3} and 
	\begin{equation*}
	\sigma_{A,1} = - \mathrm{i} \lim\limits_{k\to \infty} \mathrm{tr}\left(P_*^{\Lambdak} \left[\mathcal{I}_{H_0^{\Lambdak}} (V^{ \Lambdak} ), A\right]\right)
	\end{equation*}
	is given by the thermodynamic limit of Kubo's formula. 
	Then for any $n,m \in \mathbb{N}$ and $g \in \mathcal{S}$  there exists a constant $C_{n,m,g}$ independent of $\varepsilon$, such that for any $A \in \mathcal{D}_g \subset \mathcal{A}$ and all $t \ge 0$
	\begin{equation*}
	\sup_{\eta \in [\varepsilon^{m}, \varepsilon^{\frac{1}{m}}]} \left\vert \sigma_A^{\varepsilon, \eta, f}(t )  - \sum_{j=1}^{n} \varepsilon^j \sigma_{A,j}\right\vert \le C_{n,m,g} \, \varepsilon^{n+1}  \left(1+t^{d+1}\right) \,  \Vert A \Vert_g\, . 
	\end{equation*}
\end{cor}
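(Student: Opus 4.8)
The plan is to derive Corollary~\ref{responsetheorytoallorders} from the two previously stated results about the super-adiabatic NEASS, namely the intertwining bound \eqref{adiabbound} of Theorem~\ref{existenceofneass3} and the asymptotic expansion \eqref{expansionbound} of Proposition~\ref{asymptoticexp3}, specialised to the time-dependent Hamiltonian $\Psi_{H^\epsi} = \Psi_{H_0} + \epsi f(t)(V_{v_\infty} + \Psi_{H_1})$ with the switching function $f$. First I would record that the hypotheses on the time-independent interactions $\Psi_{H_0},\Psi_{H_1},v_\infty$, together with the smoothness and boundedness of $f$ and the fact that $f$ is eventually constant, imply that $\Psi_{H^\epsi}(\cdot)$ satisfies all the standing assumptions of Section~\ref{sec:ass} on any interval $I=[-1,t_{\max}]$; in particular the gap condition for $H_0$ holds for all $t$ since $H_0$ itself is time-independent and gapped, and the gapped limit projection is the constant $P_*$. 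So Theorem~\ref{existenceofneass3} and Proposition~\ref{asymptoticexp3} apply with $t_0=-1$.

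Next I would use the stationarity properties of the NEASS at the two endpoints of the switching. At $t=-1$ all time derivatives of $H^\epsi$ vanish and the perturbation $\epsi f(-1)(V_{v_\infty}+\Psi_{H_1})=0$ vanishes, so property~(4) of Theorem~\ref{existenceofneass3} gives $\Pi^{\epsi,\eta}(-1)=P_*$. For $t\ge 0$ all time derivatives of $H^\epsi$ vanish (since $f\equiv 1$ there), so property~(3) gives $\Pi^{\epsi,\eta}(t)=\Pi^{\epsi,0}(t)$, the NEASS of the static Hamiltonian $\Psi_{H_0}+\epsi(V_{v_\infty}+\Psi_{H_1})$, which is exactly the object whose expansion coefficients are the $\sigma_{A,j}=P_*(\mathcal{K}_j\Ab{A})$ with $\mathcal{K}_j:=\mathcal{K}_j^{\epsi,0}$ independent of $\epsi$ (and $\mathcal{K}_0=\mathbf 1_{\mathcal A}$). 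Then I would write, for $t\ge 0$ and $A\in\mathcal{D}_g$,
\begin{align*}
\sigma_A^{\epsi,\eta,f}(t) &= P_*(\mathfrak{U}_{t,-1}^{\epsi,\eta,f}\Ab{A}) - P_*(A) \\
&= \Pi^{\epsi,\eta}(-1)\big(\mathfrak{U}_{t,-1}^{\epsi,\eta,f}\Ab{A}\big) - \Pi^{\epsi,\eta}(t)(A) + \Pi^{\epsi,0}(t)(A) - P_*(A)\,,
\end{align*}
so that $\sigma_A^{\epsi,\eta,f}(t) = R_1 + R_2$ with $R_1$ the intertwining error controlled by \eqref{adiabbound} and $R_2 = \Pi^{\epsi,0}(t)(A)-P_*(A)$ expanded via \eqref{expansionbound} at $\eta=0$, which contributes $\sum_{j=1}^n \epsi^j\sigma_{A,j}$ plus an $\Or(\epsi^{n+1})$ error (using $\mathcal{K}_0=\mathbf 1$ so the $j=0$ term is $P_*(A)$ and cancels).

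The final step is bookkeeping of the error orders uniformly over $\eta\in[\epsi^m,\epsi^{1/m}]$. The expansion error in $R_2$ is $C_n\,\epsi^{n+1}\|A\|_g$ by Proposition~\ref{asymptoticexp3} evaluated at $\eta=0$ (so the $\eta^{n+1}$ term is absent). The intertwining error in $R_1$ is, by \eqref{adiabbound} with $t_0=-1$, bounded by $C_N\,\eta^{-(d+1)}(\epsi^{N+1}+\eta^{N+1})(1+(1+t)^{d+1})\|A\|_g$ for any $N$; on the window $\eta\ge\epsi^m$ one has $\epsi^{N+1}\eta^{-(d+1)}\le\epsi^{N+1-m(d+1)}$, and on $\eta\le\epsi^{1/m}$ one has $\eta^{N+1-(d+1)}\le\epsi^{(N+1-(d+1))/m}$, so choosing $N=N(n,m,d)$ large enough makes both pieces $\Or(\epsi^{n+1})$; absorbing $(1+(1+t)^{d+1})$ into $C(1+t^{d+1})$ gives the claimed bound with a constant $C_{n,m,g}$ independent of $\epsi$. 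I expect the only real subtlety to be this juggling of the two asymptotic parameters so that the $\eta^{-(d+1)}$ loss in \eqref{adiabbound} is beaten on both ends of the interval $[\epsi^m,\epsi^{1/m}]$; everything else is a direct concatenation of the quoted results with the triangle inequality, plus the identification of $\sigma_{A,1}$ with the stated Kubo expression, which is just the $j=1$ case of the formula for $\mathcal{K}_1^{\epsi,0}$ in Proposition~\ref{asymptoticexp3} with $\dot H_0=0$ and $V=V_v$ time-independent.
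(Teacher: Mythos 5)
Your proposal is correct, and the paper itself states Corollary~\ref{responsetheorytoallorders} without giving an explicit proof, so there is nothing to compare against other than to check that your argument is sound — which it is. The key steps you identify are exactly the right ones: (i) the switched Hamiltonian $\Psi_{H_0}+\epsi f(t)(V_{v_\infty}+\Psi_{H_1})$ satisfies the standing assumptions because $H_0$ is static and gapped and $f$ is smooth with uniformly bounded derivatives, so the constants of Theorem~\ref{existenceofneass3} and Proposition~\ref{asymptoticexp3} are uniform in the interval $[-1,t_{\max}]$; (ii) smoothness of $f$ together with $f\equiv 0$ on $(-\infty,-1]$ forces all derivatives to vanish at $t=-1$, so property~(4) of Theorem~\ref{existenceofneass3} gives $\Pi^{\epsi,\eta}(-1)=P_*$; (iii) smoothness and $f\equiv 1$ on $[0,\infty)$ forces all derivatives to vanish at $t\ge 0$, so property~(3) gives $\Pi^{\epsi,\eta}(t)=\Pi^{\epsi,0}(t)$, which turns the decomposition $\sigma_A^{\epsi,\eta,f}(t)=R_1+R_2$ into a clean split between the intertwining error and the static-NEASS expansion; and (iv) on the window $\eta\in[\epsi^m,\epsi^{1/m}]$ one can beat the $\eta^{-(d+1)}$ loss in \eqref{adiabbound} on both ends by taking the intermediate expansion order $N\ge\max(n+m(d+1),\,d+m(n+1))$.

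One small point you could make more explicit: the expansion \eqref{expansionbound} is stated for $\eta\in(0,1]$ with coefficients $\mathcal{K}_j^{\epsi,\eta}(t)$ that in principle depend on $\eta$. The cleanest way to see that the $R_2$ error is $\Or(\epsi^{n+1})$ without an $\eta$-contribution is to note that for $t\ge 0$ (time derivatives vanishing) the $A_\mu^{\epsi,\eta}$ reduce to their constant terms $A_\mu^{(0)}$, hence $\mathcal{K}_j^{\epsi,\eta}(t)=\mathcal{K}_j$ is $\eta$-independent and so is the left-hand side $\Pi^{\epsi,0}(t)(A)-\sum_{j=0}^n\epsi^j P_*(\mathcal{K}_j\Ab{A})$; since \eqref{expansionbound} holds for every $\eta\in(0,1]$, letting $\eta\downarrow 0$ gives the $\eta$-free error bound $C_n\epsi^{n+1}\|A\|_g$. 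Also, $\sigma_{A,1}$ comes out with $V^\Lambda=V^\Lambda_v+H_1^\Lambda$ (the full perturbation), not just $V_v$; your reasoning is right, the notation is just slightly off. Neither affects the validity of the argument.
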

\subsection{General space-time adiabatic theorem}\label{gensec}
We now formulate a  version of the space-time adiabatic theorem that allows for more general initial data: instead of $\rho_0 = P_*(t_0)$ we now merely assume that $\rho_0$ is a gapped limit state at $t_0 \in I$ as defined in Remark~\ref{gappedremark}. As it is well known in adiabatic theory, for  $\rho_0\not= P_*(t_0)$   we also need to control the adiabatic evolution `inside'  $P_*(t)$. For finite systems and at leading order this evolution is Kato's parallel transport within the vector bundle defined by the smooth family of projections $P_*^{\Lambdak}(t)$ with respect to the connection induced by the trivial connection on $\R\times \mathfrak{F}_{\Lambdak}$, the so called Berry connection. Clearly, in order to obtain better than leading order adiabatic approximations, also the parallel transport needs to be adjusted. The following theorem is the thermodynamic limit version of Theorem \ref{existenceofneassfinite} proved in  \cite{Teu17}.
Its proof is given in Section \ref{proofs} and the main results from the previous section follow from it by performing a resummation (see Appendix \ref{resummation}) and a generalisation to observables $A \in \mathcal{D}_f$ (see Appendix \ref{localtoflocalizedappendix}).

\begin{thm}\label{existenceofneass2} {\rm (General space-time adiabatic theorem)}\\ Consider a fermionic system on $\Z^d$ described by a time-dependent infinite volume interaction  $\Psi_{H^\epsi} = \Psi_{H_0} + \epsi (V_{v_\infty}  + \Psi_{H_1})$ satisfying the assumptions spelled out in Section~\ref{sec:ass}. Fix $t_0 \in I$, let $\rho_0$ be a gapped limit state at $t_0 \in I$ (cf.\   Remark~\ref{gappedremark}) and let $\mathfrak{U}_{t,t_0}^{\varepsilon, \eta}$ be the Heisenberg time-evolution on $\mathcal{A}$ generated by $H^{\varepsilon}(t)$ with adiabatic parameter $\eta \in (0,1]$. 
Then for any $\varepsilon, \eta \in (0,1]$ and $t \in I$ there exist two sequences $(\alpha_{n,t,t_0}^{\varepsilon, \eta})_n$ and $(\beta_n^{\varepsilon, \eta}(t))_n$ of families of automorphisms of $\mathcal{A}$, the super-adiabatic parallel transport and the super-adiabatic transformation, such that  
 the states defined by
\begin{equation*}
\Pi_n^{\varepsilon, \eta}(t)(A) :=\rho_0 (  \,  \beta_n^{\varepsilon, \eta}(t) \circ \alpha_{n,t,t_0}^{\varepsilon, \eta} \Ab{A})
\end{equation*}
almost intertwine  the adiabatic time evolution in the following sense: There exists a constant $C_n$ such that for any $ t \in I$ and for all finite $X \subset \Gamma$ and all $A \in \mathcal{A}_X \subset \mathcal{A}$
\begin{equation} \label{eq:adiabthmwithoutresum}
\left\vert \Pi_n^{\varepsilon, \eta}(t_0)(\mathfrak{U}_{t,t_0}^{\varepsilon, \eta}\Ab{A}) - \Pi_n^{\varepsilon, \eta}(t)(A) \right\vert     \le \ C_n \ \frac{\varepsilon^{n+1} + \eta^{n+1}}{\eta^{d+1}} \left(1+\vert t- t_0\vert^{d+1}\right)  \Vert A \Vert \, \vert X \vert^2\,.
\end{equation}
Moreover,
\begin{itemize}
\item[(a)] The super-adiabatic parallel transport  $(\alpha_{n,t,t_0}^{\varepsilon, \eta})_n$ is a co-cycle of automorphisms that satisfies $\alpha_{n,t_0,t_0}^{\varepsilon, \eta} = \mathbf{1}_{\mathcal{A}}$ and $P_*(t_0) \circ\alpha_{n,t,t_0}^{\varepsilon, \eta} = P_*(t)$ for every $n \in \mathbb{N}$ and $t \in I$.  
Thus, when acting with  $ P_*(t_0)$, $\alpha_{n,t,t_0}^{\varepsilon, \eta}$ coincides with the standard parallel transport  (a.k.a.\ spectral flow). 
\item[(b)] The super-adiabatic transformations $\beta_n^{\varepsilon, \eta}(t)$ are near-identity automorphisms   only depending on $H^{\varepsilon}$ and its time derivatives at time $t\in I$. By near-identity we mean that $\beta_n^{\varepsilon, \eta}(t)$ is of the form $\beta_n^{\varepsilon, \eta}(t) = \mathrm{e}^{\mathrm{i} \varepsilon\mathcal{L}_{S^{\varepsilon, \eta}_n(t)}}$ for an \SLT-operator $\varepsilon S^{\varepsilon, \eta}_n(t)$ that has a thermodynamic limit. 

 Furthermore, if
for some $t\in I$ all time-derivatives of $H^{\varepsilon} $ vanish at time $t$, then 
 $\beta_n^{\varepsilon, \eta}(t) = \beta_n^{\varepsilon, 0}(t)$. If, in addition, $V( t)=0$, then  $\beta_n^{\varepsilon, \eta}(t) = \beta_n^{\varepsilon, 0}(t) = \mathbf{1}_{\mathcal{A}}$. 
\end{itemize}
\end{thm}
Note that $\alpha_{0,t,t_0}^{0, \eta}$ is just the thermodynamic limit of Kato's parallel transport within the ground state spaces of $H_0(\cdot)$.

For simplicity, we state the explicit expansion of the super-adiabatic NEASS as the thermodynamic limit version of Proposition \ref{asymptoticexpfinite} only for the case of a gapped limit projection. The general case could be handled as in Theorem~3.3 in \cite{MT}. 
\begin{prop} \label{asymptoticexp2} {\rm (Asymptotic expansion of the super-adiabatic NEASS)}\\
Under the assumptions of Theorem \ref{existenceofneass2} and if $\rho_0 = P_*(t_0)$, there exist time-dependent linear maps $\mathcal{K}_j^{\varepsilon, \eta} : \Aloc  \to \mathcal{A}$, $j \in \mathbb{N}_0$, where $\mathcal{K}_j^{\varepsilon,0}$ is independent of~$\varepsilon$, such that for any $n\ge m \in \mathbb{N}$ there exists a constant $C_{n,m}$ such that for any finite $X \subset \Gamma$ and $A \in \mathcal{A}_X \subset \mathcal{A}$ it holds that 
\begin{align*} 
\sup_{t \in I} \ \left\vert \Pi_n^{\varepsilon, \eta}(t)(A)- \sum_{j=0}^{m} \varepsilon^j \, P_*(t)\left(\mathcal{K}_j^{\varepsilon, \eta}(t)\Ab{A}\right)\right\vert  \le  C_{n,m} \left(\varepsilon^{m+1} \, +\eta^{m+1}\right)   \Vert A \Vert \, \vert X\vert^2.
\end{align*}
The constant term is $\mathcal{K}_0^{\varepsilon, \eta}(t) = \mathbf{1}_{\mathcal{A}}$, which shows that $\beta_n^{\varepsilon, \eta}(t)$ is indeed a near-identity automorphism when $\varepsilon, \eta \ll 1$. The linear term $\mathcal{K}_1^{\varepsilon, \eta}(t)$ is a densely defined derivation on $\mathcal{A}$ that satisfies
\[
P_*(t)\left(\mathcal{K}_1^{\varepsilon, \eta}(t)\Ab{A}\right) = \lim\limits_{k\to \infty} \mathrm{i}\ \mathrm{tr}\left(P_*^{\Lambdak}(t) \left[\mathcal{I}_{H_0^{\Lambdak}(t)}\left( \tfrac{\eta}{\varepsilon} \mathcal{I}_{H_0^{\Lambdak}(t)} \left(\dot{H}_0^{\Lambdak}(t)\right)- V^{ \Lambdak} (t) \right), A\right]\right)
\]
for every $A \in \Aloc $. From this expression follows, that $P_*(t) \left(\mathcal{K}_1^{ \varepsilon,\eta}(t)\Ab{A}\right)$ depends on $V_v$ only through the limiting function $v_{\infty}$. 
\end{prop}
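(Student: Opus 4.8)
The plan is to obtain Proposition~\ref{asymptoticexp2} by taking the thermodynamic limit in the finite-volume asymptotic expansion of the super-adiabatic NEASS proved in \cite{Teu17} and quoted as Proposition~\ref{asymptoticexpfinite} in Appendix~\ref{finitevolumeresults}. For $\rho_0 = P_*(t_0)$ that result provides, for each $k$ with $\Lambdak$ large enough, linear maps $\mathcal{K}_j^{\Lambdak,\varepsilon,\eta}(t): \mathcal{A}_{\Lambdak}\to\mathcal{A}_{\Lambdak}$, $j\in\mathbb{N}_0$, obtained by expanding the super-adiabatic transformation $\beta_n^{\Lambdak,\varepsilon,\eta}(t) = \mathrm{e}^{\mathrm{i}\varepsilon\mathcal{L}_{S_n^{\Lambdak,\varepsilon,\eta}(t)}}$ in powers of $\varepsilon$ (the super-adiabatic parallel transport drops out against the maximally mixed state $\tfrac{1}{\kappa^{\Lambdak}(t_0)}P_*^{\Lambdak}(t_0)$), with $\mathcal{K}_0^{\Lambdak,\varepsilon,\eta}(t) = \mathbf{1}$, with $\mathcal{K}_j^{\varepsilon,0}$ independent of $\varepsilon$, and with the estimate
\[
\Big|\Pi_n^{\Lambdak,\varepsilon,\eta}(t)(A) - \sum_{j=0}^m \varepsilon^j \,\mathrm{tr}\Big(\tfrac{1}{\kappa^{\Lambdak}(t)}P_*^{\Lambdak}(t)\,\mathcal{K}_j^{\Lambdak,\varepsilon,\eta}(t)\Ab{A}\Big)\Big| \le C_{n,m}\,(\varepsilon^{m+1}+\eta^{m+1})\,\|A\|\,|X|^2
\]
holding \emph{uniformly in $k$ and in $t\in I$} for $n\ge m$ and $A\in\mathcal{A}_X$. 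The point to exploit is that, for $j\ge1$, $\mathcal{K}_j^{\Lambdak,\varepsilon,\eta}(t)$ acts on $A$ through finitely many nested commutators of $A$ with \SLT~operators built from $H_0^{\Lambdak}(t)$, $\dot H_0^{\Lambdak}(t)$, $V^{\Lambdak}(t)$ and their time derivatives by repeated application of the \SLT~inverse $\mathcal{I}_{H_0^{\Lambdak}(t)}$ of the Liouvillian, of products, and of the conditional expectations $\mathbb{E}_{\Lambdal}$.

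The core step is then to show that for each fixed $A\in\Aloc$ the limit $\mathcal{K}_j^{\varepsilon,\eta}(t)\Ab{A} := \lim_{k\to\infty}\mathcal{K}_j^{\Lambdak,\varepsilon,\eta}(t)\Ab{A}$ exists in $\mathcal{A}$, uniformly in $t\in I$, and defines a time-dependent linear map $\mathcal{K}_j^{\varepsilon,\eta}(t):\Aloc\to\mathcal{A}$. Here I invoke the technical results of Appendix~\ref{technicalApp}: by the assumptions of Section~\ref{sec:ass} the families $H_0$, $\dot H_0$, $H_1$ and $v$ have a thermodynamic limit; the \SLT~inverse $\mathcal{I}_{H_0^{\Lambdak}(t)}$ maps operator families (and also Lipschitz potentials) having a thermodynamic limit to \SLT~operator families having a thermodynamic limit, at the cost of a weaker but still $\mathcal{S}$-class decay; and the same invariance holds for products and for $\mathbb{E}_{\Lambdal}$. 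Consequently every \SLT~operator entering $\mathcal{K}_j^{\Lambdak,\varepsilon,\eta}(t)$ has a thermodynamic limit, and the commutator of such a bulk-convergent family with a \emph{fixed} local observable $A\in\mathcal{A}_X$ is norm-Cauchy in $\mathcal{A}$ by a Lieb--Robinson/locality estimate (as in Proposition~\ref{tdlofcauchyinteractions} and Appendix~\ref{infinitevolumedynamics}); iterating over the finitely many nested commutators gives convergence. Since the construction carried out in Section~\ref{proofs} gives $\Pi_n^{\varepsilon,\eta}(t)(A) = \lim_{k\to\infty}\Pi_n^{\Lambdak,\varepsilon,\eta}(t)(A)$, and since $\tfrac{1}{\kappa^{\Lambdak}(t)}P_*^{\Lambdak}(t)\to P_*(t)$ weak$^*$ by assumption, a triangle-inequality argument (a norm difference of the convergent coefficients plus weak$^*$-convergence evaluated on the fixed limit $\mathcal{K}_j^{\varepsilon,\eta}(t)\Ab{A}$) shows that each summand of the finite-volume expansion converges to $\varepsilon^j P_*(t)(\mathcal{K}_j^{\varepsilon,\eta}(t)\Ab{A})$. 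Letting $k\to\infty$ in the uniform finite-volume estimate then yields the asserted bound for $A\in\mathcal{A}_X$.

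It remains to identify the first two coefficients. The order-$\varepsilon^0$ term of $\mathrm{e}^{\mathrm{i}\varepsilon\mathcal{L}_{S}}$ is the identity, so $\mathcal{K}_0^{\varepsilon,\eta}(t) = \mathbf{1}_{\mathcal{A}}$; hence $\Pi_n^{\varepsilon,\eta}(t) = P_*(t) + \Or(\varepsilon+\eta)$, confirming that $\beta_n^{\varepsilon,\eta}(t)$ is a near-identity automorphism for $\varepsilon,\eta\ll1$. For the order-$\varepsilon^1$ term, the iterative construction of \cite{Teu17} gives $\varepsilon S_n^{\Lambdak,\varepsilon,\eta}(t) = \varepsilon\,\mathcal{I}_{H_0^{\Lambdak}(t)}\big(\tfrac{\eta}{\varepsilon}\mathcal{I}_{H_0^{\Lambdak}(t)}(\dot H_0^{\Lambdak}(t)) - V^{\Lambdak}(t)\big) + \Or\big((\varepsilon+\eta)^2\big)$, so $\mathcal{K}_1^{\Lambdak,\varepsilon,\eta}(t)\Ab{A} = \mathrm{i}\big[\mathcal{I}_{H_0^{\Lambdak}(t)}\big(\tfrac{\eta}{\varepsilon}\mathcal{I}_{H_0^{\Lambdak}(t)}(\dot H_0^{\Lambdak}(t)) - V^{\Lambdak}(t)\big), A\big]$; this is a commutator with a single \SLT~operator (whose thermodynamic limit exists by Appendix~\ref{technicalApp}, even though $V^{\Lambdak}(t)$ itself is only a Lipschitz potential), so $\mathcal{K}_1^{\varepsilon,\eta}(t)$ is a densely defined derivation on $\mathcal{A}$, and taking $k\to\infty$ inside the trace against $P_*^{\Lambdak}(t)$ produces the displayed formula for $P_*(t)(\mathcal{K}_1^{\varepsilon,\eta}(t)\Ab{A})$. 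Since both the thermodynamic limit of $V_v^{\Lambdak}(t)$ and the action of $\mathcal{I}_{H_0^{\Lambdak}(t)}$ see $v$ only through $v_\infty$ (Definition~\ref{cauchydefinition}(b) and Appendix~\ref{technicalApp}), $P_*(t)(\mathcal{K}_1^{\varepsilon,\eta}(t)\Ab{A})$ depends on $V_v$ only through $v_\infty$; at $\eta=0$ the ratios $\tfrac{\eta}{\varepsilon}$ disappear, $S_n^{\Lambdak,\varepsilon,0}(t)$ is a genuine power series in $\varepsilon$ with $\varepsilon$-independent coefficients (the static NEASS generator of \cite{MT}), and collecting powers of $\varepsilon$ yields $\varepsilon$-independent maps $\mathcal{K}_j^{\varepsilon,0}(t)$. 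The main obstacle in this program is precisely the invariance of the property ``having a thermodynamic limit'' under the \SLT~inverse $\mathcal{I}_{H_0^{\Lambdak}(t)}$ of the Liouvillian (together with the products and conditional expectations that accompany it in the construction), which is the principal new technical input provided by Appendix~\ref{technicalApp}, adapting \cite{NSY}; granting it, the remaining steps are bookkeeping of the finite-volume expansion and a routine weak$^*$/norm limit.
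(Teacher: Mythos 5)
Your proposal is correct and follows essentially the same approach as the paper's own proof: you take the thermodynamic limit in the finite-volume expansion of Proposition~\ref{asymptoticexpfinite}, using that the $\mathcal{K}_j^{\Lambdak}$ are nested commutators with \SLT~operators that have a thermodynamic limit (by Lemmas~\ref{cauchy1}--\ref{cauchy3}), which makes $\mathcal{K}_j^{\Lambdak}\Ab{A}$ norm-convergent for local $A$ (the content of Lemma~\ref{cauchy4}), and then pair the norm-limit of the argument with the weak$^*$-convergent states (the content of Lemma~\ref{stateautolemm}) before invoking the $k$-uniform error bound. The identification of $\mathcal{K}_0$ and $\mathcal{K}_1$ and the $v_\infty$-dependence also match the paper's argument.
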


\section{Proofs of the main results} \label{proofs}
In this section  we prove Theorem \ref{existenceofneass2} and Proposition~\ref{asymptoticexp2}. These are the infinite-volume versions of the statements in \cite{Teu17}, which we recall in  Appendix \ref{finitevolumeresults} for convenience of the reader.  We freely use the notation from there. 

We begin with two simple  statements about the convergence of automorphisms, states and linear transformations, which will be used freely in the following.
\begin{lem}\label{automorphismlemm}  {\rm (Thermodynamic limit of compositions of automorphisms)}\\
	Let $(\alpha_k)_k$ and $(\beta_k)_k$ be sequences of automorphisms	\begin{equation*}
		\alpha_k : \mathcal{A}_{\Lambdak} \to \mathcal{A}_{\Lambdak} \quad \text{and} \quad \beta_k : \mathcal{A}_{\Lambdak} \to \mathcal{A}_{\Lambdak}
	\end{equation*}
  that  are strongly convergent on the dense set $\mathcal{A}_{\mathrm{loc}}$, i.e.\  there exist automorphisms $\alpha$ and $\beta$ on $\mathcal{A}$ such that
	\begin{equation*}
		\alpha\Ab{A} = \lim\limits_{k\to \infty} \alpha_k\Ab{A} \quad \text{and} \quad \beta\Ab{A} = \lim\limits_{k \to \infty} \beta_k\Ab{A}
	\end{equation*}
	in norm for all $A \in \mathcal{A}_{\mathrm{loc}}$. Then the sequence of automorphisms $(\alpha_k \circ \beta_k)_k$ is strongly convergent to $\alpha \circ \beta$ on $\mathcal{A}_{\mathrm{\mathrm{loc}}}$. 
\end{lem}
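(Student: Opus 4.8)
The plan is to prove convergence of $(\alpha_k \circ \beta_k)\Ab{A}$ to $(\alpha \circ \beta)\Ab{A}$ for $A \in \Aloc$ by a standard $3\epsi$-type argument, with the one subtlety being that the $\alpha_k$ are only defined on $\mathcal{A}_{\Lambdak}$ (not on all of $\mathcal{A}$) while $\beta_k\Ab{A}$ lands in $\mathcal{A}_{\Lambdak}$, so compositions make sense for each fixed $k$, and we must pass to the limit by first approximating $\beta_k\Ab{A}$ by its limit $\beta\Ab{A} \in \mathcal{A}$ and then by a local element.

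First I would fix $A \in \Aloc$ and $\delta > 0$. Since $\beta\Ab{A} \in \mathcal{A} = \overline{\Aloc}$, choose $B \in \Aloc$ with $\Vert \beta\Ab{A} - B \Vert < \delta$; say $B \in \mathcal{A}_{\Lambda_l}$ for some $l$. For $k \ge l$ we have $B \in \mathcal{A}_{\Lambdak}$, so $\alpha_k\Ab{B}$ is defined, and by hypothesis $\alpha_k\Ab{B} \to \alpha\Ab{B}$ in norm. Now estimate, using that each $\alpha_k$ is an isometry (being a $*$-automorphism of a $C^*$-algebra),
\begin{align*}
\Vert \alpha_k\Ab{\beta_k\Ab{A}} - \alpha\Ab{\beta\Ab{A}} \Vert
&\le \Vert \alpha_k\Ab{\beta_k\Ab{A}} - \alpha_k\Ab{\beta\Ab{A}} \Vert + \Vert \alpha_k\Ab{\beta\Ab{A}} - \alpha_k\Ab{B} \Vert \\
&\quad + \Vert \alpha_k\Ab{B} - \alpha\Ab{B} \Vert + \Vert \alpha\Ab{B} - \alpha\Ab{\beta\Ab{A}} \Vert \\
&\le \Vert \beta_k\Ab{A} - \beta\Ab{A} \Vert + \delta + \Vert \alpha_k\Ab{B} - \alpha\Ab{B} \Vert + \delta\,.
\end{align*}
The first term tends to $0$ as $k \to \infty$ because $\beta_k\Ab{A} \to \beta\Ab{A}$ by hypothesis (note $A \in \Aloc$), and the third term tends to $0$ because $B \in \Aloc$ and $\alpha_k\Ab{B} \to \alpha\Ab{B}$. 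Hence $\limsup_{k\to\infty} \Vert \alpha_k\Ab{\beta_k\Ab{A}} - \alpha\Ab{\beta\Ab{A}} \Vert \le 2\delta$, and since $\delta > 0$ was arbitrary the limit is $0$, which is the claim.

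The only point requiring a little care — really the ``main obstacle,'' though it is minor — is bookkeeping the domains: $\beta_k$ maps $\mathcal{A}_{\Lambdak}$ into itself and $\alpha_k$ is only given on $\mathcal{A}_{\Lambdak}$, so $\alpha_k \circ \beta_k$ is an automorphism of $\mathcal{A}_{\Lambdak}$ and the composition $\alpha \circ \beta$ on $\mathcal{A}$ is well defined since $\beta(\mathcal{A}) = \mathcal{A}$. One should also record that $\alpha \circ \beta$, being a composition of $*$-automorphisms of $\mathcal{A}$, is again a $*$-automorphism, so the statement that $(\alpha_k \circ \beta_k)_k$ is ``strongly convergent to $\alpha \circ \beta$ on $\Aloc$'' is literally the assertion proved above. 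No Lieb--Robinson input or quantitative rate is needed here; this is a purely soft statement used to chain together the various convergent families appearing in the construction of $\Pi_n^{\varepsilon,\eta}(t)$.
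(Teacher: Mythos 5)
Your overall strategy is the same $3\varepsilon$-argument the paper uses (isometry of $*$-automorphisms plus approximation by a local element), and the conclusion is right, but as written your triangle-inequality chain contains an ill-defined term. You split off
$\Vert \alpha_k\Ab{\beta_k\Ab{A}} - \alpha_k\Ab{\beta\Ab{A}} \Vert$ and $\Vert \alpha_k\Ab{\beta\Ab{A}} - \alpha_k\Ab{B} \Vert$, but $\alpha_k$ is only given as a map on the finite-dimensional algebra $\mathcal{A}_{\Lambda_k}$, whereas $\beta\Ab{A}\in\mathcal{A}$ is a genuinely quasi-local element that has no reason to lie in $\mathcal{A}_{\Lambda_k}$; so $\alpha_k\Ab{\beta\Ab{A}}$ is simply not defined. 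Ironically, this is precisely the domain-bookkeeping subtlety you yourself flag in your closing paragraph, and the estimate trips over it.

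The repair is to group the terms so that $\alpha_k$ only ever acts inside its domain. For $k\ge l$ you have $\beta_k\Ab{A}, B\in\mathcal{A}_{\Lambda_k}$, so
\begin{equation*}
\Vert \alpha_k\Ab{\beta_k\Ab{A}} - \alpha_k\Ab{B}\Vert \le \Vert \beta_k\Ab{A} - B\Vert \le \Vert \beta_k\Ab{A} - \beta\Ab{A}\Vert + \Vert \beta\Ab{A} - B\Vert \le \Vert \beta_k\Ab{A} - \beta\Ab{A}\Vert + \delta\,,
\end{equation*}
and then add $\Vert\alpha_k\Ab{B}-\alpha\Ab{B}\Vert + \Vert\alpha\Ab{B}-\alpha\Ab{\beta\Ab{A}}\Vert \le \Vert\alpha_k\Ab{B}-\alpha\Ab{B}\Vert + \delta$; now every expression is well defined and the $\limsup\le 2\delta$ conclusion follows. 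The paper makes the same point by choosing $B=\beta_l\Ab{A}$ for a large fixed $l$, which lies in $\mathcal{A}_{\Lambda_l}\subset\mathcal{A}_{\Lambda_k}$ automatically and approximates $\beta\Ab{A}$ in norm; your choice of an arbitrary local approximant $B$ is equivalent once the grouping is corrected.
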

\begin{proof}
Let $l<k$ and $A \in \mathcal{A}_\Lambdal$.  Then we have: 
\begin{align*}
		 \Vert (\alpha_k\, \circ&\, \beta_k) \Ab{A}- (\alpha \circ \beta) \Ab{A} \Vert =  
		 \\ \le \quad  &\Vert \alpha_k\circ ( \beta_k- \beta_l) \Ab{A})\Vert + \Vert (\alpha_k- \alpha) \circ   \beta_l  \Ab{A} \Vert+ \Vert \alpha \circ (\beta_l  -  \beta) \Ab{A}\Vert
		  \\ \le \quad &\Vert  ( \beta_k- \beta_l) \Ab{A})\Vert  +  \Vert (\alpha_k- \alpha) \circ   \beta_l  \Ab{A} \Vert+  \Vert  (\beta_l  -  \beta) \Ab{A}\Vert
  \\ \le \quad &\Vert  ( \beta_k- \beta ) \Ab{A})\Vert  +  \Vert (\alpha_k- \alpha) \circ   \beta_l  \Ab{A} \Vert+  2 \Vert  (\beta_l  -  \beta) \Ab{A}\Vert\,.
		 	\end{align*}
	 Let $\varepsilon > 0 $ and choose $l \in \mathbb{N}$ such that  $2 \Vert  (\beta_l  -  \beta) \Ab{A}\Vert<\epsi$. 
	Then   $\beta_l\Ab{A} \in \mathcal{A}_{\mathrm{loc}}$ and thus 
$		\lim_{k\to\infty} \Vert (\alpha_k\circ \beta_k) \Ab{A}- (\alpha \circ \beta) \Ab{A} \Vert < \varepsilon$.
	Since $\varepsilon>0 $ was arbitrary, the claim follows.
	\end{proof}
\begin{lem} \label{stateautolemm} {\rm (Convergence of states composed with linear transformations)}\\ 
	Let $(\omega_k)_k$ be a sequence of states on $\mathcal{A}$ converging to $\omega$ in the weak$^*$-sense. Furthermore, let $(T_k)_k$ be a sequence of (linear) maps $
		T_k: \mathcal{A}_{\Lambdak} \to \mathcal{A}_{\Lambdak}$ that converges strongly on the dense set $\mathcal{A}_{\mathrm{loc}}$ to a (linear) map $T:\mathcal{A}_{\mathrm{loc}} \to \mathcal{A}$, i.e.\
$\lim_{k\to\infty} \| (T_k - T)\Ab{A}\| = 0$ 
	 for all $A \in \mathcal{A}_{\mathrm{loc}}$. 
	Then the sequence of states $(\omega_k \circ T_k)_k$ converges to $\omega \circ T $ in weak$^*$-sense on~$\mathcal{A}_{\mathrm{loc}}$. 
\end{lem}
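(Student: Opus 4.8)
The plan is to verify weak$^*$ convergence pointwise on $\mathcal{A}_{\mathrm{loc}}$, which is elementary. Fix $A \in \mathcal{A}_{\mathrm{loc}}$; then $A \in \mathcal{A}_\Lambdal$ for some $l \in \mathbb{N}$, and for every $k \ge l$ the element $T_k\Ab{A} \in \mathcal{A}_\Lambdak$ is well-defined, so that $(\omega_k \circ T_k)(A)$ makes sense for all large $k$. We must show $(\omega_k \circ T_k)(A) \to (\omega \circ T)(A)$.

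First I would split the difference via the triangle inequality as
\[
\big| \omega_k(T_k\Ab{A}) - \omega(T\Ab{A}) \big| \ \le \ \big| \omega_k\big((T_k - T)\Ab{A}\big) \big| + \big| \omega_k(T\Ab{A}) - \omega(T\Ab{A}) \big|\,.
\]
For the first summand I use that every state is a contraction: $|\omega_k(B)| \le \|B\|$ for all $B \in \mathcal{A}$ (since $\omega_k$ is positive and normalised, $\|\omega_k\| = \omega_k(\mathbf{1}_{\mathcal{A}}) = 1$). Hence the first summand is bounded by $\|(T_k - T)\Ab{A}\|$, which tends to $0$ by the assumed strong convergence of $T_k$ to $T$ on $\mathcal{A}_{\mathrm{loc}}$. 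For the second summand, note that $T\Ab{A}$ is a \emph{fixed} element of $\mathcal{A}$ (not merely of $\mathcal{A}_{\mathrm{loc}}$), and since $\omega_k \to \omega$ in the weak$^*$ topology on $\mathcal{A}$ — i.e.\ $\omega_k(B) \to \omega(B)$ for every $B \in \mathcal{A}$ — the second summand also tends to $0$. Combining the two, $(\omega_k \circ T_k)(A) \to (\omega \circ T)(A)$, and since $A \in \mathcal{A}_{\mathrm{loc}}$ was arbitrary this is exactly the claimed convergence.

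There is no real obstacle here; the only point requiring a moment's attention is that $T$ need not map into $\mathcal{A}_{\mathrm{loc}}$, so the second term must be controlled using the full weak$^*$ convergence of $\omega_k$ on $\mathcal{A}$ rather than merely on $\mathcal{A}_{\mathrm{loc}}$ — but this is exactly what is hypothesised. I would also remark that the statement does not assert that $\omega_k \circ T_k$ extends to a state on all of $\mathcal{A}$ for each fixed $k$ (indeed $T_k$ is only defined on $\mathcal{A}_\Lambdak$), which is why the conclusion is phrased as convergence on $\mathcal{A}_{\mathrm{loc}}$ only; in particular no $k$-uniform boundedness of the $T_k$ is needed.
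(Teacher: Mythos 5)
Your proof is correct and follows essentially the same route as the paper's: the same triangle-inequality split into $\omega_k((T_k-T)\Ab{A})$ and $(\omega_k-\omega)(T\Ab{A})$, with the first term controlled by $\|\omega_k\|=1$ and strong convergence of $T_k$, and the second by weak$^*$ convergence of $\omega_k$. Your added remarks about the domain of $T_k$ and the lack of any uniform-boundedness requirement are accurate and do not change the substance.
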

\begin{proof} 
	For $A \in \mathcal{A}_\mathrm{loc}$ and $k$ large enough, using the triangle inequality, we find
	\begin{align*}
		\vert (\omega_k \circ T_k)(A) - &(\omega \circ T)(A)\vert 
		\le  \Vert \omega_k \Vert \Vert (T_k - T)\Ab{A}\Vert + \vert (\omega_k - \omega)(T\Ab{A}) \vert \overset{k\to \infty}{\longrightarrow} 0\,. \qedhere
	\end{align*}
\end{proof}

\begin{proof}[Proof of Theorem \ref{existenceofneass2}] 
	We prove the existence of a thermodynamic limit for all automorphisms appearing in Theorem \ref{existenceofneassfinite}. 
	\\[3mm] 
\textbf{(a)} Let $V_n^{\varepsilon, \eta, \Lambdak}(t,t_0)$ denote the solution of the effective parallel transport  equation
\begin{equation} \label{effectiveschrodinger}
\mathrm{i}\eta \frac{\mathrm{d}}{\mathrm{d}t} V_n^{\varepsilon, \eta, \Lambdak}(t,t_0) = H_{\parallel,n}^{\varepsilon, \eta, \Lambdak}(t)\,  V_n^{\varepsilon, \eta, \Lambdak}(t,t_0)	
\end{equation}
with 
\[
H_{\parallel,n}^{\varepsilon, \eta, \Lambdak}(t) := \eta K^{\Lambdak}(t) + \sum_{\mu = 0}^{n} \varepsilon^{\mu}h_{\mu}^{\varepsilon,\eta, \Lambdak}(t)
\]
and initial datum $V_n^{\varepsilon, \eta, \Lambdak}(t_0,t_0) = \mathrm{id}$. We have, in particular, that  
\begin{equation*}
P_n^{\varepsilon, \eta, \Lambdak}(t) = V_n^{\varepsilon, \eta, \Lambdak}(t,t_0) \rho_0^{\Lambdak} V_n^{\varepsilon, \eta, \Lambdak}(t,t_0)^*. 
\end{equation*}
$ H_{\parallel,n}^{\varepsilon, \eta, \Lambdak}(t)$ is obtained from the Hamiltonian $H^{ \epsi, \Lambdak}(t) $ and its time derivatives at time $t$, which is assumed to have a thermodynamic limit, by taking sums of commutators and the \SLT~invers of the Liouvillian (see Theorem \ref{existenceofneassfinite}). So by applying Lemma \ref{cauchy1}, Lemma \ref{cauchy2} and Lemma \ref{cauchy3}, the effective Hamiltonian $ H_{\parallel,n}^{\varepsilon, \eta, \Lambdak}(t)$ has a thermodynamic limit and  the limit 
\begin{equation*}
\alpha_{n,t,t_0}^{\varepsilon, \eta} \Ab{A} := \lim\limits_{k \to \infty} V_n^{\varepsilon, \eta, \Lambdak}(t,t_0)^* \,A\, V_n^{\varepsilon, \eta, \Lambdak}(t,t_0)
\end{equation*}
exists in norm for all $A \in \Aloc $ by application of Proposition \ref{tdlofcauchyinteractions} and defines a co-cycle of   automorphisms $\alpha_{n,t,t_0}^{\varepsilon, \eta}$  of $\mathcal{A}$. \\[3mm]
\textbf{(b)} Also the family of operators $\varepsilon S_n^{\varepsilon,  \eta}(t)$ is obtained from the Hamiltonian $H_\epsi$ and its time derivatives at time $t$  by taking sums of commutators and the \SLT~invers of the Liouvillian (see Theorem \ref{existenceofneassfinite}). So by applying Lemma \ref{cauchy1}, Lemma \ref{cauchy2} and Lemma~\ref{cauchy3}, the family of operators $\varepsilon S_n^{\varepsilon,  \eta}(t)$ has a thermodynamic limit and  the limit
\begin{equation*}
\beta_n^{\varepsilon, \eta}(t)\Ab{A} \equiv \mathrm{e}^{-\mathrm{i}\varepsilon \mathcal{L}_{S_n^{\varepsilon, \eta}(t)}}\Ab{A} = \lim\limits_{k \to \infty} \mathrm{e}^{-\mathrm{i}\varepsilon S^{\varepsilon, \eta, \Lambdak}_n(t)} \,A\, \mathrm{e}^{ \mathrm{i}\varepsilon S^{\varepsilon, \eta, \Lambdak}_n(t)}
\end{equation*}
exists in norm for all $A \in \Aloc $ by application of Proposition \ref{tdlofcauchyinteractions} and defines a family $\beta_n^{\varepsilon, \eta}(t)$ of automorphism  of $\mathcal{A}$.
\\[3mm]
\textbf{(c)} Let $U^{\varepsilon, \eta, \Lambdak}(t,t_0)$ be the solution of the Schrödinger equation 
\begin{equation*}
\mathrm{i } \eta\frac{\mathrm{d}}{\mathrm{d}t} U^{\varepsilon, \eta, \Lambdak}(t,t_0) = H^{\varepsilon, \Lambdak} (t) \,U^{\varepsilon, \eta, \Lambdak}(t,t_0) 
\end{equation*}
with $U^{\varepsilon, \eta, \Lambdak}(t_0,t_0) = \mathrm{id}$ and thus, in particular, 
\begin{equation*}
\rho^{\varepsilon, \eta, \Lambdak}(t) = U^{\varepsilon, \eta, \Lambdak}(t,t_0)  \Pi_{n}^{\varepsilon, \eta, \Lambdak}(t_0)U^{\varepsilon, \eta, \Lambdak}(t,t_0)^*\,,
\end{equation*}
where 
\begin{equation*}
\Pi_n^{\varepsilon, \eta, \Lambdak}(t_0) = \mathrm{e}^{\mathrm{i}\varepsilon S^{\varepsilon, \eta, \Lambdak}(t_0)} \rho_0^{\Lambdak} \mathrm{e}^{-\mathrm{i} \varepsilon S^{\varepsilon, \eta, \Lambdak}(t_0)}.
\end{equation*}
Furthermore, again by   Proposition \ref{tdlofcauchyinteractions}, there exists a unique co-cycle of automorphisms $\mathfrak{U}_{t,t_0}^{\varepsilon, \eta}$ of $\mathcal{A}  $ such that the limit 
\begin{equation*}
\mathfrak{U}_{t,t_0}^{\varepsilon, \eta}\Ab{A} = \lim\limits_{k \to \infty} U^{\varepsilon, \eta, \Lambdak}(t,t_0)^* A U^{\varepsilon, \eta, \Lambdak}(t,t_0)
\end{equation*}
exists in norm for all $A \in \Aloc $. 
\\[3mm]
\textbf{(d)} Now, we use the cyclicity of the trace to shift the time dependence of  $	\rho^{\varepsilon, \eta, \Lambdak}(t)$ and $\Pi_n^{\varepsilon, \eta, \Lambdak}(t)$ to the observable $A \in \mathcal{A}_X$. We get
	\begin{align*}
	 \mathrm{tr}\left(\rho^{\varepsilon, \eta, \Lambdak}(t)A\right)  
	&= \mathrm{tr}\left(U^{\varepsilon, \eta, \Lambdak}(t,t_0)\mathrm{e}^{\mathrm{i} \varepsilon S^{\varepsilon, \eta, \Lambdak}_{n}(t_0)} \rho_0^{\Lambdak} \mathrm{e}^{-\mathrm{i}\varepsilon S^{\varepsilon, \eta, \Lambdak}_{n}(t_0)}U^{\varepsilon, \eta, \Lambdak}(t,t_0)^* A  \right) \\ 
	&=  \mathrm{tr}\left( \rho_0^{\Lambdak} \mathrm{e}^{-\mathrm{i} \varepsilon S^{\varepsilon, \eta, \Lambdak}_{n}(t_0)}U^{\varepsilon, \eta, \Lambdak}(t,t_0)^* A  U^{\varepsilon, \eta, \Lambdak}(t,t_0)\mathrm{e}^{\mathrm{i} \varepsilon S^{\varepsilon, \eta, \Lambdak}_{n}(t_0)}\right) \\
	& \overset{k \to \infty }{\longrightarrow} \quad \rho_{0}\circ\beta_{n}^{\varepsilon, \eta}(t_0)\circ\mathfrak{U}_{t,t_0}^{\varepsilon, \eta}(A)= \Pi_{n}^{\varepsilon, \eta}(t_0)( \mathfrak{U}_{t,t_0}^{\varepsilon, \eta} \Ab{A}), 
	\end{align*}  
	as $\alpha_n^{\varepsilon, \eta}(t_0) = \mathbf{1}_{\mathcal{A}}$, and, analogously, 
	\begin{align*}
	 \mathrm{tr}\left(\Pi^{\varepsilon, \eta, \Lambdak}_n(t) A\right) &= \mathrm{tr}\left(\mathrm{e}^{\mathrm{i} \varepsilon S^{\varepsilon, \eta, \Lambdak}_n(t)} V_n^{\varepsilon, \eta, \Lambdak}(t,t_0) \rho_0^{\Lambdak} V_n^{\varepsilon, \eta, \Lambdak}(t,t_0)^* \mathrm{e}^{-\mathrm{i} \varepsilon S^{\varepsilon, \eta, \Lambdak}_n(t)} A  \right) \\ 
	 &=   \mathrm{tr}\left(  \rho_0^{\Lambdak} V_n^{\varepsilon, \eta, \Lambdak}(t,t_0)^* \mathrm{e}^{-\mathrm{i} \varepsilon S^{\varepsilon, \eta, \Lambdak}_n(t)} A  \mathrm{e}^{\mathrm{i} \varepsilon S^{\varepsilon, \eta, \Lambdak}_n(t)} V_n^{\varepsilon, \eta, \Lambdak}(t,t_0)\right)\\  
	 &   \overset{k \to \infty }{\longrightarrow} \quad \rho_{0}\circ \alpha_{n,t,t_0}^{\varepsilon, \eta}\circ\beta_n^{\varepsilon, \eta}(t) (A)= \Pi_n^{\varepsilon, \eta}(t)(A).  
	\end{align*}
The limits exist by applying Lemma~\ref{automorphismlemm} and Lemma \ref{stateautolemm}. Since the bound in \eqref{adiabboundfinite} is uniform in the system size $k$, we  obtain the result.
The other statements in Theorem~\ref{existenceofneass2} are immediate consequences of the corresponding statements in Theorem \ref{existenceofneassfinite}. 
\end{proof}
In order to show also a the existence of a thermodynamic limit of the asymptotic expansion of the super-adiabatic NEASS, we need to prove that the maps $\mathcal{K}_j^{\varepsilon, \eta, \Lambdak}$ also converge  as $k \to \infty$. Since these maps are sums of nested commutators with SLT operators having a thermodynamic limit (cf.\ Theorem \ref{existenceofneassfinite} and the Lemmata in Appendix~\ref{technicalApp}), we need to show that maps of this type have a thermodynamic limit. 
\begin{lem} \label{cauchy4}  
	For any $n \in \mathbb{N}$, let $A_j \in \mathcal{L}_{I, \zeta, (n-j)d }$, $j= 1, ..., n$ all have a thermodynamic limit. Then, for the (time dependent) maps
	\begin{equation*}
		T_n^{\Lambdak}: \mathcal{A}_{\Lambdak}\to \mathcal{A}_{\Lambdak}:B \mapsto \mathrm{ad}_{A^{\Lambdal}_n}\circ \dots \circ \mathrm{ad}_{A^{\Lambdal}_1}\Ab{B},
	\end{equation*}
	the operators $(T_n^{\Lambdak}\Ab{B})_{k \in \mathbb{N}}$ form a Cauchy sequence (uniformly in time) converging to some $T_n\Ab{B} \in \mathcal{A}$ in norm for all $B \in \Aloc $. The resulting (time dependent) map $T_n: \Aloc  \to \mathcal{A}$ is linear {and, in particular, $T_1$ is a densely defined derivation.}
\end{lem}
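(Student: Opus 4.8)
The plan is to expand the nested commutator into a sum over configurations of interaction supports, to establish a cluster bound that is uniform in the box size, and then to pass to the thermodynamic limit by a diameter-splitting argument. Fix $B\in\mathcal{A}_{\Lambdal}\subset\Aloc$ and write $\Psi_{A_j}$ for the infinite volume interaction with $\Phi_{A_j}\stackrel{\rm t.d.}{\rightarrow}\Psi_{A_j}$. First I would expand each SLT operator into its interaction terms,
\[
T_n^{\Lambdak}\Ab{B}=\sum_{X_1,\dots,X_n\subset\Lambdak}\mathrm{ad}_{\Phi_{A_n}^{\Lambdak}(X_n)}\circ\cdots\circ\mathrm{ad}_{\Phi_{A_1}^{\Lambdak}(X_1)}\Ab{B}\,,
\]
and note that, since commutators of even operators supported on disjoint sets vanish, a configuration $(X_1,\dots,X_n)$ contributes only if it is \emph{connected to $\Lambdal$}, i.e.\ $X_j\cap\big(\Lambdal\cup X_1\cup\dots\cup X_{j-1}\big)\neq\emptyset$ for every $j$, and that the corresponding summand has norm at most $2^n\|B\|\prod_{j=1}^n\|\Phi_{A_j}^{\Lambdak}(X_j)\|$. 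The basic ingredient is then the (by now standard; cf.\ the nested-commutator estimates behind Appendix~\ref{technicalApp} and \cite{MT,NSY}) uniform cluster bound
\[
\sum_{(X_1,\dots,X_n)\ \mathrm{conn.}}\ \prod_{j=1}^n\|\Phi_{A_j}^{\Lambdak}(X_j)\|\ \le\ C(l,n,\zeta)\ \prod_{j=1}^n\|\Phi_{A_j}\|_{\zeta,(n-j)d}\,,
\]
obtained by summing the configuration indices from $j=n$ down to $j=1$, at each stage using $\sum_{X:\,X\cap Y\neq\emptyset}\|\Phi(X)\|\le|Y|\,F_\zeta(0)\,\|\Phi\|_{\zeta,0}$, the convolution constant $C_\zeta$ to resum chains of $F_\zeta$-factors, and $|X|\le(1+\mathrm{diam}\,X)^d$; the bookkeeping is exactly balanced because $X_i$ enters the connectivity constraint for each of $X_{i+1},\dots,X_n$ and is thus charged at most $(n-i)$ powers of $|X_i|$, which are absorbed precisely by $\|\Phi_{A_i}\|_{\zeta,(n-i)d}<\infty$. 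Since the right-hand side is finite and independent of $k$ (and, by smoothness and boundedness of the $A_j$, of $t\in I$), the series for $T_n^{\Lambdak}\Ab{B}$ converges absolutely and uniformly in $k$, so that $T_n^{\Lambdak}$ is well defined and linear, with norm uniformly bounded on each $\mathcal{A}_{\Lambdal}$.

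For the Cauchy property I would estimate, for $k<k'$, $\|T_n^{\Lambdak}\Ab{B}-T_n^{\Lambda_{k'}}\Ab{B}\|$ by replacing $\Phi_{A_j}^{\Lambdak}(X_j)$ by $\Phi_{A_j}^{\Lambda_{k'}}(X_j)$ one factor at a time, the generic difference term being at most $2^n\|B\|\big(\|\Phi_{A_i}^{\Lambdak}(X_i)-\Psi_{A_i}(X_i)\|+\|\Psi_{A_i}(X_i)-\Phi_{A_i}^{\Lambda_{k'}}(X_i)\|\big)$ times the product of the remaining factors. I then split the configuration sum at a diameter cutoff $R$ and set $M:=l+nR$. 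On the \emph{bulk part}, where $\mathrm{diam}\,X_j\le R$ for all $j$, any configuration connected to $\Lambdal$ lies inside $\Lambda_M$, so for $k,k'\ge M$ this part is, by the cluster bound localized to $\Lambda_M$, at most $C(l,n,R,\zeta)\sum_{i=1}^n\big(\|\Phi_{A_i}^{\Lambdak}-\Psi_{A_i}\|_{\zeta,0,\Lambda_M}+\|\Psi_{A_i}-\Phi_{A_i}^{\Lambda_{k'}}\|_{\zeta,0,\Lambda_M}\big)$, which tends to $0$ as $k,k'\to\infty$ for fixed $R$ by Definition~\ref{cauchydefinition}. On the \emph{tail part}, where $\mathrm{diam}\,X_{j_0}>R$ for some $j_0$ — which, for $k\ge M$, includes every configuration with some $X_j\not\subset\Lambdak$, since such a configuration reaches from $\Lambdal$ to outside $\Lambdak$ — I bound $\|\Phi^{\Lambdak}(X_j)-\Phi^{\Lambda_{k'}}(X_j)\|\le\|\Phi^{\Lambdak}(X_j)\|+\|\Phi^{\Lambda_{k'}}(X_j)\|$ throughout and rerun the cluster estimate while extracting from the factor $X_{j_0}$ an extra small factor: for $j_0\le n-1$ from $\sum_{X:\ x,y\in X,\ \mathrm{diam}\,X>R}\|\Phi(X)\|\le R^{-1}F_\zeta(d(x,y))\|\Phi\|_{\zeta,1}$, and for the outermost index $j_0=n$, where no moment weight is available, from the summability tail $\sup_z\sum_{w:\,d(z,w)>R/2}F_\zeta(d(z,w))\to0$. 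This bounds the tail by $C(l,n,\zeta)\,\vartheta(R)\prod_j\|\Phi_{A_j}\|_{\zeta,(n-j)d}$ with $\vartheta(R)\to0$, uniformly in $k,k'$ and $t$. Choosing first $R$ so that the tail is $<\delta/2$ and then $K\ge M$ so that the bulk part is $<\delta/2$ for $k,k'\ge K$ shows that $(T_n^{\Lambdak}\Ab{B})_k$ is Cauchy, uniformly in $t$; its norm limit is $T_n\Ab{B}=\sum_{(X_1,\dots,X_n)\,\mathrm{conn.}}\mathrm{ad}_{\Psi_{A_n}(X_n)}\circ\cdots\circ\mathrm{ad}_{\Psi_{A_1}(X_1)}\Ab{B}$, visibly linear in $B$.

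Finally, that $T_1$ is a densely defined derivation follows by passing the finite-volume Leibniz identity $[A_1^{\Lambdak},BC]=[A_1^{\Lambdak},B]\,C+B\,[A_1^{\Lambdak},C]$ to the norm limit for $B,C\in\Aloc$, using norm-continuity of multiplication on $\mathcal{A}$ and density of $\Aloc$. I expect the main obstacle to be the uniform cluster bound and, above all, its tail version inside the Cauchy step: one must organise the sum over connected configurations so that the volume factors $|X_j|$ are matched exactly to the available moments $(n-j)d$, and in the tail handle separately the outermost commutator $j_0=n$, for which no moment weight is at one's disposal and one must rely solely on the summability of $F_\zeta$. Once these uniform-in-$k$ bounds are in place, the diameter split turns the thermodynamic-limit hypothesis of Definition~\ref{cauchydefinition} into the Cauchy property essentially mechanically.
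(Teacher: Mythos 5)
Your proposal follows the same overall strategy as the paper's proof: a telescoping (one-factor-at-a-time) decomposition of $T_n^{\Lambdal}-T_n^{\Lambdak}$, a bulk/tail split of the resulting configuration sum, and a quasi-locality (cluster) bound for nested commutators that is uniform in the box size. Where the paper simply cites this cluster bound (Lemma~B.3 in \cite{Teu17} combined with the estimate from Theorem~3.8~(ii) in \cite{NSY}), you re-derive a version of it, which is harmless. The genuine divergence is in how the tail is made small. The paper truncates \emph{spatially}: after applying the cluster bound it produces a factor
\[
\sup_{x\in X}\sum_{y\in \Lambdal\setminus \LambdaM} F_\zeta\bigl(d^{\Lambdal}(x,y)\bigr)\,,
\]
which goes to $0$ as $M\to\infty$ uniformly in $k,l$ purely from summability of $F_\zeta$ — no moment of the diameter weight is spent. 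You instead truncate by a \emph{diameter} cutoff $R$ on the interaction supports and attempt to extract $R^{-1}$ from one moment of $\Phi_{A_{j_0}}$ via
$\sum_{X:\,x,y\in X,\ \mathrm{diam}\,X>R}\|\Phi(X)\|\le R^{-1}F_\zeta(d(x,y))\|\Phi\|_{\zeta,1}$. But the cluster bound already charges $A_{j_0}$ with $|X_{j_0}|^{n-j_0}\lesssim \mathrm{diam}(X_{j_0})^{(n-j_0)d}$ to account for connectivity, and $A_{j_0}\in\mathcal{L}_{I,\zeta,(n-j_0)d}$ provides exactly $(n-j_0)d$ moments — there is no spare moment to pay for the $R^{-1}$ factor (this is already tight when $d=1$). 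Your argument for the outermost index $j_0=n$, which relies only on summability of $F_\zeta$, is exactly right and in fact is the mechanism the paper uses for \emph{every} index: rerun the cluster bound but, in the final convolution step, split $\sum_{y}F_\zeta(d^{\Lambdal}(x,y))$ into $y\in\LambdaM$ (bulk) and $y\notin\LambdaM$ (tail). If you replace your diameter-based split by this spatial split, the bookkeeping closes without any extra moment and your proof coincides with the paper's. The remaining parts of your argument — absolute convergence, linearity in $B$, the Leibniz rule passing to the limit for $T_1$, and invoking Definition~\ref{cauchydefinition} on the bulk — are correct and match the paper.
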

\begin{proof}
	Let $l \ge k \ge M$, $X \subset \Lambdak$ and $B \in \mathcal{A}_X$. We omit the dependence on   time in the notation. We have
	\begin{align*}
		 \left\Vert (T_n^{\Lambdal}  \right.- &\left. T_n^{\Lambdak})\Ab{B}\right\Vert   =  \left\Vert \left(\mathrm{ad}_{A^{\Lambdal}_n}\circ \dots \circ \mathrm{ad}_{A^{\Lambdal}_1}  - \mathrm{ad}_{A^{\Lambdak}_n}\circ .\dots \circ \mathrm{ad}_{A^{\Lambdak}_1}\right)\Ab{B} \right\Vert \\ 
		\le \quad &\sum_{j=1}^{n} \left\Vert \mathrm{ad}_{A^{\Lambdal}_n}\circ \dots \circ \mathrm{ad}_{\left(A_j^{\Lambdal}-A_j^{\Lambdak}\right)} \circ \dots \circ \mathrm{ad}_{A^{\Lambdak}_1}\Ab{B}  \right\Vert \\ \le \quad &\sum_{j=1}^{n} 2 \, C_n \Vert B \Vert \vert X \vert^n \Vert F \Vert_{\Gamma}^n    \cdot \prod_{i\neq j}\Vert \Phi_{A_i}\Vert_{I, \zeta, (n-i)d } \cdot { \left\Vert   \Phi_{A_j}^{\Lambdal}-\Phi_{A_j}^{\Lambdak}  \right\Vert_{\zeta, (n-j)d, \LambdaM}}\\  \quad + & \sum_{j=1}^{n} 2 \, C_n \Vert B \Vert \vert X \vert^n \Vert F \Vert_{\Gamma}^{n-1}  \cdot \prod_{i=1}^{n}\Vert \Phi_{A_i}\Vert_{I, \zeta, (n-i)d } \cdot \sup_{x \in X} \sum_{y \in \Lambdal \setminus\LambdaM} F_{\zeta}(d^{\Lambdal}(x,y)). 
	\end{align*}
	The first inequality follows from the triangle inequality, the second one is a combination of Lemma B.3 in \cite{Teu17} and the bound used in Theorem 3.8 (ii) in \cite{NSY}. By choosing $\LambdaM$ large enough, one can 
make the second term arbitrarily small.	
 Since the operators $A_j \in \mathcal{L}_{I,\zeta, (n-j)d }$ have a thermodynamic limit, this holds as well for the first term. So, we have proven that $(T_n^{\Lambdak}\Ab{B})_{k \in \mathbb{N}}$ is indeed a Cauchy sequence in $\mathcal{A}$ having a limit $T_n\Ab{B}$. The linearity of $T_n$ follow easily by considering limits. 
\end{proof}
\begin{proof}[Proof of Proposition~\ref{asymptoticexp2}]
First, the super-adiabatic NEASS has a thermodynamic limit by Theorem \ref{existenceofneass2}. Now, all the maps $\mathcal{K}_j^{\varepsilon, \eta, \Lambdak} : \mathcal{A}_{\Lambdak} \to \mathcal{A}_{\Lambdak}$ are sums of nested commutators with the operators $A^{\varepsilon, \eta, \Lambdak}_1, ... , A^{\varepsilon, \eta, \Lambdak}_j \in \mathcal{L}_{I,\mathcal{S}, \infty}$, which are obtained from the Hamiltonian by taking sums of commutators and inverses of the Liouvillian, so they have a thermodynamic limit. Since the bound in \eqref{expfinite} is uniform in the system size we may write `$\lim_{k \to \infty}$' instead of `$\sup_{\Lambdak:k\ge L}$' and the limits exist by applying Lemma~\ref{cauchy4} and Lemma \ref{stateautolemm}. 
The explicit expression for $\mathcal{K}_1^{\varepsilon, \eta}(t)$ follows from the construction of~$A_1^{\varepsilon, \eta}(t)$.
\end{proof}
 
\appendix
 \section{Time-dependent super-adiabatic state for finite systems} \label{finitevolumeresults}
 In this appendix, we restate the space-time adiabatic theorem and the asymptotic expansion of the super-adiabatic NEASS for finite-lattice systems from~\cite{Teu17}  as required in our proofs. 
 
 \begin{thm}  {\rm (Space-time adiabatic theorem for finite systems)}  \label{existenceofneassfinite} ~\\[1mm]   
 Let $H_0, H_1 \in \mathcal{L}_{I,\exp(-a \, \cdot),\infty }$ and $v \in \mathcal{V}_I$ and assume for $H_0$ the gap condition of Section~\ref{sec:ass}.
  Then there exist two sequences $(h_{\mu}^{\varepsilon, \eta})_{\mu \in \mathbb{N}}$ and  $(A_{\mu}^{\varepsilon, \eta})_{\mu \in \mathbb{N}}$ of \SLT-operator families having the following properties  (cf.~\cite{Teu17} for a more detailed discussion of these objects): 
 	\begin{itemize}
 		\item All $h_{\mu}^{\varepsilon, \eta}$ and $A_{\mu}^{\varepsilon, \eta}$ are polynomials of degree $\mu$ in $\frac{\eta}{\varepsilon}$ with coefficients in $\mathcal{L}_{I,\mathcal{S}, \infty }$. 
		 		\item $[h_{\mu}^{\varepsilon, \eta, \Lambdak}(t), P_*^{\Lambdak}(t)] = 0$ for all $t \in I$, $\mu \in \mathbb{N}_0$ and $\eta \in[0,1]$. 
 		\item If for some $t\in I$ all time-derivatives of  $H^{\varepsilon}$ vanish at time $t$,  then $A_{\mu}^{\varepsilon, \eta}(t) = A_{\mu}^{(0)}(t)$ for all $\mu \in \mathbb{N}$, where $A_{\mu}^{(0)}(t)$ is the constant term of the polynomial. If, in addition,  $V( t)=0$, then $A^{\varepsilon, \eta}_{\mu} (t)= A^{(0)}_{\mu} (t)= 0$. 
 	\end{itemize}
For every $n\in \N$	let
\begin{equation*}
 S_n^{\varepsilon, \eta} := \sum_{\mu = 1}^{n} \varepsilon^{\mu-1} A_{\mu}^{\varepsilon, \eta} \in \mathcal{L}_{I,\mathcal{S}, \infty }
 	\end{equation*}
	and for some $t_0\in I$   consider a family of initial states $\rho_0^{\Lambdak}$ with 
	\[
	\rho_0^{\Lambdak} = P^{\Lambdak}_*(t_0) \rho_0^{\Lambdak} P^{\Lambdak}_*(t_0)\,.
	\]
Let $P_n^{\varepsilon, \eta, \Lambdak}(t)$ be  the solution of the effective parallel transport equation\footnote{Here, $K^\Lambda(t) = \mathcal{I}^\Lambda_{H_0(t),g,\tilde{g}}(\dot{H}_0(t))$ is a self-adjoint \SLT~generator of the parallel transport within the vector-bundle $\Xi_{*,I}$ over $I$ defined by $t \mapsto P_*^\Lambda(t)$. The \SLT~coefficients $h_{\mu}^{\varepsilon, \eta, \Lambda}(t)$ of the effective Hamiltonian generate the dynamics within $\Xi_{*,I}$ and commute with $P_*^\Lambda(t)$. If $\rho_0^\Lambda = P_*^\Lambda(t_0)$, the effective Hamiltonian $\sum_{\mu = 0}^{n} \varepsilon^{\mu}h_{\mu}^{\varepsilon, \eta,\Lambda}(t)$ becomes irrelevant and the unique solution of the equation is $P_*^\Lambda(t)$. }
 	\begin{equation*}
 	\mathrm{i}\eta \frac{\mathrm{d}}{\mathrm{d}t} P_n^{\varepsilon, \eta, \Lambdak}(t) = \left[\eta K^{\Lambdak}(t) + \sum_{\mu = 0}^{n} \varepsilon^{\mu}h_{\mu}^{\varepsilon,\eta, \Lambdak}(t),  P_n^{\varepsilon, \eta, \Lambdak}(t)\right]
 	\end{equation*}
 	with $P_n^{\varepsilon, \eta, \Lambdak}(t_0) = \rho_0^{\Lambdak}$.
Then the super-adiabatic states defined by 
 	\begin{equation*}
 	\Pi_n^{\varepsilon, \eta, \Lambdak}(t):= \mathrm{e}^{\mathrm{i}\varepsilon S_n^{\varepsilon, \eta, \Lambdak}(t)} P_n^{\varepsilon, \eta, \Lambdak}(t) \mathrm{e}^{-\mathrm{i}\varepsilon S_n^{\varepsilon, \eta, \Lambdak}(t)}, 
 	\end{equation*}
 	 almost track  the real time evolution in the following sense: Let $\rho^{\varepsilon, \eta, \Lambdak}(t)$ be the solution of the Schrödinger equation
 	\begin{equation*}
 	\mathrm{i} \eta\frac{\mathrm{d}}{\mathrm{d}t} \rho^{\varepsilon, \eta, \Lambdak}(t) = [H^{\varepsilon,\Lambdak}(t),\rho^{\varepsilon, \eta, \Lambdak}(t)], \quad  \rho^{\varepsilon, \eta, \Lambdak}(t_0) = \Pi_n^{\varepsilon, \eta, \Lambdak}(t_0).  
 	\end{equation*}
 	Then, there exists a constant $C_n$, such that for any   $A \in \mathcal{A}_{X}\subset\mathcal{A}_\mathrm{loc}$, $\eta \in (0,1]$ and all $t \in I$
 \begin{equation}
 \label{adiabboundfinite} 
 	\sup_{ k}  \left\vert \mathrm{tr}\left( (\rho^{\varepsilon, \eta, \Lambdak}(t) -\Pi_n^{\varepsilon, \eta, \Lambdak}(t))A\right)\right\vert   \le  C_n \frac{\varepsilon^{n+1} +  \eta^{n+1}}{\eta^{d+1}}  \left(1+\vert t- t_0 \vert^{d+1}\right)  \Vert A \Vert \vert X \vert^2.
 \end{equation}		
 \end{thm}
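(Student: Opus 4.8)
The plan is to recall the structure of the argument of \cite{Teu17}; the statement is exactly the finite-volume super-adiabatic theorem proved there, and the only feature we actually exploit in Section~\ref{proofs} is that the constant $C_n$ in \eqref{adiabboundfinite} is independent of the box size $k$. The proof has three ingredients: an iterative, quasi-local construction of the generators $A_\mu^{\varepsilon,\eta}$ and of the effective-Hamiltonian blocks $h_\mu^{\varepsilon,\eta}$; a passage to a co-moving frame in which the super-adiabatic state $\Pi_n^{\varepsilon,\eta,\Lambdak}(t)$ becomes the fixed initial state and the true dynamics becomes a small perturbation of the effective parallel transport; and a Duhamel estimate which, with the help of Lieb--Robinson bounds, turns that smallness into \eqref{adiabboundfinite}.

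For the construction, the basic tool is the \SLT-inverse $\mathcal{I}_{H_0^\Lambdak(t)}$ of the Liouvillian (Appendix~\ref{technicalApp} and the footnote references): it maps \SLT-operator families to \SLT-operator families and inverts $\add_{H_0^\Lambdak(t)}$ on the block off-diagonal with respect to $P_*^\Lambdak(t)$, so that $\add_{H_0^\Lambdak(t)}\mathcal{I}_{H_0^\Lambdak(t)}\Ab{B} = B - \mathbb{D}_t\Ab{B}$, with $\mathbb{D}_t\Ab{B}$ the part of $B$ commuting with $P_*^\Lambdak(t)$. One conjugates $H^{\varepsilon,\Lambdak}(t)$ by $\mathrm{e}^{\mathrm{i}\varepsilon S_n^{\varepsilon,\eta,\Lambdak}(t)}$ and by the parallel transport generated by $\eta K^\Lambdak(t)$, expands the resulting time-dependent generator in powers of $\varepsilon$, and at order $\mu$ chooses $A_\mu^{\varepsilon,\eta}$ so as to remove the off-diagonal remainder and $h_\mu^{\varepsilon,\eta}$ to be its diagonal part; this is the bookkeeping of \cite{BDF,Teu17} and it produces $[h_\mu^{\varepsilon,\eta,\Lambdak}(t),P_*^\Lambdak(t)]=0$ together with the stated polynomial dependence of degree $\mu$ on $\eta/\varepsilon$ (one factor per application of $\mathcal{I}_{H_0^\Lambdak(t)}$ needed to process the slow time-variation). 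The stationarity properties are then immediate: if all time derivatives of $H^\varepsilon$ vanish at $t$ then $K^\Lambdak(t) = \mathcal{I}_{H_0^\Lambdak(t)}\Ab{\dot H_0^\Lambdak(t)} = 0$ and the iteration collapses to the time-independent one, so $A_\mu^{\varepsilon,\eta}(t) = A_\mu^{(0)}(t)$; if in addition $V(t)=0$ there is nothing off-diagonal to remove at any order, so $A_\mu^{(0)}(t)=0$. That the constructed families lie in $\mathcal{L}_{I,\mathcal{S},\infty}$ with norms bounded uniformly in $k$ is precisely what the \SLT-estimates of Appendix~\ref{technicalApp} provide (Leibniz rules for $\|\cdot\|_{\zeta,n}$, finiteness of the convolution constant $C_\zeta$, and the mapping properties of $\mathcal{I}_{H_0^\Lambdak(t)}$).

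For the estimate, one writes the difference $\mathrm{tr}((\rho^{\varepsilon,\eta,\Lambdak}(t)-\Pi_n^{\varepsilon,\eta,\Lambdak}(t))A)$ as $\mathrm{tr}(\rho_0^\Lambdak\,(W_n^{\varepsilon,\eta,\Lambdak}(t)^*\,A\,W_n^{\varepsilon,\eta,\Lambdak}(t)-A))$ up to terms of the same size, where $W_n^{\varepsilon,\eta,\Lambdak}$ is the co-moving propagator built from $U^{\varepsilon,\eta,\Lambdak}(\cdot,t_0)$, $\mathrm{e}^{\mathrm{i}\varepsilon S_n^{\varepsilon,\eta,\Lambdak}(\cdot)}$ and $V_n^{\varepsilon,\eta,\Lambdak}(\cdot,t_0)$ and satisfies $W_n^{\varepsilon,\eta,\Lambdak}(t_0)=\mathrm{id}$. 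By the construction above, $W_n^{\varepsilon,\eta,\Lambdak}$ solves a Schrödinger-type equation whose generator is an \SLT-operator family $R_n^{\varepsilon,\eta,\Lambdak}(t)$ of \SLT-norm $\mathcal{O}(\varepsilon^{n+1}+\eta^{n+1})$, uniformly in $k$; so the fundamental theorem of calculus turns the difference into $\tfrac{1}{\eta}\int_{t_0}^t \mathrm{tr}\big(\rho_0^\Lambdak\,[R_n^{\varepsilon,\eta,\Lambdak}(s),\widetilde A(s)]\big)\,\mathrm{d}s$ with $\widetilde A(s)$ the observable $A$ conjugated by the co-moving evolution. A Lieb--Robinson bound for that evolution --- whose generator is a sum of \SLT-operators of uniformly bounded norm, with time rescaled by $\eta$ --- shows that $\widetilde A(s)$ is, up to exponentially small tails, supported within distance $\mathcal{O}(|s-t_0|/\eta)$ of $X$; splitting $R_n^{\varepsilon,\eta,\Lambdak}(s)$ into its part near this enlarged support and its far part (the latter contributing only the exponential tail, again by a Lieb--Robinson estimate), the quasi-local commutator estimates bound the integrand by $C\,(\varepsilon^{n+1}+\eta^{n+1})\,\|A\|\,|X|^2\,(1+|s-t_0|/\eta)^d$. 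Integrating over a rescaled time interval of length $|t-t_0|/\eta$ produces the factor $(1+|t-t_0|^{d+1})/\eta^{d+1}$ (the prefactor $\eta^{-1}$ being cancelled by the $\eta$ from the change of variables, the surviving $\eta^{-(d+1)}$ coming from $(|t-t_0|/\eta)^{d+1}$), which is \eqref{adiabboundfinite}.

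The genuinely hard part is carrying out the first two ingredients \emph{uniformly in} $k$: one must verify that repeated application of $\mathcal{I}_{H_0^\Lambdak(t)}$ and of nested commutators never leaves the class of \SLT-operator families and never produces norms growing with $k$ --- this is where the logarithmic superadditivity of $\zeta$, the finiteness of $\|F_\zeta\|_\Gamma$ and $C_\zeta$, and the polynomial weights in $\|\cdot\|_{\zeta,n}$ are used --- and that the Lieb--Robinson constants for the co-moving evolution are likewise $k$-independent. All of this is done in \cite{Teu17} with the underlying \SLT-calculus essentially as in \cite{NSY}, so the proof here amounts to invoking that reference; the point we stress, and which Section~\ref{proofs} relies on, is precisely the $k$-uniformity of $C_n$ displayed in \eqref{adiabboundfinite}.
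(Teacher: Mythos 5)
The paper does not prove this theorem --- Appendix~\ref{finitevolumeresults} explicitly \emph{restates} it from \cite{Teu17} as a black box to be used in Section~\ref{proofs}, and your proposal ultimately does the same, deferring to that reference. Your intermediate sketch (iterative Liouvillian-inversion scheme producing the $A_\mu^{\varepsilon,\eta}$ and the diagonal blocks $h_\mu^{\varepsilon,\eta}$, co-moving-frame Duhamel estimate, Lieb--Robinson localisation yielding the $\eta^{-(d+1)}(1+|t-t_0|^{d+1})$ prefactor) is a faithful and technically accurate account of the argument in \cite{Teu17}, and you correctly isolate the $k$-uniformity of $C_n$ as the one feature the present paper actually relies on.
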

 For simplicity, we state the explicit expansion of the super-adiabatic NEASS only for the case of a full projection to the gapped part, i.e. $\rho_0 = P_*(t_0)$, and thus $P_n(t) = P_*(t)$. The general case was discussed  in Theorem~3.3 in \cite{MT}. 
 \begin{prop}  {\rm (Asymp.~exp.~of the super-adiabatic NEASS for finite systems)} \label{asymptoticexpfinite} \\ 
 	Under the assumptions of the Theorem \ref{existenceofneassfinite} and if $\rho_0 = P_*(t_0)$, there exist time-dependent linear maps $\mathcal{K}_j^{\varepsilon, \eta, \Lambdak} : \mathcal{A}_{\Lambdak} \to \mathcal{A}_{\Lambdak}$ for all $j \in \mathbb{N}_0$ such that for all $n \ge m \in \mathbb{N}$ there exists a constant $C_{n,m}$ such that for any local observable $A \in \mathcal{A}_X\subset\mathcal{A}_\mathrm{loc}$ it holds that 
 	\begin{align} \label{expfinite} \nonumber
 	\sup_{t \in I}\sup_{k} &\left\vert \mathrm{tr}\left(\Pi_n^{\varepsilon, \eta, \Lambdak}(t)A\right)- \sum_{j=0}^{m} \varepsilon^j \, \mathrm{tr}\left(P_*^{\Lambdak}(t)\, \mathcal{K}_j^{ \varepsilon, \eta, \Lambdak}(t)\Ab{A}\right)\right\vert \\ &\qquad\qquad\le C_{n,m} (\varepsilon^{m+1} +\eta^{m+1})  \Vert A \Vert \vert X\vert^2.
 	\end{align}
 	Each map $\mathcal{K}_j^{\varepsilon, \eta, \Lambdak} $ is given by a finite sum of nested commutators with the operators $A_{1}^{ \varepsilon, \eta, \Lambdak}(t), ... , A_{j}^{\varepsilon, \eta, \Lambdak}(t)$ constructed above such that $\mathcal{K}_j^{\Lambdak, \varepsilon, 0} $ is independent of $\varepsilon$. Explicitly, the first terms are
 	\begin{align*}
 	\mathcal{K}_0^{\varepsilon, \eta, \Lambdak}(t) = \mathbf{1}_{\mathcal{A}_{\Lambdak}} \qquad\text{and}\qquad \mathcal{K}_1^{ \varepsilon, \eta, \Lambdak}(t)\Ab{\cdot} = \mathrm{-i}[A_{1}^{ \varepsilon, \eta, \Lambdak}(t), \cdot\,].
 	\end{align*}
 \end{prop}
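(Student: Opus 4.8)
The plan is to extract Proposition~\ref{asymptoticexpfinite} from the explicit form of the super-adiabatic NEASS provided by Theorem~\ref{existenceofneassfinite}, together with the volume-uniform locality estimates of the SLT calculus; this is, essentially, the finite-lattice result of \cite{Teu17}. First I would observe that when $\rho_0 = P_*(t_0)$, i.e.\ $\rho_0^{\Lambdak} = P_*^{\Lambdak}(t_0)$, the constant-rank family $P_n^{\varepsilon,\eta,\Lambdak}(t) = P_*^{\Lambdak}(t)$ solves the effective parallel transport equation: by Theorem~\ref{existenceofneassfinite} one has $[h_\mu^{\varepsilon,\eta,\Lambdak}(t),P_*^{\Lambdak}(t)] = 0$ for every $\mu$, while $K^{\Lambdak}(t)$ generates the parallel transport of $t\mapsto P_*^{\Lambdak}(t)$ (see the footnote to Theorem~\ref{existenceofneassfinite}), so the equation collapses to the defining equation of the spectral flow of $P_*^{\Lambdak}$, and uniqueness forces $P_n^{\varepsilon,\eta,\Lambdak}(t) = P_*^{\Lambdak}(t)$. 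Hence $\Pi_n^{\varepsilon,\eta,\Lambdak}(t) = \mathrm{e}^{\mathrm{i}\varepsilon S_n^{\varepsilon,\eta,\Lambdak}(t)}\,P_*^{\Lambdak}(t)\,\mathrm{e}^{-\mathrm{i}\varepsilon S_n^{\varepsilon,\eta,\Lambdak}(t)}$, and by cyclicity of the trace
\[
\mathrm{tr}\big(\Pi_n^{\varepsilon,\eta,\Lambdak}(t)A\big) = \mathrm{tr}\Big(P_*^{\Lambdak}(t)\,\mathrm{e}^{-\mathrm{i}\varepsilon S_n^{\varepsilon,\eta,\Lambdak}(t)}\,A\,\mathrm{e}^{\mathrm{i}\varepsilon S_n^{\varepsilon,\eta,\Lambdak}(t)}\Big)\,.
\]

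Next I would expand the conjugation on the right. In the finite-dimensional algebra $\mathcal{A}_{\Lambdak}$ one has $\mathrm{e}^{-\mathrm{i}T}A\,\mathrm{e}^{\mathrm{i}T} = \sum_{j\ge 0}\tfrac{(-\mathrm{i})^j}{j!}\,\mathrm{ad}_T^{\,j}(A)$ with $T = \varepsilon S_n^{\varepsilon,\eta,\Lambdak}(t) = \sum_{\mu=1}^{n}\varepsilon^\mu A_\mu^{\varepsilon,\eta,\Lambdak}(t)$, a finite sum. Expanding each $\mathrm{ad}_T^{\,j}$ multilinearly in the $A_\mu^{\varepsilon,\eta,\Lambdak}(t)$ — which I treat as atoms, keeping their polynomial $\tfrac{\eta}{\varepsilon}$-dependence intact — and collecting equal powers of $\varepsilon$ leads to the definition
\[
\mathcal{K}_l^{\varepsilon,\eta,\Lambdak}(t)\Ab{A} := \sum_{j=0}^{l}\frac{(-\mathrm{i})^j}{j!}\sum_{\substack{\mu_1,\dots,\mu_j\ge 1\\ \mu_1+\cdots+\mu_j=l}}\mathrm{ad}_{A_{\mu_1}^{\varepsilon,\eta,\Lambdak}(t)}\circ\cdots\circ\mathrm{ad}_{A_{\mu_j}^{\varepsilon,\eta,\Lambdak}(t)}\Ab{A}\,,
\]
which is manifestly a finite sum of nested commutators of $A$ with $A_1^{\varepsilon,\eta,\Lambdak}(t),\dots,A_l^{\varepsilon,\eta,\Lambdak}(t)$. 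One reads off $\mathcal{K}_0^{\varepsilon,\eta,\Lambdak}(t) = \mathbf{1}_{\mathcal{A}_{\Lambdak}}$ and $\mathcal{K}_1^{\varepsilon,\eta,\Lambdak}(t)\Ab{\cdot} = -\mathrm{i}[A_1^{\varepsilon,\eta,\Lambdak}(t),\cdot\,]$; and since $A_\mu^{\varepsilon,0,\Lambdak}(t)$ is the $\varepsilon$-independent constant term of the polynomial by Theorem~\ref{existenceofneassfinite}, so is $\mathcal{K}_j^{\varepsilon,0,\Lambdak}$.

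The substantive step, which I expect to be the main obstacle, is the remainder estimate: I need to bound, uniformly in $t\in I$ and $k$,
\[
\Big\|\,\mathrm{e}^{-\mathrm{i}\varepsilon S_n^{\varepsilon,\eta,\Lambdak}(t)}A\,\mathrm{e}^{\mathrm{i}\varepsilon S_n^{\varepsilon,\eta,\Lambdak}(t)} - \sum_{l=0}^{m}\varepsilon^l\,\mathcal{K}_l^{\varepsilon,\eta,\Lambdak}(t)\Ab{A}\,\Big\| = \Big\|\sum_{j\ge 1}\frac{(-\mathrm{i})^j}{j!}\sum_{\substack{\mu_1,\dots,\mu_j\ge1\\ \mu_1+\cdots+\mu_j\ge m+1}}\mathrm{ad}_{\varepsilon^{\mu_1}A_{\mu_1}^{\varepsilon,\eta,\Lambdak}(t)}\circ\cdots\circ\mathrm{ad}_{\varepsilon^{\mu_j}A_{\mu_j}^{\varepsilon,\eta,\Lambdak}(t)}\Ab{A}\Big\|\,.
\]
The crude bound $\Vert\mathrm{ad}_T\Vert\le2\Vert T\Vert$ is useless here because $\Vert S_n^{\varepsilon,\eta,\Lambdak}(t)\Vert$ grows with $|\Lambdak|$; instead I would use the refined, volume-uniform Lieb--Robinson-type estimate of the SLT calculus (the technology underlying \cite{BDF,MT,Teu17}, cf.\ the estimate entering the proof of Lemma~\ref{cauchy4}), of the form
\[
\big\Vert\mathrm{ad}_{B_1}\circ\cdots\circ\mathrm{ad}_{B_j}\Ab{A}\big\Vert \le C^j\,\Vert A\Vert\,|X|^2\,\prod_{i=1}^j\Vert\Phi_{B_i}\Vert_{\zeta,(j-i)d}
\]
for $A\in\mathcal{A}_X$ and SLT operators $B_1,\dots,B_j$, with $C$ and $\zeta\in\mathcal{S}$ independent of $k$ and the power $|X|^2$ uniform in $j$. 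Since each $A_\mu^{\varepsilon,\eta,\Lambdak}(t)$ is a polynomial of degree $\mu$ in $\tfrac{\eta}{\varepsilon}$ with coefficients in $\mathcal{L}_{I,\mathcal{S},\infty}$, the identity $\varepsilon^\mu A_\mu^{\varepsilon,\eta} = \sum_{p=0}^{\mu}\varepsilon^{\mu-p}\eta^p A_\mu^{(p)}$ (a finite sum of operator families with all diameter-weighted norms uniformly bounded in $k$ and $t$) gives $\sup_{t\in I}\Vert\Phi^{\Lambdak}_{\varepsilon^\mu A_\mu^{\varepsilon,\eta}}(t)\Vert_{\zeta,N}\le C_{n,N}(\mu+1)(\varepsilon+\eta)^\mu$ for every fixed $N$. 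Inserting this, summing the resulting geometric-type series over $(\mu_1,\dots,\mu_j)$ with $\mu_1+\cdots+\mu_j\ge m+1$ and then over $j$ — convergent when $\varepsilon+\eta$ is small enough, the complementary regime being handled trivially since $\mathrm{e}^{-\mathrm{i}\varepsilon S_n^{\varepsilon,\eta,\Lambdak}(t)}A\,\mathrm{e}^{\mathrm{i}\varepsilon S_n^{\varepsilon,\eta,\Lambdak}(t)}$ and each $\mathcal{K}_l^{\varepsilon,\eta,\Lambdak}(t)\Ab{A}$ are bounded by $C\Vert A\Vert|X|^2$ uniformly in $k$ — bounds the remainder by $C_{n,m}(\varepsilon^{m+1}+\eta^{m+1})\Vert A\Vert|X|^2$.

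Finally, using $|\mathrm{tr}(P_*^{\Lambdak}(t)B)|\le \dim\mathrm{ran}\,P_*^{\Lambdak}(t)\,\Vert B\Vert\le\kappa\Vert B\Vert$ and absorbing $\kappa$ into the constant yields \eqref{expfinite} with suprema taken over $t\in I$ and $k\ge L$; the explicit formula for $\mathcal{K}_1$ was already obtained above, and the concrete expression for $A_1^{\varepsilon,\eta,\Lambdak}(t)$ quoted in the main text is substituted from the construction in Theorem~\ref{existenceofneassfinite}. The genuinely hard content is thus entirely in the refined, volume-uniform nested-commutator estimate, with its factor $|X|^2$ uniform in the number of commutators; everything else is bookkeeping of the resulting power series.
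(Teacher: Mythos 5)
The paper does not prove Proposition~\ref{asymptoticexpfinite}; it is restated from \cite{Teu17} as part of the imported finite-volume input, so there is no internal proof to compare against. Evaluated on its own, your structural scaffolding is sound: the collapse $P_n^{\varepsilon,\eta,\Lambdak}(t)=P_*^{\Lambdak}(t)$ when $\rho_0=P_*(t_0)$ (justified by the footnote to Theorem~\ref{existenceofneassfinite}), the cyclicity step, the Taylor expansion of $\mathrm{e}^{-\mathrm{i}T}A\,\mathrm{e}^{\mathrm{i}T}$ in powers of $\varepsilon$ with $T=\varepsilon S_n^{\varepsilon,\eta,\Lambdak}(t)$, and the resulting formula for $\mathcal{K}_l^{\varepsilon,\eta,\Lambdak}(t)$ as a finite sum of nested commutators with the $A_\mu$'s — giving $\mathcal{K}_0=\mathbf{1}$, $\mathcal{K}_1=-\mathrm{i}[A_1,\cdot\,]$, and $\varepsilon$-independence at $\eta=0$ — are all correct.

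However, the remainder estimate, which you yourself single out as the substantive step, rests on a false ingredient. The bound you invoke,
\[
\big\Vert\mathrm{ad}_{B_1}\circ\cdots\circ\mathrm{ad}_{B_j}\Ab{A}\big\Vert \le C^j\,\Vert A\Vert\,\vert X\vert^2\,\prod_{i=1}^j\Vert\Phi_{B_i}\Vert_{\zeta,(j-i)d},
\]
with the power $\vert X\vert^2$ claimed to be uniform in $j$, does not hold. The actual volume-uniform nested-commutator bound used in this paper — visible in the proof of Lemma~\ref{cauchy4} (where it is attributed to Lemma~B.3 of \cite{Teu17}) and in the proof of Lemma~\ref{domainofderivlemma} (quoted as Lemma~C.3 of \cite{MT}) — carries the factor $\vert X\vert^j$ for a $j$-fold nested commutator, growing with $j$. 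Your fall-back claim that $\mathcal{K}_l\Ab{A}$ is bounded by $C\Vert A\Vert\vert X\vert^2$ uniformly in $l$ is false for the same reason. The $\vert X\vert^2$ in the proposition is not a property of the nested-commutator norms at all; it comes from evaluating the remainder \emph{inside the trace against the uniformly finite-rank projection} $P_*^{\Lambdak}(t)$. One writes the remainder in Duhamel form with a single exposed commutator of an SLT operator against an automorphism-conjugated observable, and then applies a $P_*$-trace estimate of the type of Lemma~C.5 in \cite{MT} (used exactly for this purpose in the paper's proof of Lemma~\ref{resummation3}), which yields $\Vert A\Vert\,\vert X\vert^{1+1/q}$ and hence $\vert X\vert^2$ at $q=1$. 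Summing norm bounds on $j$-fold nested commutators directly, as you propose, cannot produce a power of $\vert X\vert$ uniform in $m$, so the gap is essential, not cosmetic.
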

 \begin{rmk}
 	The original finite-lattice adiabatic theorem proved in \cite{Teu17} covers also extensive observables  described by \SLT-interactions, where the trace gets replaced by the trace per unit volume. 
	To cover also this case here, we would need   additional assumptions on the Hamiltonian and the observable in order to guarantee the existence of a thermodynamic limit of a trace per unit volume. 
 \end{rmk}
\section{Infinite volume dynamics} \label{infinitevolumedynamics}
In this appendix, we recall and adjust to our needs the underlying theory for lifting the finite volume dynamics to the thermodynamic limit.
The basis for this are Lieb-Robinson bounds, see \cite{LR}. 
As  
$ 
\mathrm{e}^{\mathrm{i}t\mathcal{L}^{\Lambdak}_{H}} = \mathfrak{U}_{t,0}^{\Lambdak}
$ 
whenever the Hamiltonian is   time-independent, we formulate the following theorems only for $\mathfrak{U}_{t,s}^{\Lambdak}$.   
\begin{thm} {\rm (Lieb-Robinson bounds)}  \label{lrb}  \\
Let $H_0 \in \mathcal{L}_{I, \zeta, 0 }$, $v \in \mathcal{V}_I$ and $H = H_0 + V_v$. Let $k \in \mathbb{N}$, $A \in  \mathcal{A}_{X}$ and $B \in \mathcal{A}_Y$ with $X,Y \subsetneq \Lambdak$ and $X\cap Y = \emptyset$. 
Then we have
\begin{align*}
&\left\Vert \left[\mathfrak{U}_{t,s}^{\Lambdak}\Ab{A},B\right] \right\Vert  \le \  \frac{2 \Vert A \Vert  \Vert B \Vert}{C_{\zeta}}\left(\mathrm{e}^{2C_{\zeta}\vert t-s\vert \Vert\Phi_{H_0} \Vert_{I,\zeta, 0 }}-1\right)   \sum_{ x \in X, y \in Y } F_{\zeta}(d^{\Lambdak}(x,y)). 
\end{align*}
If $H_0 \in \mathcal{L}_{I, \exp(-a \, \cdot), 0 }$ for some $a>0$, we define the Lieb-Robinson velocity via 
\begin{equation} \label{lrvelocity}  
v_a = 2 a^{-1}C_{\exp(-a\cdot)}\Vert\Phi_{H_0} \Vert_{I,a, 0 },
\end{equation}
and get the more transparent bound
\begin{equation*}
\left\Vert \left[\mathfrak{U}_{t,s}^{\Lambdak}\Ab{A},B\right] \right\Vert \le 2\Vert A \Vert  \Vert B \Vert \Vert F_1\Vert_{\Gamma} \ C_{\exp(-a\cdot)}^{-1} \ \min(\vert X \vert, \vert Y \vert) \ \mathrm{e}^{a(v_a\vert t-s\vert - \mathrm{dist}(X,Y))}. 
\end{equation*}
\end{thm}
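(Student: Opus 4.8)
The final statement to prove is the Lieb–Robinson bound (Theorem~\ref{lrb}). The plan is to follow the by-now standard strategy of deriving a differential inequality for the norm of the commutator, then integrating it and optimising over auxiliary parameters to extract the exponential decay.

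First I would fix $B\in\mathcal{A}_Y$ and, for $A\in\mathcal{A}_X$, study the function $\Phi_B(t) := \bigl\|[\mathfrak{U}^{\Lambdak}_{t,s}\Ab{A},B]\bigr\|$, or rather the family of such quantities indexed by the support of $A$. The key computation is to differentiate $[\mathfrak{U}^{\Lambdak}_{t,s}\Ab{A},B]$ in $t$: since $\tfrac{\D}{\D t}\mathfrak{U}^{\Lambdak}_{t,s}\Ab{A} = \mathfrak{U}^{\Lambdak}_{t,s}\Ab{\tfrac{\I}{\eta}[H^{\Lambdak}(t),A]}$ (with $\eta$ absorbed into the time scale for this statement, as indicated above Theorem~\ref{lrb}), one gets a term involving $[\mathfrak{U}^{\Lambdak}_{t,s}\Ab{[H^{\Lambdak}(t),A]},B]$. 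Writing $H^{\Lambdak}(t) = \sum_{Z\subset\Lambdak}\Phi^{\Lambdak}_{H_0}(t,Z) + V_v^{\Lambdak}(t)$ and noting that the on-site potential $V_v^{\Lambdak}$ commutes with everything supported away from it in the relevant sense (its contribution to $[H,A]$ is again supported in $X$ and does not spread the support), only the finitely-ranged interaction terms $\Phi^{\Lambdak}_{H_0}(t,Z)$ with $Z\cap X\neq\emptyset$ contribute to the spreading. This yields the integral inequality
\[
\Phi_B(t) \le \Phi_B(s) + 2\int_s^t \sum_{Z\cap X\neq\emptyset} \|\Phi^{\Lambdak}_{H_0}(r,Z)\|\,\bigl\|[\mathfrak{U}^{\Lambdak}_{r,s}\Ab{A_Z},B]\bigr\|\,\D r\,,
\]
where $A_Z$ has support in $Z$ and $\|A_Z\|\le\|A\|$. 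Here $\Phi_B(s) = \|[A,B]\| = 0$ since $X\cap Y=\emptyset$.

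Next I would iterate this inequality. Defining the weighted norm via $F_\zeta$ and using the convolution property $\sum_z F_\zeta(d(x,z))F_\zeta(d(z,y)) \le C_\zeta F_\zeta(d(x,y))$ together with $\|\Phi_{H_0}\|_{I,\zeta,0}<\infty$ to control $\sum_{Z\ni x}\|\Phi^{\Lambdak}_{H_0}(r,Z)\|\,(\text{weight})$ uniformly, the $n$-fold iteration produces the series $\sum_{n\ge1}\frac{(2C_\zeta|t-s|\,\|\Phi_{H_0}\|_{I,\zeta,0})^n}{n!}\cdot\frac{1}{C_\zeta}$ times $\sum_{x\in X,y\in Y}F_\zeta(d^{\Lambdak}(x,y))$, which sums to $\frac{2\|A\|\|B\|}{C_\zeta}\bigl(\E^{2C_\zeta|t-s|\|\Phi_{H_0}\|_{I,\zeta,0}}-1\bigr)\sum_{x\in X,y\in Y}F_\zeta(d^{\Lambdak}(x,y))$, which is the claimed bound. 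For the exponential form when $\zeta(r)=\E^{-ar}$, I would use $F_\zeta(d^{\Lambdak}(x,y)) \le (1+d^{\Lambdak}(x,y))^{-(d+1)}\E^{-a\,d^{\Lambdak}(x,y)}$, pull out $\E^{-a\,\mathrm{dist}(X,Y)}$ using $d^{\Lambdak}(x,y)\ge\mathrm{dist}(X,Y)$, bound the remaining polynomial-decay sum over one of the two sets by $\|F_1\|_\Gamma$ and the other by $\min(|X|,|Y|)$, and absorb the $\E^{\cdots}-1\le\E^{\cdots}$ estimate together with the definition \eqref{lrvelocity} of $v_a$ to get the factor $\E^{a v_a|t-s|}$.

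The main obstacle is bookkeeping rather than conceptual: one must make sure that (i) the unbounded Lipschitz potential $V_v^{\Lambdak}$ genuinely does not contribute to the spreading of supports — this is why the creation/annihilation structure and the fact that $V_v$ is a sum of on-site number operators matter, so that $[V_v^{\Lambdak},A_Z]$ stays supported in $Z$ — and (ii) the combinatorial sums over subsets $Z$ intersecting the growing support are controlled uniformly in $k$ by the single norm $\|\Phi_{H_0}\|_{I,\zeta,0}$ and the convolution constant $C_\zeta$, with no dependence on $|\Lambdak|$. Since all finite-volume Hilbert spaces here are finite-dimensional, there are no domain subtleties in the differentiation step, so the argument is a clean adaptation of the standard proof (cf.\ \cite{LR,BR,NSY}); the only genuinely new point to check carefully is the harmless role of the Lipschitz potential.
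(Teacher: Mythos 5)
The paper does not prove this theorem; it simply points to \cite{BD} and \cite{NSYfermionic}. Your sketch is a faithful outline of the standard proof those references use (Gronwall/Dyson iteration of an integral inequality, the convolution property of $F_\zeta$, and summation of the resulting series), so on the structural level there is nothing to object to. Two technical points are, however, glossed over in a way that would not survive an attempt to write the argument out in full.

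First, your iterating inequality asserts that the right-hand side involves $\bigl\|[\mathfrak{U}^{\Lambdak}_{r,s}\Ab{A_Z},B]\bigr\|$ with $A_Z$ supported in $Z$. Directly, $[\Phi_{H_0}^{\Lambdak}(r,Z),A]$ is supported in $X\cup Z$, not $Z$, so the naive iteration would let the support grow by $X$ at every step and the combinatorics would run away. The fix is the Jacobi-identity (or, equivalently, interaction-picture-within-the-Gronwall-step) trick: one rewrites $\bigl[\mathfrak{U}^{\Lambdak}_{t,s}\Ab{[\Phi(Z),A]},B\bigr]$ via Jacobi so that the piece $\bigl[\mathfrak{U}^{\Lambdak}_{t,s}\Ab{\sum_{Z\cap X\ne\emptyset}\Phi(Z)},\,[\mathfrak{U}^{\Lambdak}_{t,s}\Ab{A},B]\bigr]$ is absorbed into an isometric propagator, leaving only $\bigl[\mathfrak{U}^{\Lambdak}_{t,s}\Ab{A},[\mathfrak{U}^{\Lambdak}_{t,s}\Ab{\Phi(Z)},B]\bigr]$ to iterate; this is what lets the quantity being iterated depend on the support $Z$ of the \emph{interaction term} rather than on $X\cup Z$. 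Your proposal presents the outcome as if it were immediate.

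Second, and more substantively, you say the Lipschitz potential is harmless because $[V_v^{\Lambdak},A]$ stays supported in $X$. That observation is correct, but by itself it does not explain why the final bound involves only $\Vert\Phi_{H_0}\Vert_{I,\zeta,0}$ and is completely independent of $C_v$ and $\sup_x|v^{\Lambdak}(x)|$, which is unbounded in $k$. If you simply keep the on-site potential terms in the Gronwall iteration you pick up factors of $\|V_v^X\|$ that blow up. The clean way to make the potential disappear — and the one the cited references use — is to pass to the interaction picture with respect to $V_v^{\Lambdak}$: since $V_v^{\Lambdak}$ is a sum of commuting on-site number operators, conjugation by $\mathrm{e}^{\mathrm{i}sV_v^{\Lambdak}}$ preserves every local algebra $\mathcal{A}_Z^+$, hence the rotated Hamiltonian $\mathrm{e}^{\mathrm{i}sV_v^{\Lambdak}}H_0^{\Lambdak}\mathrm{e}^{-\mathrm{i}sV_v^{\Lambdak}}$ has exactly the same interaction norm as $H_0$, and $B$ rotates to an operator in $\mathcal{A}_Y$ of the same norm. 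One then proves the bound for the rotated dynamics (where no potential appears) and undoes the rotation. This is precisely where your observation about the on-site structure of $V_v$ is used, but it is used as the hypothesis of the interaction-picture rotation, not as a one-line justification that $V_v$ ``does not contribute.''

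With those two points filled in, your argument matches the proof in \cite{BD} and \cite{NSYfermionic} that the paper delegates to.
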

For the proof see, e.g.,   \cite{BD} or \cite{NSYfermionic}.
\begin{thm} {\rm (Comparison of dynamics on different domains)}  \label{differenceoftwodynamics} ~\\
Let $H_0 \in \mathcal{B}_{I, \zeta, 0 }$ and $v \in \mathcal{V}_I$ both have a thermodynamic limit and let $H = H_0 + V_v$. 
\begin{itemize}
\item[(i)] Fix $M \in \mathbb{N}$. Then there exists $K \ge M$ such that for all $l \ge k \ge K$, $X \subset \LambdaM$, $A \in \mathcal{A}_X$ and $s,t \in I$ we have the bound
\begin{align*}
\left\Vert \left(  \mathfrak{U}_{t,s}^{\Lambdal}\vert_{\LambdaM}  - \mathfrak{U}_{t,s}^{\Lambdak}\vert_{\LambdaM}\right)\Ab{A}\right\Vert   \le & \;2  \, \Vert A \Vert \,  \mathrm{e}^{4|t-s|\Vert\Phi_{H_0}\Vert_{I, \zeta, 0 }} |t-s| \;\times\\ & \ \sup_{t \in I}\left\Vert \left(\Phi_{H_0}^{\Lambdal}-\Phi_{H_0}^{\Lambdak}\right)  (t)\right\Vert_{\zeta,0,\LambdaM}   \ \sum_{\substack{x \in X\\y \in \LambdaM}}  F_{\zeta}(d^{\LambdaM}(x,y))\,.
\end{align*}
Here, we introduced the notation $\mathfrak{U}_{t,s}^{\Lambdak}\vert_{\LambdaM}$ for the dynamics generated by $H^{\Lambdak}\vert_{\LambdaM}$, i.e.  with corresponding interaction $\Phi_{H_0}^{\Lambdak}\big\vert_{\LambdaM}$ and potential $v^{\Lambdak}\big\vert_{\LambdaM}$. 
\item[(ii)] Fix $M \in \mathbb{N}$. For any $k \ge M$,  $X \subset \LambdaM$, $A \in \mathcal{A}_X$ and $s,t \in I$
\[ \hspace{-18pt}
\left\Vert \left(\mathfrak{U}_{t,s}^{\Lambdak} - \mathfrak{U}_{t,s}^{\Lambdak}\vert_{\LambdaM}\right)\Ab{A}\right\Vert \le   2 \, \Vert A \Vert \,  \mathrm{e}^{4|t-s|\Vert\Phi_{H_0}\Vert_{I, \zeta, 0 }} |t-s|     \Vert\Phi_{H_0}\Vert_{I, \zeta, 0} \hspace{-8pt}\sum_{\substack{x \in X\\y \in \Lambdak\setminus \LambdaM}}  \hspace{-6pt} F_{\zeta}(d^{\Lambdak}(x,y))\,.
\]
\end{itemize}
\end{thm}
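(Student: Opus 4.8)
The two estimates are, up to the generality of the boundary conditions encoded in the metrics $d^{\Lambdak}$, the fermionic versions of the comparison-of-dynamics bounds of Theorem~3.8 in \cite{NSY}, and the plan is to prove both by the standard Duhamel (``interpolation'') comparison of two cocycles of automorphisms whose generating Hamiltonians differ only by a term supported near the boundary of $\LambdaM$, combined with the Lieb--Robinson bound of Theorem~\ref{lrb}. Two features make this work: the Lieb--Robinson velocity in Theorem~\ref{lrb} depends on $\Phi_{H_0}$ only, which is why no $v$-dependence appears in either bound; and, since $v$ is merely Lipschitz and hence possibly unbounded, the potential cannot be carried along as a boundary perturbation in~(ii) but must first be removed by passing to the interaction picture with respect to $V_v$.

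For~(i): since $v\stackrel{\rm t.d.}{\rightarrow}v_\infty$, I first choose $K\ge M$ with $v^{\Lambdak}(t,\cdot)\vert_{\LambdaM}=v_\infty(t,\cdot)\vert_{\LambdaM}$ for all $k\ge K$ and $t\in I$, so that for $l\ge k\ge K$ the dynamics $\mathfrak{U}_{t,s}^{\Lambdal}\vert_{\LambdaM}$ and $\mathfrak{U}_{t,s}^{\Lambdak}\vert_{\LambdaM}$ are generated by Hamiltonians with the \emph{same} potential $v_\infty\vert_{\LambdaM}$, differing only through the interaction, and no gauging is needed. For $A\in\mathcal{A}_X$ with $X\subset\LambdaM$ and, say, $s\le t$, I set
\[
G(r):=\mathfrak{U}_{r,s}^{\Lambdal}\vert_{\LambdaM}\big[\,\mathfrak{U}_{t,r}^{\Lambdak}\vert_{\LambdaM}\Ab{A}\,\big]\,,\qquad r\in[s,t]\,,
\]
so that $G(t)-G(s)=\big(\mathfrak{U}_{t,s}^{\Lambdal}\vert_{\LambdaM}-\mathfrak{U}_{t,s}^{\Lambdak}\vert_{\LambdaM}\big)\Ab{A}$; differentiating in $r$ and dropping the outer norm-preserving automorphism yields
\[
\Big\Vert\big(\mathfrak{U}_{t,s}^{\Lambdal}\vert_{\LambdaM}-\mathfrak{U}_{t,s}^{\Lambdak}\vert_{\LambdaM}\big)\Ab{A}\Big\Vert\ \le\ \int_s^t \mathrm{d}r\sum_{Z\subset\LambdaM}\Big\Vert\big[\,\big(\Phi_{H_0}^{\Lambdal}-\Phi_{H_0}^{\Lambdak}\big)(r,Z)\,,\ \mathfrak{U}_{t,r}^{\Lambdak}\vert_{\LambdaM}\Ab{A}\,\big]\Big\Vert\,.
\]
For each $Z$ with $Z\cap X=\emptyset$ I push the inner evolution back and apply Theorem~\ref{lrb} for the $\LambdaM$-dynamics, whose velocity is controlled by $\Vert\Phi_{H_0}\Vert_{I,\zeta,0}$ uniformly in $k$; for $Z\cap X\neq\emptyset$ I use the trivial bound $2\Vert(\Phi_{H_0}^{\Lambdal}-\Phi_{H_0}^{\Lambdak})(r,Z)\Vert\,\Vert A\Vert$. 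Reorganising the double sum over $Z$ into a sum over pairs $x\in X$, $y\in\LambdaM$ with the help of $\Vert\cdot\Vert_{\zeta,0,\LambdaM}$, $\Vert F_\zeta\Vert_\Gamma$, $C_\zeta$ and the compatibility of the metrics, pulling out $\sup_{t\in I}\Vert(\Phi_{H_0}^{\Lambdal}-\Phi_{H_0}^{\Lambdak})(t)\Vert_{\zeta,0,\LambdaM}$, and estimating the $r$-integral of the Lieb--Robinson factor by $|t-s|$ times an exponential in $|t-s|\,\Vert\Phi_{H_0}\Vert_{I,\zeta,0}$, one arrives at the bound in~(i).

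For~(ii): here restriction to $\LambdaM$ leaves the unbounded remainder $\sum_{x\in\Lambdak\setminus\LambdaM}v^{\Lambdak}(r,x)\,a^{*}_x a_x$ in the generator difference, which I remove by gauging. Since the $a^{*}_x a_x$ mutually commute and commute with $N_{\Lambdak}$, the finite-volume unitary $W^{\Lambdak}(t,s):=\prod_{x\in\Lambdak}\mathrm{e}^{-\mathrm{i}\,\theta^{\Lambdak}_x(t,s)\,a^{*}_x a_x}$, with $\theta^{\Lambdak}_x(t,s):=\int_s^t v^{\Lambdak}(r,x)\,\mathrm{d}r$, satisfies $U^{\Lambdak}(t,s)=W^{\Lambdak}(t,s)\,\widetilde U^{\Lambdak}(t,s)$, where $\widetilde U^{\Lambdak}$ is generated by the gauged interaction $\widetilde\Phi_{H_0}^{\Lambdak}(t,Z):=W^{\Lambdak}(t,s)^{*}\,\Phi_{H_0}^{\Lambdak}(t,Z)\,W^{\Lambdak}(t,s)$. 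Because $W^{\Lambdak}$ is a product of on-site unitaries commuting with $N_{\Lambdak}$, each $\widetilde\Phi_{H_0}^{\Lambdak}(t,Z)$ again lies in $\mathcal{A}_Z^N$ with the same norm, so $\Vert\widetilde\Phi_{H_0}^{\Lambdak}\Vert_{I,\zeta,0}=\Vert\Phi_{H_0}^{\Lambdak}\Vert_{I,\zeta,0}$ and Theorem~\ref{lrb} applies verbatim to $\widetilde{\mathfrak{U}}^{\Lambdak}$. For $A\in\mathcal{A}_X$ with $X\subset\LambdaM$, the on-site factors of $W^{\Lambdak}$ outside $X$ commute with $A$, so $A_{t,s}:=W^{\Lambdak}(t,s)^{*}\,A\,W^{\Lambdak}(t,s)\in\mathcal{A}_X$ with $\Vert A_{t,s}\Vert=\Vert A\Vert$, and---since $v^{\Lambdak}$ and $v^{\Lambdak}\vert_{\LambdaM}$ agree on $\LambdaM\supset X$---the operator $A_{t,s}$ is the same for the full and for the restricted dynamics. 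Hence $\mathfrak{U}_{t,s}^{\Lambdak}\Ab{A}=\widetilde{\mathfrak{U}}_{t,s}^{\Lambdak}\Ab{A_{t,s}}$ and $\mathfrak{U}_{t,s}^{\Lambdak}\vert_{\LambdaM}\Ab{A}=\widetilde{\mathfrak{U}}_{t,s}^{\Lambdak}\vert_{\LambdaM}\Ab{A_{t,s}}$, and (ii) is reduced to comparing two potential-free dynamics applied to the fixed observable $A_{t,s}$. Running the interpolation as in~(i), but keeping the \emph{full} dynamics $\widetilde{\mathfrak{U}}^{\Lambdak}$ in the inner slot, the generator difference equals (up to sign) $\sum_{Z\subset\Lambdak,\ Z\not\subset\LambdaM}\widetilde\Phi_{H_0}^{\Lambdak}(r,Z)$; each term is evolved under $\widetilde{\mathfrak{U}}^{\Lambdak}$ and its commutator with $\widetilde{\mathfrak{U}}_{t,r}^{\Lambdak}\Ab{A_{t,s}}$ is bounded by Theorem~\ref{lrb} when $Z\cap X=\emptyset$ (with decay in $d^{\Lambdak}(Z,X)$) and trivially otherwise, and the reorganisation now produces $\Vert\Phi_{H_0}\Vert_{I,\zeta,0}\sum_{x\in X,\ y\in\Lambdak\setminus\LambdaM}F_\zeta(d^{\Lambdak}(x,y))$, giving the bound in~(ii).

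Everything after the reduction to potential-free dynamics is the standard Lieb--Robinson bookkeeping---essentially line by line as in \cite{NSY}, cf.\ also the computation quoted in the proof of Lemma~\ref{cauchy4}---and only routine care is needed there, in particular for the interplay of the metrics $d$, $d^{\Lambdak}$, $d^{\LambdaM}$. The one step that genuinely uses the structure of the problem, and which I expect to be the main obstacle, is the gauge reduction in~(ii): without it the generator difference contains the term $\sum_{x\in\Lambdak\setminus\LambdaM}v^{\Lambdak}(r,x)\,a^{*}_x a_x$, which can only be controlled by invoking the Lipschitz constant $C_v$ together with the super-polynomial decay of $F_\zeta$, and this would spoil the $v$-independence of the stated estimate.
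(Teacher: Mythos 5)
Your proof is correct and follows the same strategy the paper invokes: for part~(i) you pinpoint exactly the observation the paper singles out as crucial, namely that for $k,l\ge K$ the potentials agree on $\LambdaM$ so they drop out of the Duhamel comparison, leaving a fermionic version of Theorem~3.4(i) of~\cite{NSY}; and for part~(ii) the paper simply cites Theorem~5.1(ii) of~\cite{BD}, whose content your gauge reduction correctly reconstructs. Your verification that $W^{\Lambdak}(t,s)$, being a product of on-site, even, number-conserving unitaries, preserves supports, norms, and $\mathcal{A}_Z^N$-membership of the interaction terms---and that its restriction to $X\subset\LambdaM$ coincides for the full and the truncated potential---is precisely what is needed for the Lieb--Robinson velocity and the transported observable to be unchanged by the gauging, so the ensuing Duhamel estimate yields the stated $v$-independent bounds.
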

\begin{proof}
The proof for the first part is an adaption of Theorem 3.4 (i) in \cite{NSY} for our notion of interactions and fermionic systems. The crucial part is that the potential is eventually constant, so the potential part of the two time evolutions coincide. The proof of the second part is known and formulated for fermionic systems in Theorem~5.1~(ii) in \cite{BD}. 
\end{proof}
\begin{proof}[Proof of Proposition \ref{tdlofcauchyinteractions}]
Combining the two estimates of Theorem \ref{differenceoftwodynamics}   shows that $\mathfrak{U}_{t,s}^{\Lambdak}\Ab{A}$ is a Cauchy sequence in $\mathcal{A}$ for all $A \in \Aloc $. Since $\mathcal{A}$ is complete, there exists a limit $\mathfrak{U}_{t,s}\Ab{A}$ and the implicitly defined $\mathfrak{U}_{t,s}$ is a co-cycle of automorphisms (follows by standard arguments, see \cite{SStatistical}) which can be extended to the whole $\mathcal{A}$. The statement on the generator of the automorphism can be obtained analogously to Theorem~3.8~(ii) from \cite{NSY}. Hence, we have proven Proposition \ref{tdlofcauchyinteractions}.
\end{proof}

\section{Conditional expectation for fermionic systems} \label{conditionalexpectation}
In this appendix, we consider the conditional expectation for fermionic systems (cf. \cite{AM}, \cite{NSYfermionic}) and therefore distinguish between odd and even observables. We collect the relevant properties for the convenience of the reader.
\begin{lem} {\rm (Adaption of Lemma 4.1 and Lemma 4.2 from \cite{NSYfermionic})}  ~\\
Let $X \subset Z \in \mathcal{P}_0(\Gamma)$. Then there exists a unit-preserving, completely positive linear map $\mathbb{E}_X^{Z}: \mathcal{A}_{Z} \to \mathcal{A}_{Z}$, called the conditional expectation, with
	\begin{itemize}
		\item[(i)] $\Vert \mathbb{E}_X^{Z} \Vert = 1$;
		\item[(ii)] $\mathbb{E}_X^{Z}(\mathcal{A}_{Z}) = \left\{A+B(-1)^{N_{Z}} : A \in \mathcal{A}_X^+, \, B \in \mathcal{A}_{X}^-\right\} \equiv \mathcal{A}_X^{Z}$;
		\item[(iii)] $\mathbb{E}_X^{Z}\Ab{BAC}=B\,\mathbb{E}_X^{Z}\Ab{A}C$, whenever $A \in \mathcal{A}_{Z}$ and $B,C \in \mathcal{A}_X^{Z}$;
		\item[(iv)] If $X,Y \subset Z$, we have $\mathbb{E}_X^Z \circ \mathbb{E}_Y^Z = \mathbb{E}_Y^Z \circ \mathbb{E}_X^Z = \mathbb{E}^Z_{X\cap Y}$;
		\item[(v)] $\mathbb{E}_X^Z\Ab{A^*} = \mathbb{E}_X^Z\Ab{A}^* $.
	\end{itemize}
\end{lem}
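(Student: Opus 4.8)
The plan is to realise $\mathbb{E}_X^{Z}$ as a normalised partial trace over the sites outside $X$ and then to read off the five properties, everything being reduced to the structure of the relative commutant of $\mathcal{A}_{Y}$ in $\mathcal{A}_{Z}$, where $Y := Z\setminus X$. First I would fix a total order on $Z$ in which every site of $Y$ precedes every site of $X$; this gives the canonical unitary identification $\mathfrak{F}_{Z}\cong\mathfrak{F}_{Y}\otimes\mathfrak{F}_{X}$ under which $a_{y,i}=\tilde a_{y,i}\otimes\mathbf{1}$ for $y\in Y$ and $a_{x,i}=(-1)^{N_{Y}}\otimes\tilde a_{x,i}$ for $x\in X$. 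In particular $\mathcal{A}_{Y}=\mathcal{L}(\mathfrak{F}_{Y})\otimes\mathbf{1}$, whose commutant in $\mathcal{A}_{Z}=\mathcal{L}(\mathfrak{F}_{Z})$ is $\mathbf{1}\otimes\mathcal{L}(\mathfrak{F}_{X})$, and I set
\[
\mathbb{E}_X^{Z}(B) \ := \ \mathbf{1}_{\mathfrak{F}_{Y}}\otimes\Big(\tfrac{1}{\dim\mathfrak{F}_{Y}}\,(\mathrm{Tr}_{\mathfrak{F}_{Y}}\otimes\mathrm{id}_{\mathcal{L}(\mathfrak{F}_{X})})(B)\Big),
\]
the normalised partial trace over the $Y$-factor. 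By the standard twirl identity this equals the average $\int_{\mathcal{U}(\mathcal{A}_{Y})} u\,B\,u^{*}\,\mathrm{d}u$ over the Haar measure of the (compact) unitary group of the full matrix algebra $\mathcal{A}_{Y}$; the latter description makes manifest that $\mathbb{E}_X^{Z}$ depends only on the pair $X\subset Z$ and not on the chosen order.

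With this in hand, (i) and (v) are immediate: each $B\mapsto uBu^{*}$ is a unital $*$-automorphism, hence unital and completely positive, and so is the average $\mathbb{E}_X^{Z}$; therefore $\Vert\mathbb{E}_X^{Z}\Vert=\Vert\mathbb{E}_X^{Z}(\mathbf{1})\Vert=1$, and $\mathbb{E}_X^{Z}(B^{*})=\int(uBu^{*})^{*}\,\mathrm{d}u=\mathbb{E}_X^{Z}(B)^{*}$. For the module property (iii) I would note that the range of the twirl is exactly the fixed-point set of conjugation by $\mathcal{U}(\mathcal{A}_{Y})$, i.e.\ the commutant $\mathcal{A}_{Y}'\cap\mathcal{A}_{Z}$; if $B,C$ lie in this commutant then $uB=Bu$ and $uC=Cu$ for all $u$, so $u(BAC)u^{*}=B\,(uAu^{*})\,C$ and $B,C$ pull out of the integral. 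For the tower property (iv) I would invoke that the $\tau_{Z}$-preserving conditional expectation onto a subalgebra of $(\mathcal{A}_{Z},\tau_{Z})$ is unique (it is the $L^{2}(\mathcal{A}_{Z},\tau_{Z})$-orthogonal projection onto it): $\mathbb{E}_X^{Z}$ is such an expectation onto $\mathcal{A}_{Y}'\cap\mathcal{A}_{Z}$, and since the local CAR algebras satisfy $\mathcal{A}_{U}\vee\mathcal{A}_{V}=\mathcal{A}_{U\cup V}$ one has
\[
\big(\mathcal{A}_{Z\setminus X}'\cap\mathcal{A}_{Z}\big)\cap\big(\mathcal{A}_{Z\setminus Y}'\cap\mathcal{A}_{Z}\big)=\big(\mathcal{A}_{Z\setminus X}\vee\mathcal{A}_{Z\setminus Y}\big)'\cap\mathcal{A}_{Z}=\mathcal{A}_{Z\setminus(X\cap Y)}'\cap\mathcal{A}_{Z};
\]
a short check, using the graded commutation relations between the local algebras in a tensor decomposition adapted to the partition $Z=(X\cap Y)\sqcup(X\setminus Y)\sqcup(Y\setminus X)\sqcup(Z\setminus(X\cup Y))$, shows that $\mathbb{E}_X^{Z}$ and $\mathbb{E}_Y^{Z}$ commute, so the product of the two orthogonal projections is the projection onto the intersection, which by the display equals $\mathcal{A}_{X\cap Y}^{Z}$; by uniqueness $\mathbb{E}_X^{Z}\circ\mathbb{E}_Y^{Z}=\mathbb{E}_{X\cap Y}^{Z}$, and symmetrically.

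The substantive point is (ii): identifying the range $\mathcal{A}_{Y}'\cap\mathcal{A}_{Z}=\mathbf{1}\otimes\mathcal{L}(\mathfrak{F}_{X})$ with $\{A+B(-1)^{N_{Z}}:A\in\mathcal{A}_{X}^{+},\,B\in\mathcal{A}_{X}^{-}\}$. I would do this by a direct grading computation in the tensor splitting above: an even word in the $a_{x,i},a_{x,i}^{*}$ carries an even number of the twists $(-1)^{N_{Y}}$, whence $\mathcal{A}_{X}^{+}=\mathbf{1}\otimes\mathcal{L}(\mathfrak{F}_{X})^{+}$, while an odd word carries one leftover twist, so $\mathcal{A}_{X}^{-}=(-1)^{N_{Y}}\otimes\mathcal{L}(\mathfrak{F}_{X})^{-}$; combining this with $(-1)^{N_{Z}}=(-1)^{N_{Y}}\otimes(-1)^{N_{X}}$ and with the fact that right multiplication by the invertible even element $(-1)^{N_{X}}$ is a bijection of $\mathcal{L}(\mathfrak{F}_{X})$ preserving the odd part gives $\mathcal{A}_{X}^{-}(-1)^{N_{Z}}=\mathbf{1}\otimes\mathcal{L}(\mathfrak{F}_{X})^{-}$, so the asserted set is $\mathbf{1}\otimes(\mathcal{L}(\mathfrak{F}_{X})^{+}\oplus\mathcal{L}(\mathfrak{F}_{X})^{-})=\mathbf{1}\otimes\mathcal{L}(\mathfrak{F}_{X})$. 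The only genuine difficulty anywhere is the fermionic sign bookkeeping in (ii) and in the commutativity check in (iv), together with making the construction visibly order-independent; none of it is deep, and an entirely acceptable alternative is simply to quote Lemmas~4.1 and 4.2 of \cite{NSYfermionic} (see also \cite{AM}) and remark that nothing changes on passing to the $C^{*}$-inductive limit $\mathcal{A}$ or on restricting to $\mathcal{A}_{\mathrm{loc}}$.
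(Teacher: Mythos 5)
The paper supplies no proof here: Appendix~\ref{conditionalexpectation} is explicitly a collection of facts recorded ``for the convenience of the reader,'' and the lemma is labelled as an adaptation of Lemmas~4.1 and~4.2 of \cite{NSYfermionic} (see also \cite{AM}) — you correctly anticipate this by offering citation as an acceptable alternative. Your from-scratch argument via the normalized partial trace over $\mathfrak{F}_{Z\setminus X}$, rewritten as the Haar twirl over $\mathcal{U}(\mathcal{A}_{Z\setminus X})$ to make (i), (iii), (v) and order-independence transparent, is essentially the construction used in those references, and it is correct. The grading computation for (ii) — that $\mathcal{A}_X^+ = \mathbf{1}\otimes\mathcal{L}(\mathfrak{F}_X)^+$, that $\mathcal{A}_X^- = (-1)^{N_Y}\otimes\mathcal{L}(\mathfrak{F}_X)^-$ because an odd word carries exactly one leftover Jordan--Wigner twist, and that $\mathcal{A}_X^-(-1)^{N_Z} = \mathbf{1}\otimes\mathcal{L}(\mathfrak{F}_X)^-(-1)^{N_X} = \mathbf{1}\otimes\mathcal{L}(\mathfrak{F}_X)^-$ since $(-1)^{N_X}$ is even and invertible — is the genuinely fermionic content and you get it exactly right. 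The one place you compress more than you should is the commutativity of the two $L^2(\tau_Z)$-orthogonal projections in (iv): in a four-fold tensor decomposition adapted to the partition of $Z$ into $Z\setminus(X\cup Y)$, $Y\setminus X$, $X\setminus Y$, $X\cap Y$, the algebra $\mathcal{A}_{Z\setminus X}$ is an honest tensor factor, but the $X\setminus Y$ generators of $\mathcal{A}_{Z\setminus Y}$ carry a residual $(-1)^{N_{Y\setminus X}}$ twist, so ``partial traces over disjoint factors commute'' does not apply verbatim; the commuting-square inclusion $\mathbb{E}_X^Z(\mathcal{A}_Y^Z)\subset\mathcal{A}_{X\cap Y}^Z$ is precisely the nontrivial content of Lemma~4.2 of \cite{NSYfermionic} and is worth carrying out rather than asserting. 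Nothing in your outline is wrong, but (iv) is the step where the fermionic bookkeeping is not actually short — which is presumably why the paper cites rather than proves.
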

Conditional expectations provide local approximations of quasi-local observables in the case of quantum spin systems. Due to the fermionic structure, this property fails to be satisfied for arbitrary observables (see the discussion in \cite{NSYfermionic}, Chapter 4), but can be established for even observables. 

\begin{lem} {\rm (Adaption of Lemmas 4.2 and  4.3 from \cite{NSYfermionic}, Corollary 4.4 from \cite{NSY})}  \label{partialtraces} \ 
	\begin{itemize}
\item[(a)] The restriction of the conditional expectation to even observables has the following properties: 
\begin{itemize}
	\item[(i)] $\mathbb{E}_X^{Z}(\mathcal{A}^+_{Z}) =  \mathcal{A}^+_{X}$ and $\mathbb{E}_X^{Z}\Ab{A} = A$ for $A \in  \mathcal{A}^+_{X}$; 
	\item[(ii)] Whenever $X \subset Y \subset Z$ and $A \in \mathcal{A}_Y^+$, we have $\mathbb{E}^Z_X\Ab{A} = \mathbb{E}^Y_X\Ab{A}$;
	\item[(iii)] Whenever $A \in \mathcal{A}_Z^+$ satisfies the commutator bound
	\begin{equation*}
	\Vert [A,B] \Vert \le \eta \Vert A \Vert \, \Vert B \Vert \quad \text{for all}\  B \in \mathcal{A}_{Z\setminus X}
	\end{equation*}
	for some $\eta \ge 0$, then
	\begin{equation*}
	\Vert A - \mathbb{E}_X^Z\Ab{A}\Vert \le \eta \Vert A \Vert.
	\end{equation*}
\end{itemize} 
\item[(b)] For any $X \in \mathcal{P}_0(\Gamma)$, there exists a unique bounded map and conditional expectation $\mathbb{E}_X: \mathcal{A}^+ \to \mathcal{A}_X^+$ having the following properties: 
\begin{itemize}
\item[(i)] For all $Z \in \mathcal{P}_0(\Gamma)$ containing $X$, we have that $\mathbb{E}_X\Ab{A} = \mathbb{E}_X^Z\Ab{A}$ for all $A \in \mathcal{A}_Z^+$;
\item[(ii)] Whenever $A \in \mathcal{A}^+$ satisfies the commutator bound
\begin{equation*}
\Vert [A,B] \Vert \le \eta \Vert A \Vert \, \Vert B \Vert \quad \text{for all}\  B \in \bigcup_{\substack{ Y \in \mathcal{P}_0(\Gamma): \\X \cap Y = \emptyset}}\mathcal{A}_{Y}
\end{equation*}
for some $\eta \ge 0$, then
\begin{equation*}
\Vert A - \mathbb{E}_X\Ab{A}\Vert \le \eta \Vert A \Vert.
\end{equation*}
\end{itemize}
	\end{itemize}
\end{lem}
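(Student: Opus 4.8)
The plan is to deduce everything from the structural properties of the finite-volume conditional expectations $\mathbb{E}_X^Z$ recorded in the preceding lemma (the fermionic version of Lemmas~4.1--4.2 of \cite{NSYfermionic}), using in addition that $\mathbb{E}_X^Z$ is parity-compatible, i.e.\ commutes with $\sigma_Z$ --- which is built into its construction as the unique $\tau_Z$-preserving conditional expectation onto $\mathcal{A}_X^Z$, where $\tau_Z$ is the normalised trace on the matrix algebra $\mathcal{A}_Z$. Parity-compatibility forces $\mathbb{E}_X^Z$ to map $\mathcal{A}_Z^+$ into $\mathcal{A}_X^Z\cap\mathcal{A}^+$, and the explicit description $\mathcal{A}_X^Z=\{A+B(-1)^{N_Z}:A\in\mathcal{A}_X^+,\,B\in\mathcal{A}_X^-\}$ together with the fact that $(-1)^{N_Z}$ is even gives $\mathcal{A}_X^Z\cap\mathcal{A}^+=\mathcal{A}_X^+$; combined with the module property~(iii) of the preceding lemma applied with observable $\mathbf{1}_{\mathcal{A}_Z}$ and left/right multipliers $A_0\in\mathcal{A}_X^+$ and $\mathbf{1}_{\mathcal{A}_Z}$ --- yielding $\mathbb{E}_X^Z\Ab{A_0}=A_0\,\mathbb{E}_X^Z\Ab{\mathbf{1}_{\mathcal{A}_Z}}=A_0$ by unit preservation --- this settles part~(a)(i). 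For part~(a)(ii) I would use that the normalised traces are consistent, $\tau_Z\vert_{\mathcal{A}_Y}=\tau_Y$ for $X\subset Y\subset Z$: for $A\in\mathcal{A}_Y^+$ one has $\mathbb{E}_X^Z\Ab{A}\in\mathcal{A}_X^+\subset\mathcal{A}_Y$, so $\mathbb{E}_X^Z\vert_{\mathcal{A}_Y^+}$ is itself a $\tau_Y$-preserving conditional expectation onto $\mathcal{A}_X^+$, and uniqueness of such a map forces it to coincide with $\mathbb{E}_X^Y\vert_{\mathcal{A}_Y^+}$; this is also exactly Lemma~4.3 of \cite{NSYfermionic}.

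For the quantitative estimate in part~(a)(iii) I would invoke the averaging representation of the conditional expectation on even observables: on $\mathcal{A}_Z^+$ the map $\mathbb{E}_X^Z$ is an average $A\mapsto\int UAU^*\,\mathrm{d}\mu(U)$ of conjugations by unitaries $U$ drawn (with respect to the appropriate measure) from the subalgebra attached to the complementary region $Z\setminus X$. Hence $A-\mathbb{E}_X^Z\Ab{A}=\int(A-UAU^*)\,\mathrm{d}\mu(U)$ and each integrand obeys $\Vert A-UAU^*\Vert=\Vert[A,U]U^*\Vert=\Vert[A,U]\Vert\le\eta\Vert A\Vert$ by hypothesis, so integrating gives $\Vert A-\mathbb{E}_X^Z\Ab{A}\Vert\le\eta\Vert A\Vert$. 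The point where the fermionic structure enters --- and the reason the statement is confined to even $A$ --- is that this averaging genuinely projects onto $\mathcal{A}_X^+$ only because even observables localised in $X$ \emph{commute} (rather than merely anticommute) with all of $\mathcal{A}_{Z\setminus X}$, i.e.\ $[\mathcal{A}_X^+,\mathcal{A}_{X'}]=\{0\}$ whenever $X\cap X'=\emptyset$; the precise argument is the fermionic counterpart of Corollary~4.4 of \cite{NSY} (see also \cite{AM,NSYfermionic}).

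For part~(b) I would define $\mathbb{E}_X$ on the dense subalgebra $\mathcal{A}_{\mathrm{loc}}^+=\bigcup_{Z\supset X}\mathcal{A}_Z^+$ by $\mathbb{E}_X\Ab{A}:=\mathbb{E}_X^Z\Ab{A}$ for $A\in\mathcal{A}_Z^+$; this is well defined by part~(a)(ii) and has operator norm one since $\Vert\mathbb{E}_X^Z\Vert=1$, so it extends uniquely by continuity to a norm-one, completely positive, unit-preserving linear map $\mathcal{A}^+\to\mathcal{A}_X^+$, and uniqueness of any map satisfying~(b)(i) is forced by density. Property~(b)(i) then holds by construction. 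For~(b)(ii), given $A\in\mathcal{A}^+$ with the global commutator bound, I would approximate it by local even observables $A_n\in\mathcal{A}_{Z_n}^+$ with $A_n\to A$; then $A_n$ satisfies $\Vert[A_n,B]\Vert\le\eta_n\Vert A_n\Vert\,\Vert B\Vert$ for all $B\in\mathcal{A}_{Z_n\setminus X}$ with $\eta_n\to\eta$ (absorbing $\Vert A-A_n\Vert$), so part~(a)(iii) gives $\Vert A_n-\mathbb{E}_X^{Z_n}\Ab{A_n}\Vert\le\eta_n\Vert A_n\Vert$; since $\mathbb{E}_X^{Z_n}\Ab{A_n}=\mathbb{E}_X\Ab{A_n}\to\mathbb{E}_X\Ab{A}$ by continuity, passing to the limit yields $\Vert A-\mathbb{E}_X\Ab{A}\Vert\le\eta\Vert A\Vert$.

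The only genuine difficulty is bookkeeping the fermionic parity throughout: the conditional expectation is not a naive partial trace, and the clean estimate in~(a)(iii) survives only on the even subalgebra, precisely because odd observables supported in disjoint regions anticommute rather than commute. Once the construction and basic properties of $\mathbb{E}_X^Z$ from \cite{NSYfermionic} are granted, all remaining steps --- identifying the images, the averaging bound, and the inductive-limit extension --- are routine.
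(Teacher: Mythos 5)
Your proof is correct and follows essentially the same route as the paper: the paper simply refers to \cite{NSYfermionic} and \cite{NSY} for part~(a), and for part~(b) invokes the BLT theorem together with the compatibility property~(a)(ii) for the inductive-limit construction and an approximation by local observables for the estimate in~(b)(ii) --- which is precisely what you spell out in more detail. One small clarification on~(a)(iii): the averaging argument in fact yields $\Vert A-\mathbb{E}_X^Z\Ab{A}\Vert\le\eta\Vert A\Vert$ for \emph{all} $A\in\mathcal{A}_Z$; the restriction to even observables is not needed for the validity of the bound but because only on $\mathcal{A}_Z^+$ does $\mathbb{E}_X^Z$ genuinely localise into $\mathcal{A}_X^+\subset\mathcal{A}_X$, whereas for odd $A$ the image lies in $\mathcal{A}_X^Z$ and may involve the nonlocal factor $(-1)^{N_Z}$.
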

\begin{proof}
The statements under item (a) are proven in \cite{NSYfermionic}. For the statements under item (b): The existence of $\mathbb{E}_X$ follows from the BLT theorem and the compatibility condition for the finite conditional expectations. The commutator bound is obtained by considering an approximating local observable. 
\end{proof}
We conclude, that the conditional expectation on even observables can be used essentially in the same way as for quantum spin systems (cf.\ \cite{NSW, NSY}). 

\section{Operations preserving the thermodynamic limit} \label{technicalApp}
In this appendix, we prove various technical lemmata used in the proofs in Section \ref{proofs}. 
Their content is   that 
\begin{enumerate}
	\item the commutator of two operators having a thermodynamic limit has  a thermodynamic limit  (Lemma \ref{cauchy1}),
	\item the commutator of an operator with a Lipschitz potential both having a thermodynamic limit has a thermodynamic limit   (Lemma \ref{cauchy2}), and
	\item the \SLT~invers $\mathcal{I}_{H_0}$ of the Liouvillian $\mathcal{L}_{H_0}: B \mapsto [H_0,B]$ maps almost exponentially localised operators that have a thermodynamic limit after taking the commutator with $H_0$ to almost exponentially localised operators having a thermodynamic limit (Lemma \ref{cauchy3}).
\end{enumerate}
These operations   are   needed for the construction of the super-adiabatic NEASS. 
  Before turning to the precise formulation of these statements, we give an equivalent characterization of \textit{having a thermodynamic} that will be utilised in their proofs. 
\begin{lem}{\rm (Alternative characterization of having a thermodynamic limit)} \label{lem:equivalentchar} \\
The time-dependent interaction $\Phi \in \mathcal{B}_{I,\zeta,n}$ has a thermodynamic limit if and only if it satisfies the following Cauchy property:
\begin{align*} \hspace{-.0cm}
	\forall i \in \mathbb{N}_0, M \in \mathbb{N}  , \delta>0 \quad \exists  K \ge M  \quad \forall  k,l \ge K: \;\sup_{t \in I}\left\Vert \frac{\D^i}{\D t^i}\left( \Phi^{ \Lambdal}- \Phi^{ \Lambdak}\right) (t)  \right\Vert_{\zeta^{(i)}, n,\LambdaM} \hspace{-6pt} \le \delta. 
	\end{align*} 
\end{lem}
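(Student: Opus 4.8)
The plan is to prove the equivalence of the two conditions in the standard way one proves that ``convergence to a limit'' and ``Cauchy'' are equivalent, with the only subtlety being that the ambient object (an infinite volume interaction $\Psi$) has to be \emph{produced} in the forward direction rather than assumed.

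\textbf{The easy direction ($\Rightarrow$).} Suppose $\Phi \stackrel{\rm t.d.}{\rightarrow} \Psi$, i.e.\ for every $i \in \mathbb{N}_0$ and $M \in \mathbb{N}$ we have $\lim_{k\to\infty}\sup_{t\in I}\bigl\Vert \frac{\D^i}{\D t^i}(\Psi - \Phi^{\Lambdak})(t)\bigr\Vert_{\zeta^{(i)},n,\LambdaM} = 0$. Given $i$, $M$, $\delta > 0$, choose $K \ge M$ so that the tail quantity is below $\delta/2$ for all $k \ge K$. Then for $k,l \ge K$, I insert and subtract $\Psi$ and use the triangle inequality for $\Vert\cdot\Vert_{\zeta^{(i)},n,\LambdaM}$ (this is a genuine norm on interactions restricted to $\LambdaM$):
\[
\Bigl\Vert \tfrac{\D^i}{\D t^i}(\Phi^{\Lambdal} - \Phi^{\Lambdak})(t)\Bigr\Vert_{\zeta^{(i)},n,\LambdaM} \le \Bigl\Vert \tfrac{\D^i}{\D t^i}(\Psi - \Phi^{\Lambdal})(t)\Bigr\Vert_{\zeta^{(i)},n,\LambdaM} + \Bigl\Vert \tfrac{\D^i}{\D t^i}(\Psi - \Phi^{\Lambdak})(t)\Bigr\Vert_{\zeta^{(i)},n,\LambdaM} \le \delta,
\]
uniformly in $t \in I$, which is the Cauchy property.

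\textbf{The harder direction ($\Leftarrow$).} Assume the Cauchy property. I need to construct the candidate limit $\Psi$ and then verify the convergence and that $\Psi \in \mathcal{B}_{I,\zeta,n}^\circ$. For a fixed finite set $X \in \mathcal{P}_0(\Gamma)$, pick any $M$ with $X \subset \LambdaM$; the Cauchy condition with $i=0$ forces the sequence $(\Phi^{\Lambdak}(t,X))_{k}$ (for $k$ large enough that $X \subset \Lambdak$, using open boundary metrics as in the definition of $\Vert\cdot\Vert_{\zeta,n,\LambdaM}$) to be Cauchy in the finite-dimensional space $\mathcal{A}_X^N$, uniformly in $t$; define $\Psi(t,X)$ to be its limit. (One should check this is independent of the choice of $M \supset X$, which is immediate since a larger $M$ only restricts $k$ further along the same sequence.) The derivative conditions with $i \ge 1$, together with the uniform-in-$t$ Cauchy estimates, give uniform convergence of all $t$-derivatives of $\Phi^{\Lambdak}(\cdot,X)$, hence $t \mapsto \Psi(t,X)$ is smooth and $\frac{\D^i}{\D t^i}\Psi(t,X) = \lim_k \frac{\D^i}{\D t^i}\Phi^{\Lambdak}(t,X)$. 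Next, to see $\Psi \in \mathcal{B}^\circ_{I,\zeta,n}$ one fixes $x,y \in \Gamma$, takes $M$ large enough to contain both, and passes to the limit $l \to \infty$ (with $k = K$ fixed) inside the finite sum defining $\Vert\cdot\Vert^\circ_{\zeta,n,\LambdaM}$ --- each summand converges, and Fatou / the $k=K$ Cauchy bound controls the sum uniformly in $M$, giving $\sup_{t}\Vert\frac{\D^i}{\D t^i}\Psi(t)\Vert^\circ_{\zeta^{(i)},n} \le \sup_t\Vert\frac{\D^i}{\D t^i}\Phi^{\LambdaK}(t)\Vert_{\zeta^{(i)},n} + \delta < \infty$ (noting that on $\LambdaM$ with open boundary conditions the two metrics agree for the relevant pairs, by the bulk-compatibility condition). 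Finally, fixing $i,M$ and letting $l \to \infty$ in the Cauchy inequality $\sup_t\Vert\frac{\D^i}{\D t^i}(\Phi^{\Lambdal} - \Phi^{\Lambdak})(t)\Vert_{\zeta^{(i)},n,\LambdaM} \le \delta$ yields $\sup_t\Vert\frac{\D^i}{\D t^i}(\Psi - \Phi^{\Lambdak})(t)\Vert_{\zeta^{(i)},n,\LambdaM} \le \delta$ for all $k \ge K$, which is exactly $\Phi \stackrel{\rm t.d.}{\rightarrow}\Psi$.

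\textbf{Main obstacle.} The only real care is in the construction step: one must check that the pointwise (in $X$) limits $\Psi(t,X)$ assemble into a bona fide infinite volume interaction that actually lies in $\mathcal{B}^\circ_{I,\zeta,n}$, and in particular that passing the norm-bound through the limit $l\to\infty$ is legitimate. This is handled by noting that $\Vert\cdot\Vert_{\zeta,n,\LambdaM}$ is a supremum over $x,y$ of a \emph{locally finite} sum of norms $\Vert\Phi^{\Lambdal}(X)\Vert$ over $X \subset \LambdaM$, so for fixed $M$ it is a continuous function of the finitely many values $\{\Phi^{\Lambdal}(X)\}_{X \subset \LambdaM}$ and the limit commutes with it; the uniformity in $M$ then comes for free from the $k = K$ bound, since $\Vert\frac{\D^i}{\D t^i}\Psi(t)\Vert^\circ_{\zeta^{(i)},n} = \sup_M \Vert\frac{\D^i}{\D t^i}\Psi(t)\Vert_{\zeta^{(i)},n,\LambdaM}$. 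Everything else is a routine $\varepsilon/2$ argument.
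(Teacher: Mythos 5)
Your proposal follows the same route as the paper: triangle inequality for $(\Rightarrow)$; for $(\Leftarrow)$ construct the candidate $\Psi$ from Cauchy-ness on each finite window $\LambdaM$, identify $\Psi_i=\frac{\D^i}{\D t^i}\Psi_0$ from the uniform-in-$t$ convergence of the derivative interactions, and pass the Cauchy bound through the limit $l\to\infty$ to obtain $\Phi\stackrel{\rm t.d.}{\rightarrow}\Psi$. The only place your write-up is imprecise is the step producing $\Psi\in\mathcal{B}^\circ_{I,\zeta,n}$: your displayed bound
\[
\sup_{t}\Bigl\Vert \tfrac{\D^i}{\D t^i}\Psi(t)\Bigr\Vert^\circ_{\zeta^{(i)},n} \le \sup_t\Bigl\Vert \tfrac{\D^i}{\D t^i}\Phi^{\LambdaK}(t)\Bigr\Vert_{\zeta^{(i)},n} + \delta
\]
is ill-posed because $K=K(M,\delta)$ while the left side is $M$-independent, and working with the single threshold $k=K$ is not enough for the bulk-compatibility argument, since agreement of $d^{\Lambdak}$ with the $\ell^1$-metric on all of $\LambdaM$ requires $k\geq 2dM$, and $K(M,\delta)\geq M$ does not guarantee $K\geq 2dM$. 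The fix (which is what the paper does with its explicit choice $k=2dM$) is: for each $M$ pass to the limit over any $l\geq\max(K(M,\delta),2dM)$ to get $\Vert\Psi_i(t)\Vert_{\zeta^{(i)},n,\LambdaM}=\lim_{l}\Vert\tfrac{\D^i}{\D t^i}\Phi^{\Lambdal}(t)\Vert_{\zeta^{(i)},n,\LambdaM}\leq\Vert\Phi^{(i)}(t)\Vert_{\zeta^{(i)},n}$, where the last inequality uses $d^{\Lambdal}\!\mid_{\LambdaM}=d\!\mid_{\LambdaM}$ for $l\geq 2dM$. This yields a bound by the global, $M$-independent norm $\Vert\Phi^{(i)}\Vert_{I,\zeta^{(i)},n}$, which is what one needs to take the supremum over $M$. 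With that correction your argument is complete and matches the paper's.
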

\begin{proof}
Using the triangle inequality, one immediately concludes that any interaction $\Phi \in \mathcal{B}_{I,\zeta,n}$ that has a thermodynamic limit also satisfies the Cauchy property. 

On the other hand, suppose that $\Phi \in \mathcal{B}_{I,\zeta,n}$ satisfies the Cauchy-property and fix $i \in \mathbb{N}_0$ and  $M \in \mathbb{N}$. Then
$
\big(  \frac{\D^i}{\D t^i}\Phi^\Lambdak \big|_{\LambdaM}\big)_{k \in \mathbb{N}}
$
is a Cauchy sequence in the Banach space of time-dependent interactions restricted to $\LambdaM$ equipped with the norm $\sup_{t \in I}\Vert \cdot \Vert_{\zeta^{(i)}, n,\LambdaM}$. 
Hence, there exists a limiting time-dependent interaction
\begin{equation*}
\Psi_i^{ \LambdaM}(t,\cdot) : \set{X \subset \LambdaM} \to \mathcal{A}_{\LambdaM}^N 
\end{equation*}
for any $M \in \mathbb{N}$ which satisfy $  \Psi_i^{ \Lambda_{M'}}(t,\cdot) \big|_{\LambdaM} =\Psi_i^{ \LambdaM}(t,\cdot) $ whenever $M' \ge M$. From this we conclude the existence of an infinite volume interaction
$\Psi_i(t,\cdot): \mathcal{P}_0(\Gamma) \to \Aloc^N$ such that $  \Psi_i(t,\cdot) \big|_{\LambdaM} = \Psi_i^{  \Lambda_{M}}(t,\cdot)$ for any $t \in I$ and thus
\begin{equation} \label{eq:conv}
\lim_{k \to \infty} \sup_{t \in I}\left\Vert \left(\Psi_i- \frac{\D^i}{\D t^i}\Phi^\Lambdak\right)(t) \right\Vert_{\zeta^{(i)},n,\LambdaM}  = 0 \,. 
\end{equation}
Until now, we established the existence of an infinite volume interaction for every time derivative. In order to see that $\Phi \in \mathcal{B}_{I, \zeta, n}$ indeed has a thermodynamic limit, it remains to show that (i) $  \sup_{t \in I}\left\Vert \Psi_i (t) \right\Vert_{\zeta^{(i)},n}^\circ < \infty$  for every $i \in \mathbb{N}_0$ and (ii) $\frac{\D^i}{\D t^i} \Psi_0 (t,X) = \Psi_i(t,X)$ for every $X \in \mathcal{P}_0(\Gamma)$. For (i) we use the bulk-compatibility of the metric and obtain
\begin{align*} 
\left\Vert \Psi_i (t) \right\Vert_{\zeta^{(i)},n}^\circ \; = \; &\sup_{ M \in \mathbb{N}}  \left\Vert \Psi_i (t) \right\Vert_{\zeta^{(i)},n, \LambdaM} \\\;\le\; &\sup_{ M \in \mathbb{N}} \left\Vert \left(\Psi_i- \frac{\D^i}{\D t^i}\Phi^{\Lambda_{2dM}}\right)(t) \right\Vert_{\zeta^{(i)},n,\LambdaM} + \sup_{ M \in \mathbb{N}}  \left\Vert  \frac{\D^i}{\D t^i}\Phi^{\Lambda_{2dM}}(t) \right\Vert_{\zeta^{(i)},n, \LambdaM} \\
\;\le\; &\sup_{ M \in \mathbb{N}}  \left\Vert \left(\Psi_i- \frac{\D^i}{\D t^i}\Phi^{\Lambda_{2dM}}\right)(t) \right\Vert_{\zeta^{(i)},n,\LambdaM} + \left\Vert  \Phi^{(i)}(t) \right\Vert_{\zeta^{(i)}, n } \,.
\end{align*}
The first term is bounded uniformly in $t \in I$ by \eqref{eq:conv} and the second term by assumption. This shows (i) and (ii)   is a simple consequence of the uniformity in $t \in I$ in \eqref{eq:conv}. 
\end{proof}

\begin{lem} \label{cauchy1} {\rm (Thermodynamic limit of commutators of local operators)}\\
	Let $A, B \in \mathcal{L}_{I,\zeta, n+d }$ have a thermodynamic limit. Then the commutator $[A,B] \in \mathcal{L}_{I,\zeta, n }$ also has a thermodynamic limit. In particular, if $A,B \in \mathcal{L}_{I,\mathcal{S}, \infty}$ both have a thermodynamic limit, then $[A,B] \in \mathcal{L}_{I,\mathcal{S}, \infty }$ also has a thermodynamic limit. 
\end{lem}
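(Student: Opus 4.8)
The plan is to reduce everything to the Cauchy characterization of having a thermodynamic limit provided by Lemma \ref{lem:equivalentchar}, together with standard commutator estimates for interactions. First I would fix the data: let $\Phi_A$ and $\Phi_B$ be the interactions of $A$ and $B$, both lying in $\mathcal{B}_{I,\zeta,n+d}$ and having thermodynamic limits. The commutator $[A,B]$ is itself an \SLT\ operator family with an associated interaction $\Phi_{[A,B]}$ built by a standard recipe: for $X\subset\Lambdak$ one sets $\Phi_{[A,B]}^{\Lambdak}(X) := \sum [\Phi_A^{\Lambdak}(X_1),\Phi_B^{\Lambdak}(X_2)]$, the sum running over pairs $X_1,X_2$ with $X_1\cup X_2 = X$ and $X_1\cap X_2\neq\emptyset$ (only overlapping pairs contribute). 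That $\Phi_{[A,B]}\in\mathcal{B}_{I,\zeta,n}$, with the loss of $d$ powers in the decay exponent, is exactly the content of the commutator estimate already in the literature (this is Lemma B.3 in \cite{Teu17} / the analogue in \cite{MT}), so I would cite that rather than reprove it; the smoothness and boundedness in $t$ with the sequence $\zeta^{(i)}$ follows from the Leibniz rule applied term-wise.

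Next, the core step: verifying the Cauchy property of Lemma \ref{lem:equivalentchar} for $\Phi_{[A,B]}$ in the bulk norm $\|\cdot\|_{\zeta^{(i)},n,\LambdaM}$. Fix $i$, $M$, $\delta$. Write the difference $\Phi_{[A,B]}^{\Lambdal} - \Phi_{[A,B]}^{\Lambdak}$ restricted to $\LambdaM$ as a telescoping sum over the two slots, $[\Phi_A^{\Lambdal},\Phi_B^{\Lambdal}] - [\Phi_A^{\Lambdak},\Phi_B^{\Lambdak}] = [\Phi_A^{\Lambdal}-\Phi_A^{\Lambdak},\Phi_B^{\Lambdal}] + [\Phi_A^{\Lambdak},\Phi_B^{\Lambdal}-\Phi_B^{\Lambdak}]$, and then apply the same commutator estimate to each term, where one factor is measured in the bulk norm on $\LambdaM$ and the other in the full norm $\|\cdot\|_{\zeta^{(i)},n+d}$. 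Differentiating $i$ times first and distributing derivatives by Leibniz keeps us in the same shape, at the cost of a finite sum over $i'\le i$ and the corresponding $\zeta^{(i')}$ (one absorbs all of these into a single $\zeta^{(i)}\in\mathcal{S}$ using logarithmic superadditivity, exactly as in the definition of smooth-and-bounded). Since $\Phi_A$ and $\Phi_B$ each have a thermodynamic limit, their bulk-norm differences on $\LambdaM$ go to zero as $k,l\to\infty$ by Lemma \ref{lem:equivalentchar}, while the other factors are uniformly bounded; hence the product is $\le\delta$ for $k,l$ large, which is the desired Cauchy condition. The final sentence of the lemma (the $\mathcal{L}_{I,\mathcal{S},\infty}$ case) is then immediate: given $n$, pick $\zeta_n\in\mathcal{S}$ with $A,B\in\mathcal{L}_{I,\zeta_n,n+d}$ having thermodynamic limits (possible since $A,B\in\mathcal{L}_{I,\mathcal{S},\infty}$ and one may always pass to a common $\zeta$), apply the first part, and note $\zeta_n\in\mathcal{S}$ as needed.

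The main obstacle I expect is not conceptual but bookkeeping: matching the bulk norm $\|\cdot\|_{\zeta,n,\LambdaM}$ — which uses the genuine $\ell^1$-metric $d$ and $\mathrm{diam}$ on $\Gamma$ — against the commutator estimate, which is naturally phrased with the $d^{\Lambdak}$-metrics, and making sure the convolution constant $C_\zeta$ and the $\|F_\zeta\|_\Gamma$ factors that appear are the finite, $k$-independent ones so that the estimate is genuinely uniform in the system size. One has to be careful that when a set $X\subset\LambdaM$ is split as $X_1\cup X_2$, both pieces still sit inside $\LambdaM$, so the bulk-compatibility condition on the metrics lets one replace $d^{\Lambdak}$ by $d$ freely on the relevant scales; this is the only place where the geometric hypotheses are used, and it is routine but must be stated. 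Everything else is the triangle inequality and Leibniz.
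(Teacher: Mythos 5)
Your proof follows essentially the same route as the paper's: triangle inequality via a telescoping split of the commutator, the known commutator estimate for \SLT-interactions (the paper cites Lemma~4.5 of \cite{BDF}, you cite the Teufel/Monaco--Teufel analogue, which is the same content), reduction of time-derivatives by the Leibniz rule, and then Lemma~\ref{lem:equivalentchar} to conclude. Your additional bookkeeping remarks about the $d^{\Lambdak}$ versus $d$ metrics and about writing the commutator interaction explicitly are harmless elaborations, not a different argument.
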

\begin{proof}
	First, the fact, that $[A,B] \in \mathcal{L}_{I,\zeta, n }$ is a consequence of Lemma 4.5 in \cite{BDF}. Now, it suffices to investigate the norm difference for the commutator without any time-derivative, since the time derivative of a commutator yields
	$\tfrac{\mathrm{d}}{\mathrm{d}t}[A,B] = [\tfrac{\mathrm{d}}{\mathrm{d}t}A,B] + [A,\tfrac{\mathrm{d}}{\mathrm{d}t}B]$.	In the proof, we omit the dependence on  time   and get 
	\begin{align*}
	 \left\Vert  \Phi_{[A^{\Lambdal},B^{\Lambdal}]} \right.&-  \left.\Phi_{[A^{\Lambdak},B^{\Lambdak}]} \right\Vert_{\zeta, n,\LambdaM}\;\le\;  \left\Vert \Phi_{[A^{\Lambdal},B^{\Lambdal}-B^{\Lambdak}]}  \right\Vert_{\zeta, n,\LambdaM} + \left\Vert   \Phi_{[A^{\Lambdal}-A^{\Lambdak},B^{\Lambdak}]} \right\Vert_{\zeta, n,\LambdaM} \\ \le\ & C \  \left(\left\Vert \Phi_A\right\Vert_{I, \zeta, n+d }  \left\Vert \Phi_B^{\Lambdal}-\Phi_B^{\Lambdak} \right\Vert_{\zeta, n+d,\LambdaM}\;+\; \left\Vert \Phi_B\right\Vert_{I, \zeta, n+d } \left\Vert  \Phi_A^{\Lambdal}-\Phi_A^{\Lambdak} \right\Vert_{\zeta, n+d,\LambdaM} \right).
	\end{align*}
	The first inequality follows from triangle inequality, the second one follows from the above mentioned lemma.     Lemma \ref{lem:equivalentchar} then implies the claim. 
\end{proof}
\begin{lem} \label{cauchy2}  {\rm (Thermodynamic limit of commutators with Lipschitz potentials)}\\
	Let $A \in \mathcal{L}_{I,\zeta, n+d+1 }$ and $v \in \mathcal{V}_I$ both have a thermodynamic limit. Then the commutator $[A,V_v] \in \mathcal{L}_{I,\zeta, n }$ also has a thermodynamic limit. In particular, if $A \in \mathcal{L}_{I,\mathcal{S}, \infty }$ has a thermodynamic limit, then $[A,V_v] \in \mathcal{L}_{I,\mathcal{S}, \infty }$ also has a thermodynamic limit. 
\end{lem}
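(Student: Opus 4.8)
The plan is to run the scheme of the proof of Lemma~\ref{cauchy1}, with one simplification and one extra input coming from the Lipschitz structure. First I would recall that $[A,V_v]\in\mathcal{L}_{I,\zeta,n}$ whenever $A\in\mathcal{L}_{I,\zeta,n+d+1}$ and $v\in\mathcal{V}_I$ is a known estimate (the analogue for Lipschitz potentials of Lemma~4.5 in \cite{BDF}, used already in \cite{MT,Teu17}): for each local term $\Phi_A^{\Lambdak}(X)\in\mathcal{A}_X^N$ one uses $N$-conservation, i.e.\ $[\Phi_A^{\Lambdak}(X),N_X]=0$, to replace $V_v$ inside the commutator by $\sum_{x\in X}\big(v^{\Lambdak}(x)-v^{\Lambdak}(y)\big)a_x^*a_x$ for an arbitrary reference point $y\in X$, after which the Lipschitz bound produces a factor controlled by $C_v\,\mathrm{diam}(X)^{\,d+1}$ up to geometric constants; this accounts for the loss of $d+1$ powers of the diameter relative to the input. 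In particular this computation shows that the interaction $\Phi_{[A,V_v]}^{\Lambdak}(X)$ of the commutator depends only on $\Phi_A^{\Lambdak}(X)$ and on the restriction $v^{\Lambdak}\big|_X$.

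Next, by Lemma~\ref{lem:equivalentchar} it suffices to verify the Cauchy property for $\Phi_{[A,V_v]}$. Fix $i\in\N_0$, $M\in\N$ and $\delta>0$. By the Leibniz rule $\tfrac{\D^i}{\D t^i}[A,V_v]=\sum_{j=0}^{i}\binom{i}{j}[A^{(j)},V_{v^{(i-j)}}]$, where each $A^{(j)}$ has a thermodynamic limit by the definition of a smooth and bounded time-dependent interaction, and each $v^{(i-j)}$ has a thermodynamic limit since the defining condition for potentials (eventual pointwise equality on each box $\LambdaM$) is preserved under $t$-differentiation. Hence it is enough to bound, for each fixed $j$, the quantity $\sup_{t\in I}\big\|\big(\Phi_{[A^{(j)},V_{v^{(i-j)}}]}^{\Lambdal}-\Phi_{[A^{(j)},V_{v^{(i-j)}}]}^{\Lambdak}\big)(t)\big\|_{\zeta^{(j)},n,\LambdaM}$ for $k,l\ge K$ large.

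Here the Lipschitz setting is actually easier than Lemma~\ref{cauchy1}: since $v^{(i-j)}$ has a thermodynamic limit, there is $K\ge M$ with $v^{(i-j),\Lambdak}\big|_{\LambdaM}=v^{(i-j),\Lambdal}\big|_{\LambdaM}$ for all $k,l\ge K$, so by the locality of $\Phi_{[\cdot,V_\cdot]}$ noted above the potential factors entering $\Phi_{[A^{(j)},V_{v^{(i-j)}}]}^{\Lambda_\bullet}(X)$ for $X\subset\LambdaM$ coincide, and \emph{only} the difference $\Phi_{A^{(j)}}^{\Lambdal}-\Phi_{A^{(j)}}^{\Lambdak}$ survives --- there is no analogue of the potential-difference term appearing in the proof of Lemma~\ref{cauchy1}. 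Applying the commutator-with-potential estimate on the box $\LambdaM$ (with the $\ell^1$-metric, and with the limiting Lipschitz constant $\sup_{t\in I}C^\circ_{v_\infty^{(i-j)}(t)}<\infty$ in the constant, together with $C_{\zeta}$ and $\Vert F_\zeta\Vert_\Gamma$) yields
\[
\sup_{t\in I}\Big\|\Big(\Phi_{[A^{(j)},V_{v^{(i-j)}}]}^{\Lambdal}-\Phi_{[A^{(j)},V_{v^{(i-j)}}]}^{\Lambdak}\Big)(t)\Big\|_{\zeta^{(j)},\,n,\,\LambdaM}\ \le\ C\,\sup_{t\in I}\Big\|\Big(\Phi_{A^{(j)}}^{\Lambdal}-\Phi_{A^{(j)}}^{\Lambdak}\Big)(t)\Big\|_{\zeta^{(j)},\,n+d+1,\,\LambdaM}\,,
\]
and the right-hand side is $\le \delta\,2^{-i}\binom{i}{j}^{-1}$ once $K$ is enlarged, because $A^{(j)}\in\mathcal{L}_{I,\zeta^{(j)},n+d+1}$ has a thermodynamic limit and thus satisfies the Cauchy property of Lemma~\ref{lem:equivalentchar}. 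Summing over $j$ gives the Cauchy property for $\Phi_{[A,V_v]}$, and Lemma~\ref{lem:equivalentchar} concludes the first claim. The ``in particular'' statement follows because $A\in\mathcal{L}_{I,\mathcal{S},\infty}$ with a thermodynamic limit means that for each $n$ there is $\zeta_n\in\mathcal{S}$ with $A\in\mathcal{L}_{I,\zeta_n,n+d+1}$ having a thermodynamic limit, whence $[A,V_v]\in\mathcal{L}_{I,\zeta_n,n}$ has one too for every $n$.

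The main obstacle I anticipate is bookkeeping rather than conceptual: one must make sure the commutator-with-Lipschitz-potential estimate is available (or quickly derived) in the \emph{bulk} norm $\Vert\cdot\Vert_{\zeta,n,\LambdaM}$ and is genuinely local, i.e.\ that $\Vert[\Phi,V_v]\Vert_{\zeta,n,\LambdaM}$ is controlled by $C_{v|_{\LambdaM}}\,\Vert\Phi\Vert_{\zeta,n+d+1,\LambdaM}$ through $v|_{\LambdaM}$ only; granting that, everything else is a routine adaptation of Lemma~\ref{cauchy1} together with the exact eventual equality of potentials on boxes.
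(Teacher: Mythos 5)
Your proposal is correct and follows essentially the same route as the paper's proof: both reduce to the Cauchy condition of Lemma~\ref{lem:equivalentchar}, observe that the eventual exact equality $v^{\Lambdak}|_{\LambdaM}=v^{\Lambdal}|_{\LambdaM}$ kills the potential-difference term, and bound the remaining $\Phi_A^{\Lambdal}-\Phi_A^{\Lambdak}$ term via the commutator-with-Lipschitz-potential estimate (Lemma~2.1 of~\cite{Teu17}) on the bulk norm $\Vert\cdot\Vert_{\zeta,n+d+1,\LambdaM}$. You spell out the $N$-conservation mechanism behind that estimate and the Leibniz bookkeeping for time derivatives a bit more explicitly, but the underlying argument is the one in the paper.
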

\begin{proof}
	The proof is analogous to the one of Lemma \ref{cauchy1}. Here, we use Lemma~2.1 from~\cite{Teu17}. Again, the dependence on   time   is omitted and we get
	\begin{align*}
	 \left\Vert \Phi_{[A^{\Lambdal},V_v^{\Lambdal}]}  - \Phi_{[A^{\Lambdak},V_v^{\Lambdak}]}  \right\Vert_{\zeta, n,\LambdaM} & \;\le\; \left\Vert  \Phi_{[A^{\Lambdal},V_v^{\Lambdal}-V_v^{\Lambdak}]} \right\Vert_{\zeta, n,\LambdaM} + \left\Vert   \Phi_{[A^{\Lambdal}-A^{\Lambdak},V_v^{\Lambdak}]} \right\Vert_{\zeta, n,\LambdaM} \\ & \;\le\;  \frac{r}{2} \,C_v \left\Vert \Phi_A^{\Lambdal}-\Phi_A^{\Lambdak} \right\Vert_{\zeta, n+d+1,\LambdaM}. 
	\end{align*}
	The first inequality follows again from the triangle inequality, the second one from the proof of the above mentioned Lemma 2.1 in \cite{Teu17} and the fact that $v$ is eventually independent of $k$.   Lemma \ref{lem:equivalentchar} then implies the claim.  
\end{proof}
In contrast to the two lemmata above, proving that the \SLT-invers of the Liouvillian has a thermodynamic limit is considerably more complicated --- the notions of conditional expectations (cf.\ Lemma~\ref{partialtraces}) and local approximations of operators and a combination of two estimates (one ensuring quasi-locality, one ensuring the Cauchy-property from  Lemma \ref{lem:equivalentchar}) are needed. 

Let us briefly recall the definition of the \SLT-invers of the Liouvillian. For any $g >0$ (spectral gap of $H_0$) one can find a real-valued, odd function $W_{g} \in L^1(\mathbb{R})$ satisfying
\begin{equation*}
\sup_{s \in \R } \vert s \vert^n \vert W(s) \vert < \infty, \quad \forall n \in \mathbb{N}_0,
\end{equation*}
and with Fourier transform $\widehat{W_{g}} \in C^{\infty}(\mathbb{R})$ satisfying
\begin{equation*}
\widehat{W_{g}}(\omega) = \frac{-\mathrm{i}}{\sqrt{2\pi}\omega} \quad \text{for} \ \vert \omega \vert \ge g \qquad \text{and} \qquad \widehat{W_{g}}(0) = 0.
\end{equation*}
An explicit function $W_{g}$ having all these properties is constructed in \cite{BMNS}. For our purposes (see the Gap Assumption in Section~\ref{sec:ass}), we need a slightly modified version $W_{g, \tilde{g}}$ of this function, cf.\ \cite{Teu17}: Let $g > \tilde{g}>0$ and $\chi_{g,\tilde{g}}\in C^{\infty}(\mathbb{R})$ a real valued function with $\chi_{g,\tilde{g}}(\omega) =0$ for $\omega \in [-\tilde{g}, \tilde{g}]$ and $\chi_{g,\tilde{g}}(\omega) = 1$ for $\vert \omega \vert \ge g$. Then $W_{g,\tilde{g}}$ defined through its Fourier transform as
\begin{equation*}
\widehat{W}_{g, \tilde{g}} = \chi_{g,\tilde{g}} \cdot \widehat{W}_{g}
\end{equation*}
satisfies, in addition to the properties mentioned above for $W_g$, also $ = 0$ for all $\omega \in [-\tilde{g}, \tilde{g}]$. From now on we will drop the subscripts $g$ and $\tilde{g}$. 

For a local operator $A \in \mathcal{A}_Y$ with $Y \subset \Lambdak$, the \SLT-invers of the Liouvillian is defined as
\begin{equation*}
\mathcal{I}_{H_0^{\Lambdak}}\Ab{A} := \int_{\mathbb{R}} \mathrm{d}s \ W(s) \ \mathrm{e}^{\mathrm{i}s\mathcal{L}^{\Lambdak}_{H_0}}\Ab{A}, 
\end{equation*}
where
\begin{equation*}
\mathrm{e}^{\mathrm{i}s\mathcal{L}^{\Lambdak}_{H_0}}\Ab{A} := \mathrm{e}^{\mathrm{i}sH_0^{\Lambdak}}A\mathrm{e}^{-\mathrm{i}sH_0^{\Lambdak}}
\end{equation*}
is the Heisenberg time-evolution of   $A$ generated by $H_0^{\Lambdak}$ as in Proposition~\ref{tdlofcauchyinteractions}. 
{As in \cite{BMNS} we write a   local decomposition of $\mathcal{I}_{H_0^{\Lambdak}}\Ab{A}$ as}
\begin{equation*}
\mathcal{I}_{H_0^{\Lambdak}}\Ab{A} = \sum_{m=0}^{\infty}\Delta_m^{\Lambdak}\Ab{A}
\end{equation*}
with
\begin{equation*}
\Delta_0^{\Lambdak}\Ab{A} = \int_{\mathbb{R}} \mathrm{d}s \ W(s) \cdot \mathbb{E}^{\Lambdak}_{Y \cap {\Lambdak}}\circ \mathrm{e}^{\mathrm{i}s\mathcal{L}^{\Lambdak}_{H_0}}\Ab{A}\,,
\end{equation*}
where, for $Z \subset Z'$, $\mathbb{E}^{Z'}_Z: \mathcal{A}_{Z'} \to \mathcal{A}_Z $ is the conditional expectation (on even observables!). Furthermore, for $m \ge 1$, set
\begin{equation*}
\Delta_m^{\Lambdak}\Ab{A} := \int_{\mathbb{R}} \mathrm{d}s \ W(s) \left( \mathbb{E}^{\Lambdak}_{Y_m\cap {\Lambdak}} -\mathbb{E}^{\Lambdak}_{Y_{m-1}\cap {\Lambdak}}\right) \circ \mathrm{e}^{\mathrm{i}s\mathcal{L}^{\Lambdak}_{H_0}}\Ab{A}  \,.
\end{equation*}
For the above definition we used, for $\delta \ge 0$, the $\delta$-fattening of a set $Y$,   defined as 
\begin{equation}\label{fattening}
Y_{\delta} = \set{z \in \Gamma : \mathrm{dist}(z,Y) \le \delta}. 
\end{equation}
Note that an interaction for $\mathcal{I}_{H_0}(B)$ is given by
\begin{equation*}
\Phi^{\Lambdak}_{\mathcal{I}_{H_0}(B)}(Z) = \sum_{m=0}^{\infty} \sum_{ Y \subset \Lambdak , Y_m = Z}\Delta_m^{\Lambdak}\Ab{\Phi^{\Lambdak}_{B}(Y)}, 
\end{equation*}
where $\Phi_{B}$ is an interaction for $B$. 
\begin{lem} \label{cauchy3} {\rm (Thermodynamic limit of the inverse Liouvillian)}\\
	  Let $H \in \mathcal{L}_{I,\exp(-a \, \cdot), \infty}$ and $B$ either an SLT-operator in $\mathcal{L}_{I,\mathcal{S}, \infty }$ or a Lipschitz potential. Assume that $H$ and  $[H,B] \in \mathcal{L}_{I,\mathcal{S}, \infty }$ both have a thermodynamic limit.  Then $\mathcal{I}_{H}(B) \in \mathcal{L}_{I,\mathcal{S}, \infty }$ also has a thermodynamic limit.
\end{lem}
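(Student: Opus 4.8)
\section*{Proof proposal}

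The plan is to reduce, via Lemma~\ref{lem:equivalentchar}, to checking the Cauchy property of the interaction $\Phi_{\mathcal{I}_H(B)}$, and to verify it by combining the local decomposition $\mathcal{I}_{H^{\Lambdak}}\Ab{\cdot}=\sum_m\Delta_m^{\Lambdak}\Ab{\cdot}$ with the comparison of dynamics on different domains (Theorem~\ref{differenceoftwodynamics}). That $\mathcal{I}_H(B)$ lies in $\mathcal{L}_{I,\mathcal{S},\infty}$, with bounds uniform in the box size, is the quasi-locality estimate of \cite{BMNS,Teu17} (exponential clustering via Lieb--Robinson bounds); I take this as known and concentrate on the thermodynamic limit. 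The structural observation that makes the argument go through is that, integrating by parts in the evolution variable and using $\widehat{W}(0)=0$,
\[
\mathcal{I}_{H^{\Lambdak}}(B^{\Lambdak})\;=\;\I\int_{\R}\widetilde{W}(s)\;\mathrm{e}^{\I s\mathcal{L}^{\Lambdak}_{H}}\Ab{C^{\Lambdak}}\,\D s\,,\qquad C:=[H,B]\,,\quad \widetilde{W}(s):=\int_{|s|}^{\infty}W(u)\,\D u\,,
\]
so that $\mathcal{I}_H(B)$ depends on $B$ only through $C=[H,B]$, which by hypothesis \emph{does} have a thermodynamic limit --- and hence so do all of its time-derivatives, by Definition~\ref{cauchydefinition}. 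This is essential in the Lipschitz case, where $B=V_v$ itself (with its single-site interaction $v^{\Lambdak}(x)\,a_x^*a_x$) need not converge; the weight $\widetilde{W}$ is again even, in $L^1(\R)$ and faster than polynomially decaying, so the $\Delta_m$-decomposition may be carried out with $\widetilde{W}$ and (the canonical interaction $\Phi_C$ of) $C$ in place of $W$ and $\Phi_B$, which fixes a concrete interaction for $\mathcal{I}_H(B)$.

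By Lemma~\ref{lem:equivalentchar} it then remains, for fixed $i\in\N_0$, $M\in\N$ and $\delta>0$, to produce $K\ge M$ with $\sup_{t\in I}\bigl\Vert\tfrac{\D^i}{\D t^i}\bigl(\Phi_{\mathcal{I}_H(B)}^{\Lambdal}-\Phi_{\mathcal{I}_H(B)}^{\Lambdak}\bigr)(t)\bigr\Vert_{\zeta^{(i)},n,\LambdaM}\le\delta$ for all $k,l\ge K$. Differentiating the displayed identity in $t$ produces, by repeated use of Duhamel's formula and $\tfrac{\D}{\D t}[H,B]=[\dot H,B]+[H,\dot B]$, a finite sum of terms of the same structural type --- iterated integrals against rapidly decaying kernels of nested Heisenberg evolutions $\mathrm{e}^{\I(\cdot)\mathcal{L}_H}$ of iterated commutators of the time-derivatives of $H$ with (time-derivatives of) $C=[H,B]$ --- and, since $H$ and $C$ (hence all their time-derivatives) have thermodynamic limits, each such term is handled by the $i=0$ argument below together with Lemma~\ref{cauchy1} (and Lemma~\ref{cauchy2}); so I restrict to $i=0$. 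There the norm $\Vert\cdot\Vert_{\zeta,n,\LambdaM}$ involves only the finitely many $Z\subset\LambdaM$, and in $\Phi^{\Lambdak}_{\mathcal{I}_H(B)}(Z)=\sum_{m\ge0}\sum_{Y\subset\Lambdak,\,Y_m=Z}\Delta_m^{\Lambdak}\Ab{\Phi_C^{\Lambdak}(Y)}$ only the finitely many pairs with $Y\subseteq Z$ and $2m\le\mathrm{diam}(Z)\le 2Md$ contribute (since $Y_m=Z\subset\LambdaM$ forces an $m$-ball into $Z$). Hence it suffices to show, for each such fixed $m$ and $Y$, that $\bigl\Vert\Delta_m^{\Lambdal}\Ab{\Phi_C^{\Lambdal}(Y)}-\Delta_m^{\Lambdak}\Ab{\Phi_C^{\Lambdak}(Y)}\bigr\Vert\to0$ as $k,l\to\infty$.

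I would split this into $\Delta_m^{\Lambdal}\Ab{\Phi_C^{\Lambdal}(Y)-\Phi_C^{\Lambdak}(Y)}$ plus $\bigl(\Delta_m^{\Lambdal}-\Delta_m^{\Lambdak}\bigr)\Ab{\Phi_C^{\Lambdak}(Y)}$. The first term is at most $2\Vert\widetilde{W}\Vert_{L^1}\bigl\Vert\Phi_C^{\Lambdal}(Y)-\Phi_C^{\Lambdak}(Y)\bigr\Vert\to0$ because $C=[H,B]$ has a thermodynamic limit. For the second term, once $k$ (hence $l$) is large enough that $Y_m\subset\Lambdak\subset\Lambdal$, the conditional expectations on even observables restrict consistently (Lemma~\ref{partialtraces}), so $\mathbb{E}^{\Lambdal}_{Y_m}$ coincides with $\mathbb{E}^{\Lambdak}_{Y_m}$ on $\mathcal{A}_{\Lambdak}$ and
\[
\bigl(\Delta_m^{\Lambdal}-\Delta_m^{\Lambdak}\bigr)\Ab{\Phi_C^{\Lambdak}(Y)}=\I\int_{\R}\widetilde{W}(s)\;\bigl(\mathbb{E}^{\Lambdal}_{Y_m}-\mathbb{E}^{\Lambdal}_{Y_{m-1}}\bigr)\circ\bigl(\mathrm{e}^{\I s\mathcal{L}^{\Lambdal}_H}-\mathrm{e}^{\I s\mathcal{L}^{\Lambdak}_H}\bigr)\Ab{\Phi_C^{\Lambdak}(Y)}\,\D s\,,
\]
whose norm is bounded by $2\int_{\R}|\widetilde{W}(s)|\,\bigl\Vert\bigl(\mathrm{e}^{\I s\mathcal{L}^{\Lambdal}_H}-\mathrm{e}^{\I s\mathcal{L}^{\Lambdak}_H}\bigr)\Ab{\Phi_C^{\Lambdak}(Y)}\bigr\Vert\,\D s$.

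The hard part is this last integral. The dynamics-comparison estimate of Theorem~\ref{differenceoftwodynamics} (parts (i) and (ii), with an auxiliary box $\LambdaMp\supset\LambdaM$, combined exactly as in the proof of Proposition~\ref{tdlofcauchyinteractions}) bounds the integrand by a quantity that grows like $\mathrm{e}^{c|s|}|s|$ in the evolution time $s$, times the sum of a boundary contribution that vanishes as $M'\to\infty$ (a tail of the summable $F_\zeta$) and a comparison contribution that vanishes as $k,l\to\infty$ (because $H$ has a thermodynamic limit). Since $\widetilde{W}$ decays only sub-exponentially, $\int_{\R}|\widetilde{W}(s)|\,\mathrm{e}^{c|s|}\,\D s$ need not converge, so one splits $\int_{\R}=\int_{|s|\le T}+\int_{|s|>T}$: on the tail one uses the trivial bound $\bigl\Vert(\mathrm{e}^{\I s\mathcal{L}^{\Lambdal}_H}-\mathrm{e}^{\I s\mathcal{L}^{\Lambdak}_H})\Ab{\Phi_C^{\Lambdak}(Y)}\bigr\Vert\le2\Vert\Phi_C^{\Lambdak}(Y)\Vert$, so that it contributes at most $2\Vert\Phi_C^{\Lambdak}(Y)\Vert\int_{|s|>T}|\widetilde{W}(s)|\,\D s\to0$ as $T\to\infty$; on $|s|\le T$ one uses Theorem~\ref{differenceoftwodynamics}, choosing $T$ first, then $M'$ (which enters the bulk only through the fixed factor $\mathrm{e}^{cT}T$), then $K$. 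Summing the finitely many contributions over the pairs $(m,Y)$ and the sets $Z\subset\LambdaM$ yields the Cauchy estimate, and Lemma~\ref{lem:equivalentchar} then gives the claim; the Lipschitz-potential case is identical once one has passed to $C=[H,B]$.
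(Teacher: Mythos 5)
Your proposal is substantially correct and follows the same overall strategy as the paper: reduce via Lemma~\ref{lem:equivalentchar} to a Cauchy estimate for the interaction in the norm $\Vert\cdot\Vert_{\zeta^{(i)},n,\LambdaM}$, split the local $\Delta_m$-decomposition into an ``interaction difference'' piece and a ``dynamics difference'' piece, and control the latter by Theorem~\ref{differenceoftwodynamics} after cutting the $s$-integral at $|s|\le T$. Time derivatives are again handled by Duhamel together with Lemmas~\ref{cauchy1} and~\ref{cauchy2}, as in the paper.

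Two aspects of your argument genuinely diverge from the paper's, and both are worth noting. First, the integration by parts $\mathcal{I}_H(B)=\I\int\widetilde W(s)\,\mathrm{e}^{\I s\mathcal{L}_H}\Ab{C}\,\D s$ with $C=[H,B]$ and $\widetilde W(s)=\int_{|s|}^\infty W$ gives a clean, unified treatment of the SLT and Lipschitz cases; this plays the role of the paper's appeal to ``the argumentation of Lemma C.2 in \cite{Teu17}'', and is arguably more transparent because the hypothesis $[H,B]\in\mathcal{L}_{I,\mathcal{S},\infty}$ having a thermodynamic limit is then used directly. Second, you observe that for fixed $\LambdaM$ the constraint $Y_m=Z\subset\LambdaM$ restricts to $Y\subseteq Z$ and $m\le\tfrac12\,\mathrm{diam}(Z)$, hence the sum defining $\Phi^\Lambdak_{\mathcal{I}_H(B)}(Z)$ involves only \emph{finitely} many pairs $(m,Y)$ and term-by-term convergence suffices. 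The paper instead drops the constraint $Y_m\subset\LambdaM$ in a reindexing step, obtaining an unbounded sum over $m$, and must therefore interpolate between the quasi-locality bound ($\hat\zeta(m)$-decay, estimate~I) and the Cauchy bound (estimate~II) via $c\le\sqrt{a\,b}$; your observation renders this interpolation unnecessary for the Cauchy property itself. Your argument buys a simpler and more modular proof of the Cauchy property; the paper's interpolation buys a bound whose constant is uniform in $M$, which is more than what Lemma~\ref{lem:equivalentchar} requires.

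One point you should make explicit: by passing to the kernel $\widetilde W$ and the interaction $\Phi_C$ you are defining a \emph{different} interaction for the same SLT operator $\mathcal{I}_H(B)$ than the $W/\Phi_B$ interaction of the paper. That is legitimate (interactions for a given operator are not unique, and ``having a thermodynamic limit'' is a property of the interaction), but then the uniform-in-$k$ quasi-locality statement $\mathcal{I}_H(B)\in\mathcal{L}_{I,\mathcal{S},\infty}$ must also be checked for \emph{your} decomposition; you cite \cite{BMNS,Teu17} for this, but those references establish it for the $W/\Phi_B$ decomposition. The ingredients (rapid decay of $\widetilde W$, Lieb--Robinson bounds for $\mathrm{e}^{\I s\mathcal{L}_H}$, quasi-locality of $\Phi_C$) are identical, so the adaptation is routine, but it is a step your write-up currently glosses over.
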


\begin{proof}
	First, $\mathcal{I}_{H}(B)(t) $ and all time derivatives are in $\mathcal{L}_{\mathcal{S},\infty }$ uniformly in $t$ by Lemma C.2 in \cite{Teu17} and/or Lemma 4.7 in \cite{BDF} and Proposition 4.1 (iv) in~\cite{BDF}. So we get that $\mathcal{I}_{H}(B) \in \mathcal{L}_{I,\mathcal{S},\infty }$. By the argumentation in the proof of Lemma~C.2 in \cite{Teu17}, we can now restrict our attention to the case of $B \in \mathcal{L}_{I,\mathcal{S}, \infty }$  having a thermodynamic limit. We  use again the characterization from Lemma \ref{lem:equivalentchar} and  carry out the proof for the estimate without any time-derivative (i.e. $i=0$) and argue for general $i \in \mathbb{N}$ afterwards (recall Definition~\ref{cauchydefinition}).
	
Let $n \in \mathbb{N}$ and $B \in \mathcal{L}_{I, \zeta_{}, n+d }$. Moreover, let $\tilde{\zeta} \in \mathcal{S}$ denote the function constructed from $\zeta \in \mathcal{S}$ as in the proof of Lemma~4.7 in \cite{BDF}. With the aid of Lemma A.1 in \cite{MT}, we can assume w.l.o.g.\ that $B \in \mathcal{L}_{I, \tilde{\zeta}_{}, n+d }$. Again, we omit the dependence on   time and find 
	\begin{align*}
	&\left\Vert  \Phi_{\mathcal{I}_{H^{\Lambdal}}(B^{\Lambdal})}- \Phi_{\mathcal{I}_{H^{\Lambdak}}(B^{\Lambdak})} \right\Vert_{\tilde{\zeta},n,\LambdaM} \\ & \qquad \le \;  \left\Vert \Phi_{\mathcal{I}_{H^{\Lambdal}}(B^{\Lambdal} - B^{\Lambdak})}  \right\Vert_{\tilde{\zeta},n,\LambdaM} \ + \ \left\Vert  \Phi_{\mathcal{I}_{H^{\Lambdal}}(B^{\Lambdak})}- \Phi_{\mathcal{I}_{H^{\Lambdak}}(B^{\Lambdak})} \right\Vert_{\tilde{\zeta},n,\Lambda_M}  \;=: \; S_1 + S_2\,. 
	\end{align*}
	Each part is now estimated separately. First, we have
	\begin{equation*}
	S_1 \le C \left\Vert \Phi_B^{\Lambdal}-\Phi_B^{\Lambdak} \right\Vert_{\zeta, n+d,\LambdaM}
	\end{equation*}
	by Lemma 4.7 in \cite{BDF}. For the estimate of $S_2$, recall the definition of the norm in \eqref{normdefinition}. Now, we leave $\sup_{x,y \in \Gamma} \frac{1}{F_{\tilde{\zeta}}(d^{\LambdaM}(x,y))}$ aside and have 
	\begin{align*}
	 \sum_{\substack{Z\subset \LambdaM \\ x,y \in Z}} &d^\LambdaM\text{-diam}(Z)^n \Big\Vert \sum_{m=0}^{\infty} \sum_{\substack{Y \subset \Lambdak \\ Y_m = Z}} \left( \Delta_m^{\Lambdal}- \Delta_m^{\Lambdak}\right)\Ab{\Phi_B^{\Lambdak}(Y)}\Big\Vert \\ \le \quad& \sum_{\substack{Z\subset \LambdaM \\ x,y \in Z}} d^\LambdaM\text{-diam}(Z)^n  \sum_{m=0}^{\infty} \sum_{\substack{Y \subset \Lambdak \\ Y_m = Z}} \left\Vert\left( \Delta_m^{\Lambdal}- \Delta_m^{\Lambdak}\right)\Ab{\Phi_B^{\Lambdak}(Y)}\right\Vert \\= \quad & \sum_{Y \subset \LambdaM} \sum_{m=0}^{\infty} \chi(x,y \in Y_m) \ d^\LambdaM\text{-diam}(Y_m)^n \left\Vert\left( \Delta_m^{\Lambdal}- \Delta_m^{\Lambdak}\right)\Ab{\Phi_B^{\Lambdak}(Y)}\right\Vert. 
	\end{align*}
	As the next step, we estimate the norm in the last line in two ways: one way will ensure the quasi-locality, one will the ensure the Cauchy-property. 
	
	On the one hand, we have by the properties of the conditional expectation (see the proof of Lemma 4.7 (i) in \cite{BDF})
	\begin{equation*}
	\left\Vert\left( \Delta_m^{\Lambdal}- \Delta_m^{\Lambdak}\right)\Ab{\Phi_B^{\Lambdak}(Y)}\right\Vert \le 2 \ \vert Y \vert \ \left\Vert\Phi_B^{\Lambdak}(Y) \right\Vert \  \hat{\zeta}(m) =: \text{I}
	\end{equation*}
	for some $\hat{\zeta}\in \mathcal{S}$.
	On the other hand, we have for any $T>0$ to be chosen later
	\begin{align*}
	&\left\Vert\left( \Delta_m^{\Lambdal}- \Delta_m^{\Lambdak}\right)\Ab{\Phi_B^{\Lambdak}(Y)}\right\Vert \\&\quad\le\; 4  \Vert W \Vert_{\infty} \int_{\vert s \vert \le T} \mathrm{d}s \ \left\Vert\left( \mathrm{e}^{\mathrm{i}s\mathcal{L}^{\Lambdal}_{H_0}} - \mathrm{e}^{\mathrm{i}s\mathcal{L}^{\Lambdak}_{H_0}}\right)\Ab{\Phi_B^{\Lambdak}(Y)}\right\Vert  \;+ \; 8\underbrace{\int_{T}^{+ \infty} \mathrm{d}s\ \vert W(s) \vert}_{=C(T) \overset{T \to \infty}{\longrightarrow}0}\ \left\Vert \Phi_B^{\Lambdak}(Y) \right\Vert 
	\\ &\quad\le\;   8 \,\vert Y \vert  \left\Vert \Phi_B^{\Lambdak}(Y) \right\Vert \ C(T)\ + 
	\\ &\quad\quad \;+ \; 8 \,\vert Y \vert   \left\Vert \Phi_B^{\Lambdak}(Y) \right\Vert\underbrace{\Vert W \Vert_{\infty} \int_{-T}^{+T} \mathrm{d}s \, |s|\mathrm{e}^{4|s|\Vert \Phi_H \Vert_{I,\exp(-a \, \cdot),0 }}}_{=\tilde{C}(T)}    \Vert \Phi_H\Vert_{I,\exp(-a \, \cdot),0}\ \times
	 \\ &\quad \qquad\times \bigg(\sup_{y\in \Lambdal}\sum_{x \in \Lambdal \setminus\LambdaM}F_{\exp(-a \, \cdot)}(d^{\Lambdal}(x,y)) + \sup_{y\in \Lambdak}\sum_{x \in \Lambdak \setminus\LambdaM}F_{\exp(-a \, \cdot)}(d^{\Lambdak}(x,y))\ \bigg) 
	 \\  &\quad\quad + \; 8 \, \vert Y \vert  \left\Vert \Phi_B^{\Lambdak}(Y) \right\Vert \tilde{C}(T) \left\Vert F_{\exp(-a \, \cdot)} \right\Vert \  \left\Vert \Phi_H^{\Lambdal}-\Phi_H^{\Lambdak} \right\Vert_{\exp(-a \, \cdot),0,\LambdaM} \\ &\quad=:\;\text{II} 
	\end{align*}
	The last inequality is a combination of the estimates in Theorem \ref{differenceoftwodynamics} after using triangle inequality. 
	
For the next step, we use   that for 
	 $0\le c  \le \min(a,b)$ and any $p,q>1$ with $\frac{1}{p} + \frac{1}{q} = 1$ we have 
	\begin{equation}\label{resummationrmk}
	c \le a^{\frac{1}{p}} \cdot b^{\frac{1}{q}}\,.
	\end{equation}
For $p=q=2$  we can combine the estimates above in order to obtain
	\begin{align*}
& \hspace{-20pt}\left\Vert\left( \Delta_m^{\Lambdal}- \Delta_m^{\Lambdak}\right)\Ab{\Phi_B^{\Lambdak}(Y)}\right\Vert \le \sqrt{\text{I} \cdot \text{II}} \\ & \le \;  4 \ d^\LambdaM\text{-diam}(Y)^d \ \big\Vert \Phi_B^{\Lambdak}(Y) \big\Vert \ \sqrt{\hat{\zeta}(m)} \ \bigg[C(T)+\tilde{C}(T)   \cdot \bigg( \Vert \Phi_H\Vert_{I,\exp(-a \, \cdot),0 } \,\times \\ \quad \times \ & \bigg(\sup_{y\in \LambdaM}\sum_{x \in \Lambdal \setminus\LambdaMp}F_{\exp(-a \, \cdot)}(d^{\Lambdal}(x,y)) + \sup_{y\in \LambdaM}\sum_{x \in \Lambdak \setminus\LambdaMp}F_{\exp(-a \, \cdot)}(d^{\Lambdak}(x,y)) \ \bigg) \\ \quad  + \ &\Vert F_{\exp(-a \, \cdot)} \Vert \  \left\Vert  \Phi_H^{\Lambdal}-\Phi_H^{\Lambdak} \right\Vert_{\exp(-a \, \cdot),0,\LambdaMp} \bigg)\bigg]^{\frac{1}{2}}. 
\end{align*}
We conclude the proof for $i=0$ by arguing that this estimate yields the required properties. First, $d^\LambdaM\text{-diam}(Y)^d$ raises the decay index from $n$ to $n+d$. Second, the term $\sqrt{\hat{\zeta}(m)}$ (again a function in $\mathcal{S}$!) guarantees the quasi-locality of $\mathcal{I}(B)$ as in the proof of Lemma 4.7 in \cite{BDF}. Third, the term $[...]^{\frac{1}{2}}$ provides the Cauchy-property by choosing $T$ and $\LambdaMp$ appropriate. 
	
	The time-derivatives of  $ \mathcal{I}_{H}(B) $ is 
	\begin{equation*}
\frac{\mathrm{d}}{\mathrm{d}t} \mathcal{I}_{H(t)}(B(t)) \;=\; \mathcal{I}(\dot{B}(t)) + \mathrm{i} \int_{\mathbb{R}} \mathrm{d}s \ s\,W(s) 
\int_0^1 \mathrm{d}u\, \mathrm{e}^{\mathrm{i}u s\mathcal{L}_{H(t)}}\circ \mathcal{L}_{\dot H(t)}\circ
\mathrm{e}^{\mathrm{i}(1-u) s\mathcal{L}_{H(t)}} (B(t))\,.	\end{equation*}
The term $\mathcal{I}(\dot B)$ can be treated as above. 
Also the second term can be estimated by similar arguments:
 We split the integral in two parts with $\vert s \vert \le T$ and $\vert s \vert > T$ for some $T$. In the first part, we deal with the additional integration and the commutator with $\dot{H}(t)$ by employing a local decomposition followed by an application of the Lieb-Robinson bound (Theorem \ref{lrb}) in combination with Lemma \ref{partialtraces}, and Lemma \ref{cauchy1}. In the second part, we use that also $\vert s \vert W(s)$ decays faster than any invers polynomial. Higher derivatives have a similar form and can be estimated analogously with the aid of Theorem \ref{lrb} in combination with Lemma \ref{partialtraces}, and Lemma~\ref{cauchy1}. 
\end{proof}

\section{Resummation of the super-adiabatic NEASS} \label{resummation}
In this appendix, we carry out the resummation of the super-adiabatic NEASS in order to formulate our main results without  dependence on $n$.  
Resumming here means choosing the order $n$ of the approximation as a function of $\epsi$. As we have no  good control on the constant $C_n$ in the bound \eqref{eq:adiabthmwithoutresum}, our resummation provides no improvement in the quality of the bound (e.g.\ as from polynomial to exponential), but its main purpose is to remove the $n$-dependence on the left hand side  of~\eqref{eq:adiabthmwithoutresum} by proving the existence of a single {\it $n$-independent} NEASS (cf.~Lemma \ref{resummation2} below), for which the bound~\eqref{eq:adiabthmwithoutresum} resp.~\eqref{adiabbound} holds for all $n \in \mathbb{N}$.
Such resummations are standard, e.g., in microlocal analysis  \cite{Mar}.

In addition to the general assumptions  of Section~\ref{sec:ass}
 we assume $\rho_0 = P_*(t_0)$ so we don't have to worry about the parallel transport   $\alpha_n$.\footnote{If we would allow a general gapped limit state $\rho_0$ and perform a resummation also for the sequence of automorphisms $\alpha_n$, we would loose the explicit time dependence of our bounds  (see Lemma~C.5 in \cite{MT}),   since the sequence of operators generating $\alpha_n$ is only almost exponentially localised.}

\begin{lem} \label{resummation1} 
	By Theorem \ref{existenceofneassfinite}, we have $\varepsilon S_n^{\varepsilon, \eta} = \sum_{j=1}^{n} \sum_{i=0}^{j}\varepsilon^{i} \eta^{j-i} A_{j,i}$ with $A_{j,i} \in \mathcal{L}_{I,\mathcal{S}, \infty }$. There exists a monotonically decreasing sequence of real numbers $\delta_j \to 0$, such that 
	\begin{equation} \label{resumm_S}
	\varepsilon S^{\varepsilon, \eta} = \sum_{j=1}^{\infty} \chi(\varepsilon / \delta_j) \chi(\eta / \delta_j) \sum_{i=0}^{j}\varepsilon^{i} \eta^{j-i} A_{j,i}
	\end{equation}
	where $\chi = \chi_{[0,1]}$, has the following property: For any $n \in \mathbb{N}$ exists a constant $C_n$ such that $\Vert \varepsilon  S^{\varepsilon} - \varepsilon S^{\varepsilon}_n\Vert_{I,1, 0 } \le C_n \max(\varepsilon, \eta)^n$. In particular, we have $\Vert \varepsilon  S^{\varepsilon} - \varepsilon S^{\varepsilon}_n\Vert_{I,1, 0 } \le C_n (\varepsilon^n + \eta^n)$.
\end{lem}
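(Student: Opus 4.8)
The plan is to treat \eqref{resumm_S} as a reorganisation of the finite sum $\varepsilon S^{\varepsilon,\eta}_n$ into its pieces that are homogeneous in $(\varepsilon,\eta)$, and then to choose the cut-off radii $\delta_j$ decaying fast enough to absorb the (a priori uncontrolled) size of the coefficients. I would set $B_j := \sum_{i=0}^{j}\varepsilon^i\eta^{j-i}A_{j,i}$, so that $\varepsilon S^{\varepsilon,\eta}_n=\sum_{j=1}^{n}B_j$. Since $A_{j,i}\in\mathcal{L}_{I,\mathcal{S},\infty}\subset\mathcal{L}_{I,1,0}$, the numbers $a_j:=\max_{0\le i\le j}\|A_{j,i}\|_{I,1,0}$ are finite, and the triangle inequality for the interaction norm together with $\varepsilon,\eta\le 1$ gives
\[
\|B_j\|_{I,1,0}\ \le\ (j+1)\,a_j\,\max(\varepsilon,\eta)^{\,j}\,.
\]

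Next I would fix $(\delta_j)_{j\ge 1}$ by $\delta_0:=1$ and $\delta_j:=\min\!\big(\tfrac12\delta_{j-1},\ 2^{-j}/((j+1)(a_j+1))\big)$. This sequence is strictly decreasing, satisfies $\delta_j\le 2^{-j}\to 0$ and $\delta_j\le 1$, and crucially $(j+1)\,a_j\,\delta_j\le 2^{-j}$ for all $j$. Because $\chi=\chi_{[0,1]}$, one has $\chi(\varepsilon/\delta_j)\chi(\eta/\delta_j)=1$ if $\max(\varepsilon,\eta)\le\delta_j$ and $=0$ otherwise; since $\delta_j\to0$, for each fixed $(\varepsilon,\eta)\in(0,1]^2$ only finitely many summands in \eqref{resumm_S} are nonzero, so the series is really a finite sum and lies in $\mathcal{L}_{I,\mathcal{S},\infty}$; moreover, since each $A_{j,i}$ has a thermodynamic limit (cf.\ the proof of Theorem~\ref{existenceofneass2}), so does this finite sum. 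Denoting by $J=J(\varepsilon,\eta)$ the largest index with $\delta_J\ge\max(\varepsilon,\eta)$ (with $J=0$ if there is none), monotonicity gives $\varepsilon S^{\varepsilon,\eta}=\sum_{j=1}^{J}B_j$.

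The estimate then splits into two regimes according to the size of $\max(\varepsilon,\eta)$ relative to $\delta_n$. Starting from
\[
\varepsilon S^{\varepsilon,\eta}-\varepsilon S^{\varepsilon,\eta}_n\ =\ \sum_{j>n}\chi(\varepsilon/\delta_j)\chi(\eta/\delta_j)\,B_j\ -\ \sum_{j\le n}\big(1-\chi(\varepsilon/\delta_j)\chi(\eta/\delta_j)\big)\,B_j\,,
\]
I would bound the first sum by noting that every contributing index has $j\ge n+1$ and $\max(\varepsilon,\eta)\le\delta_j\le1$, hence
\[
\|B_j\|_{I,1,0}\ \le\ (j+1)a_j\,\delta_j^{\,j-n}\,\max(\varepsilon,\eta)^{\,n}\ =\ \big[(j+1)a_j\delta_j\big]\,\delta_j^{\,j-n-1}\,\max(\varepsilon,\eta)^{\,n}\ \le\ 2^{-j}\max(\varepsilon,\eta)^{\,n}\,,
\]
which sums to at most $\max(\varepsilon,\eta)^{n}$. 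For the second sum, a contributing index has $\delta_j<\max(\varepsilon,\eta)$ with $j\le n$, so by monotonicity $\max(\varepsilon,\eta)>\delta_n$ and therefore $\|B_j\|_{I,1,0}\le (j+1)a_j\le (j+1)a_j\,\delta_n^{-n}\,\max(\varepsilon,\eta)^{n}$, which sums to at most $\delta_n^{-n}\big(\sum_{j=1}^n(j+1)a_j\big)\max(\varepsilon,\eta)^{n}$. Adding the two contributions gives the claim with $C_n:=1+\delta_n^{-n}\sum_{j=1}^n(j+1)a_j$, and the ``in particular'' part follows from $\max(\varepsilon,\eta)^n\le\varepsilon^n+\eta^n\le 2\max(\varepsilon,\eta)^n$.

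The only genuinely non-routine step is the very first choice of $(\delta_j)$: it has to be made once and for all, before $n$ enters the picture, yet it must simultaneously be monotone, tend to $0$, and dominate the possibly rapid growth of the coefficient norms $a_j$ — this is exactly the content of the requirement $(j+1)a_j\delta_j\le 2^{-j}$, which is what makes both regime estimates close. Everything after that is the short bookkeeping above; none of the $\eta^{-(d+1)}$ or time-growth factors from the adiabatic bounds play any role here, since we only compare $\varepsilon S^{\varepsilon,\eta}$ and $\varepsilon S^{\varepsilon,\eta}_n$ in the static norm $\|\cdot\|_{I,1,0}$.
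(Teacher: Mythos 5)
Your proof is correct and takes essentially the same approach as the paper: choose $\delta_j$ decaying fast enough that $(j+1)\max_i\|A_{j,i}\|\,\delta_j\lesssim 2^{-j}$, then split the difference into the truncated ($j\le n$, indicator absent) and tail ($j>n$, indicator present) parts and bound each by $\max(\varepsilon,\eta)^n$ times an $n$-dependent constant. The only cosmetic differences are that you absorb the $(j+1)$ factor directly into the definition of $\delta_j$ and use the single bound $\delta_n^{-n}$ in place of the paper's $\delta_j^{-n}$ for the truncated part, neither of which changes the substance.
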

\begin{proof}
	Since $A_{j,i} \in \mathcal{L}_{I,\mathcal{S}, \infty }$ for all $j \in \mathbb{N}$ and $i = 0,..., j$, we have $A_{j,i} \in \mathcal{L}_{I,1,0 }$ for all $j \in \mathbb{N}$ and $i = 0,..., j$. For simplicity, we use the abbreviation $\Vert \cdot \Vert = \Vert \cdot \Vert_{I,1,0 }$. 
	Choose $\delta_j \to 0$ monotonically decreasing such that
	\begin{equation*}
	\max(\varepsilon, \eta) \chi(\varepsilon/ \delta_j) \chi(\eta/ \delta_j)\max_{i=0,...,j}\Vert A_{j,i} \Vert \le \max_{i=0,...,j}\Vert A_{j,i} \Vert \ \delta_j \le \left(\tfrac{1}{2}\right)^j, 
	\end{equation*}
	e.g.
	\begin{equation*}
	\delta_j \le \min \left\{\frac{1}{\max_{i=0,...,j}\Vert A_{j,i} \Vert} \left(\tfrac{1}{2}\right)^j, \left(\tfrac{1}{2}\right)^j, \delta_1, ... , \delta_{j-1}\right\}. 
	\end{equation*}
	Then 
	\begin{align*}
	& \hspace{-15pt}\Vert \varepsilon S^{\varepsilon, \eta}- \varepsilon S^{\varepsilon, \eta}_n \Vert \\ 
	\le \ &\sum_{j=1}^{n} \sum_{i=0}^{j}\varepsilon^i \eta^{j-i}  (1- \chi(\varepsilon/ \delta_j) \chi(\eta/ \delta_j)) \Vert A_{j,i}\Vert + \sum_{j=n+1}^{\infty} \sum_{i=0}^{j}\varepsilon^i \eta^{j-i} \chi(\varepsilon/ \delta_j) \chi(\eta/ \delta_j)\Vert A_{j,i}\Vert \\ 
	\le \ &\max(\varepsilon, \eta)^n \sum_{j=1}^{n} \underbrace{\max(\varepsilon, \eta)^j}_{\le 1}  \underbrace{\left(\delta_j/\max(\varepsilon,\eta)\right)^n  (1-\chi(\varepsilon/ \delta_j) \chi(\eta/ \delta_j)) }_{\le 1}  \delta_j^{-n} (j+1) \max_{i=0,...,j} \Vert A_{j,i} \Vert\\ &+ \sum_{j= n+1}^{\infty} \max(\varepsilon, \eta)^{j-1}  (j+1)\underbrace{ \max(\varepsilon, \eta ) \max_{i=0,...,j} \Vert A_{j,i} \Vert  \chi(\varepsilon/ \delta_j) \chi(\eta/ \delta_j)}_{\le  \left(\frac{1}{2}\right)^j} \\ \le \ &   \max(\varepsilon, \eta)^n \left(\sum_{j=1}^{n}  \delta_j^{-n} (j+1) \max_{i=0,...,j} \Vert A_{j,i} \Vert+ 2 \right) = C_n \max(\varepsilon, \eta)^n.  \qedhere 
	\end{align*}
\end{proof}
\begin{lem} \label{resummation2}  
	Fix $\varepsilon \in (0,1]$ and let $S^{\varepsilon}$ as above. Then the limits
	\begin{align*}
	&\mathrm{tr}\left(\Pi^{\varepsilon, \eta, \Lambdak}(t) A\right)  =\mathrm{tr}\left(P_*^{\Lambdak}(t) \mathrm{e}^{-\mathrm{i} \varepsilon S^{\varepsilon, \eta, \Lambdak}(t)} A  \mathrm{e}^{\mathrm{i} \varepsilon S^{\varepsilon, \eta, \Lambdak}(t)}\right) \\  &   \overset{k \to \infty }{\longrightarrow} \quad P_*(t)(\beta^{\varepsilon, \eta}(t)\Ab{A}) = \Pi^{\varepsilon, \eta}(t)(A)  
	\end{align*}
	and 
	\begin{align*}
	 \mathrm{tr}\left(\rho^{\varepsilon, \eta, \Lambdak}(t)A\right)  \;  &= \;\mathrm{tr}\left(\Pi^{\varepsilon, \eta, \Lambdak}(t_0)  \, \mathfrak{U}_{t,t_0}^{\varepsilon, \eta, \Lambdak}\Ab{A}\right) \\ \;&=  \;\mathrm{tr}\left( P_*^{\Lambdak}(t_0)\, \mathrm{e}^{-\mathrm{i} \varepsilon S^{\varepsilon, \eta, \Lambdak}(t_0)}\,\mathfrak{U}_{t,t_0}^{\varepsilon, \eta, \Lambdak}\Ab{A} \,\mathrm{e}^{\mathrm{i}\varepsilon S^{\varepsilon, \eta, \Lambdak}(t_0)}\right) \\
	& \overset{k \to \infty }{\longrightarrow} \quad P_*(t_0)(\beta^{\varepsilon, \eta}(t_0)\circ \mathfrak{U}_{t,t_0}^{\varepsilon, \eta, \Lambdak}\Ab{A}) = \Pi^{\varepsilon, \eta}(t_0)( \mathfrak{U}_{t,t_0}^{\varepsilon, \eta, \Lambdak}\Ab{A})
	\end{align*}  
	exist for all $A \in \Aloc $. In particular, the states $\Pi^{\varepsilon, \eta}(t)$ on $\mathcal{A}$ are well defined.
\end{lem}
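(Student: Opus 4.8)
The plan is to reduce everything to facts already established: the thermodynamic limit of SLT operator families (Proposition~\ref{tdlofcauchyinteractions}), the two convergence lemmata (Lemma~\ref{automorphismlemm} and Lemma~\ref{stateautolemm}), and the weak$^*$-convergence assumption on the normalised spectral projections from Section~\ref{sec:ass}. The first — and really the only nontrivial — observation is that, for every fixed $\varepsilon\in(0,1]$, the apparently infinite sum in~\eqref{resumm_S} is actually finite: since $\delta_j\to 0$ monotonically, there is $J=J(\varepsilon)$ with $\delta_j<\varepsilon$, hence $\chi(\varepsilon/\delta_j)=0$, for all $j\ge J$. Thus $\varepsilon S^{\varepsilon,\eta}(t)=\sum_{j=1}^{J-1}\chi(\varepsilon/\delta_j)\chi(\eta/\delta_j)\sum_{i=0}^{j}\varepsilon^i\eta^{j-i}A_{j,i}(t)$ is a \emph{finite} linear combination of the \SLT-operator families $A_{j,i}\in\mathcal{L}_{I,\mathcal{S},\infty}$ from Theorem~\ref{existenceofneassfinite}. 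Each $A_{j,i}$ has a thermodynamic limit — this was shown in the proof of Theorem~\ref{existenceofneass2} using Lemma~\ref{cauchy1}, Lemma~\ref{cauchy2} and Lemma~\ref{cauchy3} — and the \SLT-operator families having a thermodynamic limit form a sub-vectorspace, so $\varepsilon S^{\varepsilon,\eta}(t)$ has a thermodynamic limit and lies in $\mathcal{L}_{I,\mathcal{S},\infty}$, in particular in $\mathcal{L}_{I,\zeta,0}$ for a suitable $\zeta$.

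Next I would feed this into Proposition~\ref{tdlofcauchyinteractions}, applied to the time-independent generator $\varepsilon S^{\varepsilon,\eta}(t)$ (with vanishing potential, adiabatic parameter $1$, and the generated co-cycle evaluated at unit time): this produces an automorphism $\beta^{\varepsilon,\eta}(t)=\mathrm{e}^{-\mathrm{i}\varepsilon\mathcal{L}_{S^{\varepsilon,\eta}(t)}}$ of $\mathcal{A}$, depending only on the infinite-volume data, with $\mathrm{e}^{-\mathrm{i}\varepsilon S^{\varepsilon,\eta,\Lambdak}(t)}A\,\mathrm{e}^{\mathrm{i}\varepsilon S^{\varepsilon,\eta,\Lambdak}(t)}\to\beta^{\varepsilon,\eta}(t)\Ab{A}$ in norm for every $A\in\Aloc$. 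The same proposition provides the co-cycle $\mathfrak{U}^{\varepsilon,\eta}_{t,t_0}$ with $\mathfrak{U}^{\varepsilon,\eta,\Lambdak}_{t,t_0}\Ab{A}\to\mathfrak{U}^{\varepsilon,\eta}_{t,t_0}\Ab{A}$ in norm on $\Aloc$, and Lemma~\ref{automorphismlemm} then gives that $\beta^{\varepsilon,\eta,\Lambdak}(t_0)\circ\mathfrak{U}^{\varepsilon,\eta,\Lambdak}_{t,t_0}$ converges strongly on $\Aloc$ to $\beta^{\varepsilon,\eta}(t_0)\circ\mathfrak{U}^{\varepsilon,\eta}_{t,t_0}$.

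The two displayed limits now follow by combining these strong convergences with the weak$^*$-convergence $\tfrac{1}{\kappa^\Lambdak(t)}P_*^\Lambdak(t)\overset{w^*}{\to}P_*(t)$. Using cyclicity of the trace, $\mathrm{tr}(\Pi^{\varepsilon,\eta,\Lambdak}(t)A)=\tfrac{1}{\kappa^\Lambdak(t)}\mathrm{tr}\big(P_*^\Lambdak(t)\,\mathrm{e}^{-\mathrm{i}\varepsilon S^{\varepsilon,\eta,\Lambdak}(t)}A\,\mathrm{e}^{\mathrm{i}\varepsilon S^{\varepsilon,\eta,\Lambdak}(t)}\big)=\omega_k(T_k\Ab{A})$ with $\omega_k:=\tfrac{1}{\kappa^\Lambdak(t)}P_*^\Lambdak(t)$ and $T_k:=\beta^{\varepsilon,\eta,\Lambdak}(t)$, so Lemma~\ref{stateautolemm} yields $\mathrm{tr}(\Pi^{\varepsilon,\eta,\Lambdak}(t)A)\to P_*(t)(\beta^{\varepsilon,\eta}(t)\Ab{A})=\Pi^{\varepsilon,\eta}(t)(A)$. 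Writing likewise $\mathrm{tr}(\rho^{\varepsilon,\eta,\Lambdak}(t)A)=\tfrac{1}{\kappa^\Lambdak(t_0)}\mathrm{tr}\big(P_*^\Lambdak(t_0)\,(\beta^{\varepsilon,\eta,\Lambdak}(t_0)\circ\mathfrak{U}^{\varepsilon,\eta,\Lambdak}_{t,t_0})\Ab{A}\big)$ and applying Lemma~\ref{automorphismlemm} and Lemma~\ref{stateautolemm} gives the second limit $\Pi^{\varepsilon,\eta}(t_0)(\mathfrak{U}^{\varepsilon,\eta}_{t,t_0}\Ab{A})$. Since $\Aloc$ is norm-dense in $\mathcal{A}$ and $\Pi^{\varepsilon,\eta}(t)=P_*(t)\circ\beta^{\varepsilon,\eta}(t)$ is the composition of a state with an automorphism of $\mathcal{A}$, it is a state on all of $\mathcal{A}$, which establishes the last assertion.

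The main obstacle is really only bookkeeping: one must notice that the resummation~\eqref{resumm_S} collapses to a finite sum at fixed $\varepsilon$, so that the sub-vectorspace property of ``having a thermodynamic limit'' is all that is needed and no completeness argument for an infinite series of interactions is required, and one must then thread the strong and weak$^*$ convergences through the compositions in the correct order while keeping track of the normalisation factor $\kappa^\Lambdak$.
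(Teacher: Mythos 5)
Your proof is correct and follows essentially the same route as the paper's: the paper's own proof just notes that the resummed series is finite at fixed $\varepsilon, \eta$ and refers back to the argument in the proof of Theorem~\ref{existenceofneass2}, which is exactly the combination of thermodynamic-limit lemmata, Proposition~\ref{tdlofcauchyinteractions}, Lemma~\ref{automorphismlemm} and Lemma~\ref{stateautolemm} that you spell out.
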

\begin{proof}
	Since $\delta_j \to 0$, the sum in the definition of $\varepsilon S^{\varepsilon, \eta}$  (see \eqref{resumm_S})  is finite for each $\varepsilon, \eta \in (0,1]$. The existence of the limits follows analogously to the proof of Theorem~\ref{existenceofneass2},  since the SLT-operators generating the dynamics have a thermodynamic limit.
\end{proof}
\begin{lem} \label{resummation3}  
	Let $A \in \mathcal{A}_X$ and $t \in I$. Then, for any $n \in \mathbb{N}$ and $q >0$ there exists a constant $C_{n,q}$ such that 
	\begin{equation*}
	\sup_{t \in I}\left\vert \Pi^{\varepsilon, \eta}(t)(A) - \Pi_n^{\varepsilon, \eta}(t)(A) \right\vert \le C_{n,q} (\varepsilon^{n}+\eta^n) \Vert A \Vert \vert X \vert^{1+\frac{1}{q}}. 
	\end{equation*}
\end{lem}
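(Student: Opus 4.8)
The plan is to reduce the statement to a Duhamel identity comparing the two near-identity automorphisms $\beta^{\varepsilon,\eta}(t)$ and $\beta_n^{\varepsilon,\eta}(t)$, and then to control the resulting commutator via the Lieb--Robinson bound together with the interaction-norm estimate of Lemma \ref{resummation1}. Under the standing assumption $\rho_0 = P_*(t_0)$ of this appendix the parallel transports are trivial, so that $\Pi^{\varepsilon,\eta}(t)(A) = P_*(t)(\beta^{\varepsilon,\eta}(t)\Ab{A})$ and $\Pi_n^{\varepsilon,\eta}(t)(A) = P_*(t)(\beta_n^{\varepsilon,\eta}(t)\Ab{A})$ by Lemma \ref{resummation2} (and the analogous statement for $\Pi_n^{\varepsilon,\eta}$ contained in Theorem \ref{existenceofneass2}). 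Since $P_*(t)$ is a state,
\[
\bigl|\Pi^{\varepsilon,\eta}(t)(A) - \Pi_n^{\varepsilon,\eta}(t)(A)\bigr| \;\le\; \bigl\|\bigl(\beta^{\varepsilon,\eta}(t) - \beta_n^{\varepsilon,\eta}(t)\bigr)\Ab{A}\bigr\|\,,
\]
and since both automorphisms are norm-limits over $\Lambdak$ of the finite-volume conjugations by $\mathrm{e}^{\pm\mathrm{i}\varepsilon S^{\varepsilon,\eta,\Lambdak}(t)}$ resp.\ $\mathrm{e}^{\pm\mathrm{i}\varepsilon S_n^{\varepsilon,\eta,\Lambdak}(t)}$, it is enough to bound the corresponding finite-volume operator-norm difference uniformly in $k$ and in $t\in I$.

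For this I would interpolate between the two generators: suppressing $t,\varepsilon,\eta$ from the notation and writing $g(\lambda) := \mathrm{e}^{-\mathrm{i}\varepsilon\lambda S^{\Lambdak}}\mathrm{e}^{-\mathrm{i}\varepsilon(1-\lambda)S_n^{\Lambdak}}\,A\,\mathrm{e}^{\mathrm{i}\varepsilon(1-\lambda)S_n^{\Lambdak}}\mathrm{e}^{\mathrm{i}\varepsilon\lambda S^{\Lambdak}}$, a direct computation gives $g'(\lambda) = -\mathrm{i}\,\mathrm{e}^{-\mathrm{i}\varepsilon\lambda S^{\Lambdak}}\bigl[\varepsilon(S^{\Lambdak}-S_n^{\Lambdak}),B_\lambda^{\Lambdak}\bigr]\mathrm{e}^{\mathrm{i}\varepsilon\lambda S^{\Lambdak}}$ with $B_\lambda^{\Lambdak} := \mathrm{e}^{-\mathrm{i}\varepsilon(1-\lambda)S_n^{\Lambdak}}\,A\,\mathrm{e}^{\mathrm{i}\varepsilon(1-\lambda)S_n^{\Lambdak}}$, and, conjugation by $\mathrm{e}^{\pm\mathrm{i}\varepsilon\lambda S^{\Lambdak}}$ being norm-preserving,
\[
\bigl\|\mathrm{e}^{-\mathrm{i}\varepsilon S^{\Lambdak}}A\,\mathrm{e}^{\mathrm{i}\varepsilon S^{\Lambdak}} - \mathrm{e}^{-\mathrm{i}\varepsilon S_n^{\Lambdak}}A\,\mathrm{e}^{\mathrm{i}\varepsilon S_n^{\Lambdak}}\bigr\| \;\le\; \int_0^1 \bigl\|\bigl[\varepsilon\bigl(S^{\varepsilon,\eta,\Lambdak}(t)-S_n^{\varepsilon,\eta,\Lambdak}(t)\bigr),B_\lambda^{\Lambdak}\bigr]\bigr\|\,\mathrm{d}\lambda\,.
\]
By Lemma \ref{resummation1}, $D^{\varepsilon,\eta}(t):=\varepsilon\bigl(S^{\varepsilon,\eta}(t)-S_n^{\varepsilon,\eta}(t)\bigr)$ is an SLT operator family with $\Vert\Phi_{D^{\varepsilon,\eta}(t)}\Vert_{I,1,0}\le 2C_n(\varepsilon^n+\eta^n)$, while $B_\lambda^{\Lambdak}$ is the image of $A\in\mathcal{A}_X$ under the finite-volume automorphism generated by the SLT operator $(1-\lambda)\varepsilon S_n^{\varepsilon,\eta}(t)$, whose $\mathcal{L}_{I,\mathcal{S},\infty}$-norms are bounded uniformly in $\varepsilon,\eta\in(0,1]$ and $\lambda\in[0,1]$ (each $\varepsilon^\mu A_\mu^{\varepsilon,\eta}$ being a linear combination, with $\mathcal{L}_{I,\mathcal{S},\infty}$-coefficients, of the monomials $\varepsilon^{\mu-l}\eta^l\le 1$, $0\le l\le\mu$).

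It thus remains to establish, for every $q>0$, the quasi-locality estimate
\[
\bigl\|\bigl[D^{\varepsilon,\eta,\Lambdak}(t),B_\lambda^{\Lambdak}\bigr]\bigr\| \;\le\; C_q\,\Vert\Phi_{D^{\varepsilon,\eta}(t)}\Vert_{I,1,0}\,\Vert A\Vert\,|X|^{1+\frac{1}{q}}\,,
\]
uniformly in $k$; combined with the previous two displays and $\int_0^1\mathrm{d}\lambda=1$ this yields the claim after letting $k\to\infty$, with $C_{n,q}:=2C_nC_q$, uniformity in $t$ being automatic since all the norms above are taken in the time-uniform spaces $\mathcal{L}_{I,\cdot}$. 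I would prove this estimate by expanding $D^{\Lambdak}=\sum_{Z\subset\Lambdak}\Phi_D^{\Lambdak}(Z)$, bounding $\|[\Phi_D^{\Lambdak}(Z),B_\lambda^{\Lambdak}]\|$ by the Lieb--Robinson bound (Theorem \ref{lrb}) when $Z\cap X=\emptyset$ and by $2\|\Phi_D^{\Lambdak}(Z)\|\,\|A\|$ otherwise, and then resumming over $Z$ using $\sum_{Z\ni z}\|\Phi_D^{\Lambdak}(Z)\|\le\Vert\Phi_D\Vert_{I,1,0}\Vert F_1\Vert_\Gamma$ together with the local approximation of $B_\lambda^{\Lambdak}$ by conditional expectations (Lemma \ref{partialtraces}). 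This last step is the main obstacle: one must carry out the double resummation --- over the interaction sets $Z$ of $D$ and over the quasi-local but genuinely non-local operator $B_\lambda^{\Lambdak}$ --- while keeping the exponent of $|X|$ down to $1+1/q$ rather than $|X|^2$, which is exactly where the elementary interpolation inequality \eqref{resummationrmk} applied to the factor $\min(|Z|,|X|)$ in the Lieb--Robinson bound, and the summability of the $\mathcal{S}$-decay function against polynomial weights, come in; at the same time all constants must be kept independent of $\Lambdak$, which requires the volume-uniform forms of Theorem \ref{lrb} and Lemma \ref{partialtraces}.
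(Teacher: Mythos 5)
Your proposal is substantively correct and follows the same route as the paper. The paper's proof likewise performs a Duhamel interpolation between the two exponentials of $\varepsilon S^{\varepsilon,\eta,\Lambdak}(t)$ and $\varepsilon S_n^{\varepsilon,\eta,\Lambdak}(t)$, applies Lemma~\ref{resummation1} for the interaction-norm smallness of the difference, and then invokes the interpolation inequality \eqref{resummationrmk} to get the exponent $1+\frac1q$ on $|X|$, finishing with the uniformity in $k$ and passing to the limit via Lemma~\ref{resummation2}. The one real difference: the commutator estimate
\[
\bigl\|\bigl[\varepsilon(S^{\varepsilon,\eta,\Lambdak}(t)-S_n^{\varepsilon,\eta,\Lambdak}(t)),\,B_\lambda^{\Lambdak}\bigr]\bigr\|
\le C_q\,\Vert \Phi_{S^{\varepsilon,\eta}-S_n^{\varepsilon,\eta}}\Vert_{I,1,0}\,\Vert A\Vert\,|X|^{1+\frac1q}\,,
\]
which you correctly identify as ``the main obstacle'' and sketch a proof of (decomposition of the quasi-local observable via conditional expectations, Lieb--Robinson bound for the automorphism generated by the SLT operator, interpolation of the $\min(|Z|,|X|)$ factor), is in the paper simply outsourced to Lemma~C.5 of \cite{MT}, which states exactly this estimate. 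So you are re-deriving a cited auxiliary lemma rather than taking a genuinely different route; your sketch contains the correct ingredients, but it would need to be carried out in detail to be self-contained, whereas the paper's proof is complete as written by reference. One cosmetic difference is that you pull the state $P_*(t)$ out first and bound the operator-norm difference of the automorphisms, while the paper keeps the trace against the (unitarily conjugated) projection $P_*^{\Lambdak}(t)$ throughout the interpolation; both lead to the same bound because the projection has uniformly bounded rank $\kappa$.
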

\begin{proof}
	As in the proof of Theorem 5.2 in \cite{Teu17} we use the cyclicity of the trace, Lemma~C.5 in \cite{MT} in combination with the inequality   \eqref{resummationrmk} and Lemma \ref{resummation1} to obtain
	\begin{align*}
	& \hspace{-20pt}\left\vert \mathrm{tr}\left(\left(\Pi^{\varepsilon, \eta, \Lambdak}(t)  - \Pi_n^{\varepsilon, \eta, \Lambdak}(t)\right) A\right) \right\vert \\ \le  & \  \int_{0}^{1} \mathrm{d}\alpha  \left\vert \mathrm{tr} \left(\mathrm{e}^{ \mathrm{i}\alpha \varepsilon S^{\varepsilon, \Lambdak}(t)} \,P_*^{\Lambdak}(t) \,\mathrm{e}^{-\mathrm{i} \alpha \varepsilon S^{\varepsilon, \eta, \Lambdak}(t)} \right. \right. \\  & \quad \quad \times \left. \left. \left[\left(S^{\varepsilon, \eta, \Lambdak}(t) - S^{\varepsilon, \eta, \Lambdak}_n(t)\right), \mathrm{e}^{-\mathrm{i}(1-\alpha) \varepsilon S_n^{\varepsilon, \eta, \Lambdak}(t)} \,A \,\mathrm{e}^{ \mathrm{i}(1-\alpha) \varepsilon S_n^{\varepsilon, \eta, \Lambdak}(t)}\right]\right) \right\vert \\ \le & \ C_q \,\Vert S^{\varepsilon, \eta, \Lambdak} - S^{\varepsilon, \eta, \Lambdak}_n \Vert_{I,1,0 } \,\Vert A \Vert \, \vert X \vert^{1+\frac{1}{q}} \\ \le & \ \ C_{n,q} \,(\varepsilon^n+ \eta^{n})\, \Vert A \Vert \,  \vert X \vert^{1+\frac{1}{q}}
	\end{align*}
	uniformly in $t \in I$.
	Since the bound is independent of $\Lambdak$ we may take the limit $k \to \infty$, which exists by Lemma \ref{resummation2}, and get the result. 
\end{proof}
\begin{lem} \label{resummation4}  
	Let $A \in \mathcal{A}_X$ and $t,t_0 \in I$. Then, for any $n \in \mathbb{N}$ and $p,q>0$ exists a constant $C_{n,p,q}$ such that 
	\[ 
	 \left\vert \Pi^{\varepsilon, \eta}(t_0)( \mathfrak{U}_{t,t_0}^{\varepsilon, \eta}\Ab{A}) - \Pi_n^{\varepsilon, \eta}(t_0)( \mathfrak{U}_{t,t_0}^{\varepsilon, \eta}\Ab{A})\right\vert \le C_{n,p,q} \, \frac{\varepsilon^{n} + \eta^n}{\eta^{d(1+\frac{1}{q})}}\, (1+\vert t -t_0\vert)^{d(1+\frac{1}{q})} \Vert A \Vert \vert X \vert^{1+\frac{1}{p}+ \frac{1}{q}}\,.
	\]
\end{lem}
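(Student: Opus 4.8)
The plan is to follow the proof of Lemma~\ref{resummation3}, the only new ingredient being a Lieb--Robinson localisation of the time-evolved observable $\mathfrak{U}_{t,t_0}^{\varepsilon,\eta}\Ab{A}$, which is no longer strictly local. Working at fixed finite volume $\Lambdak$, $k\ge L$, I abbreviate $B_k:=\mathfrak{U}_{t,t_0}^{\varepsilon,\eta,\Lambdak}\Ab{A}$ and, exactly as in Lemma~\ref{resummation3}, use cyclicity of the trace and the representations from Lemma~\ref{resummation2} together with the fundamental theorem of calculus interpolating between $\mathrm{e}^{\mathrm{i}\varepsilon S^{\varepsilon,\eta,\Lambdak}(t_0)}$ and $\mathrm{e}^{\mathrm{i}\varepsilon S_n^{\varepsilon,\eta,\Lambdak}(t_0)}$. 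This expresses $\mathrm{tr}\big((\Pi^{\varepsilon,\eta,\Lambdak}(t_0)-\Pi_n^{\varepsilon,\eta,\Lambdak}(t_0))\,B_k\big)$ as an integral over $\alpha\in[0,1]$ of a trace against $\big[(S^{\varepsilon,\eta,\Lambdak}-S_n^{\varepsilon,\eta,\Lambdak})(t_0),\,\cdot\,\big]$ applied to a unitary conjugate of $B_k$; Lemma~C.5 in \cite{MT}, together with \eqref{resummationrmk}, bounds such an integrand by $C_q\,\Vert S^{\varepsilon,\eta,\Lambdak}-S_n^{\varepsilon,\eta,\Lambdak}\Vert_{I,1,0}\,\Vert B\Vert\,|Y|^{1+1/q}$ as soon as the observable $B$ lies in $\mathcal{A}_Y$ for some finite $Y$. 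It thus remains to produce such a $Y$ for a good local approximation of $B_k$.

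Since $\mathfrak{U}_{t,t_0}^{\varepsilon,\eta,\Lambdak}$ is generated in physical time by $\eta^{-1}\mathcal{L}_{H^{\varepsilon,\Lambdak}(t)}$, Theorem~\ref{lrb} applied to the $\eta^{-1}$-rescaled Hamiltonian (the relevant interaction norm being $\eta^{-1}$ times that of $H_0+\varepsilon H_1$, bounded uniformly in $\varepsilon\le1$, and with the Lieb--Robinson estimate of Theorem~\ref{lrb} insensitive to the Lipschitz potential) yields, for every $R>0$ and every $C$ supported in $\Lambdak\setminus X_R$ with $X_R$ the $R$-fattening \eqref{fattening} of $X$, the bound $\Vert[B_k,C]\Vert\le c_1\,\Vert A\Vert\,\Vert C\Vert\,|X|\,\mathrm{e}^{a(\eta^{-1}v_a|t-t_0|-R)}$, with $v_a$ the Lieb--Robinson velocity \eqref{lrvelocity}. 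Feeding this into Lemma~\ref{partialtraces}(b)(ii) gives $\Vert B_k-\mathbb{E}_{X_R}(B_k)\Vert\le c_1\,\Vert A\Vert\,|X|\,\mathrm{e}^{a(\eta^{-1}v_a|t-t_0|-R)}$, so $\tilde B_k:=\mathbb{E}_{X_R}(B_k)\in\mathcal{A}_{X_R}$ with $|X_R|\le c_2\,|X|\,(1+R)^d$ approximates $B_k$ with exponentially small error.

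Writing $B_k=\tilde B_k+(B_k-\tilde B_k)$, the $\tilde B_k$-contribution is handled by Lemma~C.5 in \cite{MT} with $Y=X_R$, giving $\lesssim\Vert S^{\varepsilon,\eta,\Lambdak}-S_n^{\varepsilon,\eta,\Lambdak}\Vert_{I,1,0}\,\Vert A\Vert\,|X_R|^{1+1/q}$, while the $(B_k-\tilde B_k)$-contribution is bounded, via $\Vert\Pi^{\varepsilon,\eta,\Lambdak}(t_0)-\Pi_n^{\varepsilon,\eta,\Lambdak}(t_0)\Vert_{\mathrm{tr}}\le2$, by $2c_1\Vert A\Vert\,|X|\,\mathrm{e}^{a(\eta^{-1}v_a|t-t_0|-R)}$. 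Choosing $R$ equal to $\eta^{-1}v_a|t-t_0|$ plus an additive margin logarithmic in $\eta^{-1}$ and in $|X|$ makes the second contribution no larger than the claimed bound (the exponential tail beating the only polynomially small right-hand side), and with this choice, using $\log\eta^{-1}\lesssim\eta^{-1}$ and $(\log|X|)^{d(1+1/q)}\le C_{p,q}\,|X|^{1/p}$, one obtains $|X_R|^{1+1/q}\lesssim|X|^{1+1/p+1/q}\,\eta^{-d(1+1/q)}\,(1+|t-t_0|)^{d(1+1/q)}$. Inserting $\Vert S^{\varepsilon,\eta,\Lambdak}-S_n^{\varepsilon,\eta,\Lambdak}\Vert_{I,1,0}\le2C_n(\varepsilon^n+\eta^n)$ from Lemma~\ref{resummation1} then yields exactly the asserted estimate, uniformly in $k$; since all constants are independent of $k$ and the limits $k\to\infty$ of $\mathrm{tr}(\Pi^{\varepsilon,\eta,\Lambdak}(t_0)B_k)$ and $\mathrm{tr}(\Pi_n^{\varepsilon,\eta,\Lambdak}(t_0)B_k)$ exist by Lemma~\ref{resummation2} and Proposition~\ref{tdlofcauchyinteractions}, the bound passes to the thermodynamic limit.

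The main obstacle, apart from the rather tedious bookkeeping of the parameters $n,p,q,d$, is the interplay of the two smallness mechanisms. Because the evolution runs in rescaled time, the Lieb--Robinson cone of $\mathfrak{U}_{t,t_0}^{\varepsilon,\eta}\Ab{A}$ has radius $\sim\eta^{-1}|t-t_0|$, so the effective support of the observable has volume $\sim\eta^{-d}$; this is exactly what produces the prefactors $\eta^{-d(1+1/q)}$ and $(1+|t-t_0|)^{d(1+1/q)}$. One has to make sure that the exponentially decaying Lieb--Robinson tail genuinely dominates the merely polynomially small factor $\varepsilon^n+\eta^n$ once the localisation radius is fixed, and that the logarithmic corrections this forces on the radius are harmless --- the $\log\eta^{-1}$ part being absorbed into the already-present $\eta^{-d}$ and the $\log|X|$ part into the extra factor $|X|^{1/p}$ --- rather than spoiling the stated exponents.
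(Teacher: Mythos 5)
Your proof is correct and reaches the stated bound, but it takes a genuinely different route from the paper's. The paper does \emph{not} fix a single truncation radius: it writes a telescoping decomposition $\mathfrak{U}_{t,t_0}^{\varepsilon,\eta,\Lambdak}\Ab{A}=\sum_{j\ge0}A^{(j)}$ with $A^{(0)}=\mathbb{E}^{\Lambdak}_{X_{v\eta^{-1}|t-t_0|}\cap\Lambdak}\circ\mathfrak{U}\Ab{A}$ and $A^{(j)}=(\mathbb{E}^{\Lambdak}_{X_{v\eta^{-1}|t-t_0|+j}\cap\Lambdak}-\mathbb{E}^{\Lambdak}_{X_{v\eta^{-1}|t-t_0|+j-1}\cap\Lambdak})\circ\mathfrak{U}\Ab{A}$, applies Lemma~\ref{resummation3} shell-by-shell, and sums the resulting series. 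The factor $|X|^{1/p}$ there is produced by the interpolation $\Vert A^{(j)}\Vert\le\min(C\Vert A\Vert|X|\mathrm{e}^{-aj},2\Vert A\Vert)\le C_p\Vert A\Vert|X|^{1/p}\mathrm{e}^{-aj/p}$, i.e.\ by \eqref{resummationrmk}, and the $j$-series converges because the exponential decay in $j$ beats the polynomial shell-volume growth $(j+1)^{d(1+1/q)}$. Your approach instead makes a single cut at a radius $R$ chosen \emph{adaptively} with a logarithmic margin in $\eta^{-1}$ and $|X|$, uses the trivial trace-norm bound on the remainder, and absorbs the logarithms afterwards using $(\log\eta^{-1})^{d(1+1/q)}\lesssim\eta^{-d(1+1/q)}$ and $(\log|X|)^{d(1+1/q)}\lesssim_{p,q,d}|X|^{1/p}$. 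Both are correct; the paper's version is somewhat cleaner because the shell-sum dispenses with any $\eta$-, $n$- or $|X|$-dependent choice of cutoff (the $|X|^{1/p}$ loss appears once, via a single interpolation, and the exponential tail is handled entirely by a convergent series), whereas in your version the margin constant $c$ must be taken large enough in $n$ so that $\eta^{ac}\lesssim\eta^{n}$, which builds an extra $n$-dependence into the choice of $R$ (harmless, since the final constant is allowed to depend on $n$, but worth flagging). One small point to keep honest: after applying Lemma~C.5 of \cite{MT} to $\tilde B_k\in\mathcal{A}_{X_R\cap\Lambdak}$, the set $X_R$ must be intersected with $\Lambdak$ exactly as in the paper's $A^{(j)}$, which is what makes the estimate hold uniformly in $k$ before the thermodynamic limit is taken.
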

\begin{proof}
	For $Z \subset Z'$, let $\mathbb{E}^{Z'}_Z: \mathcal{A}_{Z'} \to \mathcal{A}_Z $ be the conditional expectation (on even observables). 
	Recall the notion of fattening of a set defined in \eqref{fattening} and let $A \in \mathcal{A}_X$ for some $X \subset \Lambdak$. 
	Defining
	\begin{equation*}
	A^{(0)} = \mathbb{E}^{\Lambdak}_{X_{\frac{v}{\eta}\vert t -t_0\vert }\cap \Lambdak}\circ \mathfrak{U}_{t,t_0}^{\varepsilon, \eta, \Lambdak}\Ab{A} , 
	\end{equation*}
	where $v$ is the Lieb-Robinson velocity for $\min(a,a')$ as in \eqref{lrvelocity}, and for $j\ge1$
	\begin{equation*}
	A^{(j)} =\left( \mathbb{E}^{\Lambdak}_{X_{\frac{v}{\eta}\vert t -t_0\vert + j }\cap \Lambdak}  - \mathbb{E}^{\Lambdak}_{X_{\frac{v}{\eta}\vert t -t_0\vert +j-1}\cap \Lambdak}\right)\circ \mathfrak{U}_{t,t_0}^{\varepsilon, \eta, \Lambdak}\Ab{A}
	\end{equation*}
	we can write $\mathfrak{U}_{t,t_0}^{\varepsilon, \eta, \Lambdak}\Ab{A} = \sum_{j=0}^{\infty}A^{(j)}$, where the sum is always finite, since eventually $X_{\frac{v}{\eta}\vert t-t_0 \vert + j }\cap \Lambdak = \Lambdak$. 
	Clearly, $A^{(j)} \in \mathcal{A}_{X_{\frac{v}{\eta}\vert t-t_0 \vert + j }}$ and we bound
	\begin{equation*}
	\vert X_{\frac{v}{\eta}\vert t-t_0 \vert + j } \vert \le \vert X \vert (2(\tfrac{v}{\eta}\vert t -t_0\vert + j))^d \le C \vert X\vert (j+1)^d (1+\eta^{-1}\vert t-t_0\vert)^d. 
	\end{equation*}
	According to the properties of the conditional expectation on even observables (see Lemma \ref{partialtraces}) and using the Lieb-Robinson bound (see Theorem \ref{lrb}) in combination with the inequality  \eqref{resummationrmk}, we have
	\begin{equation} \label{resuminequ}
	\Vert A^{(j)} \Vert \le \min\left(C \Vert A \Vert \vert X\vert \mathrm{e}^{-\min(a,a')j}, \ 2 \Vert A\Vert\right) \le C_p \Vert A \Vert \vert X\vert^{\frac{1}{p}} \mathrm{e}^{-\frac{\min(a,a')}{p}j}
	\end{equation}
	Using the cyclicity of the trace, we are left to estimate
	\begin{align*}
	 \left\vert \mathrm{tr}\left(\left(\Pi^{\varepsilon, \eta, \Lambdak}\right.\right.\right.&\left.\left.\left. \hspace{-9pt}(t_0)  - \Pi^{\varepsilon, \eta, \Lambdak}_n(t_0)\right) \mathfrak{U}_{t,t_0}^{\varepsilon, \eta, \Lambdak}\Ab{A}\right) \right\vert \\ \le \ & \sum_{j=0}^{\infty} \left\vert \mathrm{tr}\left(\left(\Pi^{\varepsilon, \eta, \Lambdak}(t_0)  - \Pi^{\varepsilon, \eta, \Lambdak}_n(t_0)\right) A^{(j)}\right) \right\vert \\ \le \ &C_{n,q} (\varepsilon^{n}+ \eta^n) \sum_{j=0}^{\infty}  \Vert A^{(j)} \Vert \vert X_{\frac{v}{\eta}\vert t -t_0\vert + j } \vert^{1+\frac{1}{q}} \\ \le \ & C_{n,p,q} \, \frac{\varepsilon^{n} + \eta^n}{\eta^{d(1+\frac{1}{q})}} \, (1+\vert t-t_0 \vert)^{d(1+\frac{1}{q})} \sum_{j=0}^{\infty} (j+1)^{d(1+\frac{1}{q})} \mathrm{e}^{- \frac{\min(a,a')}{p} j}  \Vert A \Vert \vert X \vert^{1+\frac{1}{p}+\frac{1}{q}} \\ \le \ & C_{n,p,q} \, \frac{\varepsilon^{n} + \eta^n}{\eta^{d(1+\frac{1}{q})}}\,  (1+\vert t -t_0\vert)^{d(1+\frac{1}{q})} \Vert A \Vert \vert X \vert^{1+\frac{1}{p}+\frac{1}{q}}, 
	\end{align*}
	where the second inequality follows from Lemma \ref{resummation3}, the third inequality is discussed above. The last inequality is true since the series converges. 
	As the bound is independent of $\Lambdak$ we may take the limit $k \to \infty$, which exists by Lemma \ref{resummation2}, and get the result. 
\end{proof}
 
\section{Observables in $\mathcal{D}_f$} \label{localtoflocalizedappendix}
Here we show how to generalise our main results to observables $A \in \mathcal{D}_f\supset \Aloc$. 
\begin{lem} \label{localtoflocalized}  
	Let $f \in \mathcal{S}$, $\omega$ a state on $\mathcal{A}$ and $T: D(T) \to \mathcal{A}$ be a densely defined linear operator with $ \Aloc  \subset \mathcal{D}_f \subset D(T) \subset \mathcal{A}$ and $\Vert T\circ (\mathbb{E}_{\LambdaN} -{\mathrm{id}})\Ab{A}\Vert \to 0$ as $N \to \infty$ for all $A \in \mathcal{D}_f$. In particular, $T$ could be an automorphism on $\mathcal{A}$. 
	Assume that there exists a constant $C>0$ and $b \in \mathbb{N}$ satisfying
	\begin{equation}
	\vert \omega(T\Ab{A}) \vert \le C \Vert A \Vert \vert X \vert^b, \quad A \in \mathcal{A}_X,
	\end{equation}
	for any $X \in \mathcal{P}_0(\Gamma)$. Then there exists a constant $C_{b,f}>0$ such that
	\begin{equation}
	\vert \omega(T\Ab{A}) \vert \le C \, C_{b,f} \Vert A \Vert_f, \quad A \in \mathcal{D}_f.
	\end{equation}
\end{lem}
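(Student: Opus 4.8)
The plan is to reduce the estimate from a general $A\in\mathcal{D}_f$ to strictly local observables by a telescoping decomposition into conditional expectations, and then to play the polynomial bound on local observables against the super-polynomial decay of $f$.

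First I would record the elementary consequence of the definition of $\|\cdot\|_f$: for $A\in\mathcal{D}_f$ one has $\|A-\mathbb{E}_{\Lambdak}(A)\|\le f(k)\,\|A\|_f$ for every $k$, so in particular $\mathbb{E}_{\Lambdak}(A)\to A$ in norm as $k\to\infty$, since $f(k)\to0$. Because $\mathbb{E}_{\LambdaN}(A)\in\mathcal{A}_{\LambdaN}\subset\Aloc\subset\mathcal{D}_f\subset D(T)$ and $A\in\mathcal{D}_f\subset D(T)$, linearity of $T$ together with the hypothesis $\|T\circ(\mathbb{E}_{\LambdaN}-\mathrm{id})\Ab{A}\|\to0$ and $\|\omega\|=1$ give
\[
\omega(T\Ab{A})=\lim_{N\to\infty}\omega\big(T\Ab{\mathbb{E}_{\LambdaN}(A)}\big).
\]

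Second, I would insert the telescoping sum
\[
\mathbb{E}_{\LambdaN}(A)=\mathbb{E}_{\Lambda_0}(A)+\sum_{k=1}^{N}\big(\mathbb{E}_{\Lambdak}(A)-\mathbb{E}_{\Lambda_{k-1}}(A)\big),
\]
noting that each increment lies in $\mathcal{A}_{\Lambdak}$ (using $\mathcal{A}_{\Lambda_{k-1}}\subset\mathcal{A}_{\Lambdak}$), so that the hypothesis applies to it with $\vert X\vert=\vert\Lambdak\vert=(2k+1)^d$. Estimating the increments by the triangle inequality,
\[
\big\|\mathbb{E}_{\Lambdak}(A)-\mathbb{E}_{\Lambda_{k-1}}(A)\big\|\le\|A-\mathbb{E}_{\Lambdak}(A)\|+\|A-\mathbb{E}_{\Lambda_{k-1}}(A)\|\le 2f(k-1)\,\|A\|_f
\]
(using that $f$ is non-increasing), and using $\|\mathbb{E}_{\Lambda_0}(A)\|\le\|A\|\le\|A\|_f$ with $\vert\Lambda_0\vert=1$ together with linearity of $\omega$ and $T$ on $\Aloc$, one obtains
\[
\big\vert\omega\big(T\Ab{\mathbb{E}_{\LambdaN}(A)}\big)\big\vert\le C\,\|A\|_f+2C\,\|A\|_f\sum_{k=1}^{N}f(k-1)\,(2k+1)^{db}.
\]

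Finally, since $f\in\mathcal{S}$ decays faster than any inverse polynomial, the series $\sum_{k\ge1}f(k-1)(2k+1)^{db}$ converges; setting $C_{b,f}:=1+2\sum_{k\ge1}f(k-1)(2k+1)^{db}$, a constant depending only on $b$, $d$ and $f$, and letting $N\to\infty$ yields the assertion. The only point needing any care is the bookkeeping in the second step, namely keeping track of which local algebra each telescoping piece belongs to so that the correct volume factor $(2k+1)^{db}$ appears; once that is in place the summability is immediate from $f\in\mathcal{S}$, so there is no substantial obstacle.
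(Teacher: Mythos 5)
Your proof is correct and follows essentially the same route as the paper's: both telescope $\mathbb{E}_{\Lambda_N}(A)$ into increments $(\mathbb{E}_{\Lambda_{k}}-\mathbb{E}_{\Lambda_{k-1}})(A)\in\mathcal{A}_{\Lambda_k}$, bound each increment by $2f(k-1)\Vert A\Vert_f$ (resp.\ $2f(j)\Vert A\Vert_f$ in the paper, which starts the telescope at $\Lambda_1$ rather than $\Lambda_0$), apply the polynomial hypothesis, and sum using the super-polynomial decay of $f\in\mathcal{S}$. The only cosmetic difference is the starting index of the telescope; since $f(0)>0$, your bound on the $k=0$ term via $\Vert A\Vert\le\Vert A\Vert_f$ is fine and no substantive change is needed.
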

\begin{proof}
	Let $A \in \mathcal{D}_f$ and write for the local approximation
	\begin{equation*}
	A = \lim\limits_{N \to \infty}\sum_{j=0}^{N} A^{(j)} = \lim\limits_{N \to \infty} \mathbb{E}_{\LambdaN}\Ab{A},
	\end{equation*}
	where 
	\begin{align*}
	A^{(0)} = \mathbb{E}_{\Lambda_1}\Ab{A}\quad \text{and} \quad A^{(j)} = (\mathbb{E}_{\Lambda_{j+1}} - \mathbb{E}_{\Lambda_j})\Ab{A} \quad \text{for} \quad j \ge 1.
	\end{align*}
	The limit exists in norm and we also have $\omega(T\circ \mathbb{E}_N\Ab{A})) \to \omega(T\Ab{A})$ as $N \to \infty$. Hence, 
	\begin{align*}
	\vert \omega(T\Ab{A}) \vert &= \lim\limits_{N \to \infty} \vert \omega(T\circ \mathbb{E}_N\Ab{A} ) \vert \le  \lim\limits_{N \to \infty} \sum_{j=0}^{N}\vert \omega(T\Ab{A^{(j)}})\vert \\
	&\le \lim\limits_{N\to \infty} \sum_{j=1}^{N} C \, \Vert (\mathbb{E}_{\Lambda_{j+1}} - \mathbb{E}_{\Lambda_j})\Ab{A}\Vert\, \vert \Lambda_{j+1}\vert^b + C \, \Vert \mathbb{E}_{\Lambda_1}(A) \Vert \, \vert \Lambda_1 \vert^b \\ &\le C \left( 2\ \sum_{j=1}^{\infty}  f(j) \vert \Lambda_{j+1}\vert^b +  \vert \Lambda_1 \vert^b \right) \Vert A \Vert_f\,. \qedhere
	\end{align*}
\end{proof}
In the following Lemma, we show that the maps $\mathcal{K}_j^{\varepsilon, \eta}(t) : \Aloc  \to \mathcal{A}$ are closable and drop the superscripts $\varepsilon$ and $\eta$. 
\begin{lem} \label{closureofderivation}
Let $j \in \mathbb{N}_0$ and $\mathcal{K}_j(t) : \Aloc  \to \mathcal{A}$ be the limit of $\mathcal{K}_j^{\Lambdak}(t):\mathcal{A}_{\Lambdak} \to \mathcal{A}_{\Lambdak}$ (see Lemma \ref{cauchy4}). Then $(\mathcal{K}_j(t), \Aloc )$ is closable for any $t \in I$.
\end{lem}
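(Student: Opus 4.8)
The plan is to produce a \emph{formal adjoint} of $\mathcal{K}_j(t)$ on a dense set of functionals and deduce closability from it. Throughout, let $\tau$ denote the tracial state on the CAR algebra $\overline{\bigcup_k \mathcal{L}(\mathfrak{F}_{\Lambdak})}$ whose restriction to each $\mathcal{L}(\mathfrak{F}_{\Lambdak})$ is the normalized trace $\tfrac{1}{\dim \mathfrak{F}_{\Lambdak}}\,\mathrm{tr}$; since this algebra is UHF, hence simple, $\tau$ is faithful, and these are the only facts about $\tau$ I will use. Recall from Lemma~\ref{cauchy4} that each $\mathcal{K}_j^{\Lambdak}(t)$ is a finite sum of operators of the form $\mathrm{ad}_{A^{\Lambdak}_{i_m}(t)}\circ\cdots\circ\mathrm{ad}_{A^{\Lambdak}_{i_1}(t)}$ with SLT operators $A_i(t)\in\mathcal{L}_{I,\mathcal{S},\infty}$ that have a thermodynamic limit, and that $\mathcal{K}_j(t)\Ab{B}=\lim_{k\to\infty}\mathcal{K}_j^{\Lambdak}(t)\Ab{B}$ in norm for every $B\in\Aloc$.

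First I would move all commutators off the observable: using cyclicity of $\tau$ and the elementary identity $\tau\big(C\,[A,X]\big)=-\tau\big([A,C]\,X\big)$ repeatedly, one obtains for $A,C\in\Aloc$ and all $k$ large enough
\begin{equation*}
\tau\big(C\,\mathcal{K}_j^{\Lambdak}(t)\Ab{A}\big)\;=\;\tau\big(\widetilde{\mathcal{K}}_j^{\Lambdak}(t)\Ab{C}\cdot A\big)\,,
\end{equation*}
where $\widetilde{\mathcal{K}}_j^{\Lambdak}(t)$ is obtained from $\mathcal{K}_j^{\Lambdak}(t)$ by reversing, in every summand, the order of the $\mathrm{ad}$'s and inserting an overall sign. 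The crucial point is that $\widetilde{\mathcal{K}}_j^{\Lambdak}(t)$ is again a finite sum of nested commutators built from the \emph{same} SLT operators $A_i(t)\in\mathcal{L}_{I,\mathcal{S},\infty}$, which carry all decay indices and have thermodynamic limits; hence Lemma~\ref{cauchy4} applies to it verbatim, and $\widetilde{\mathcal{K}}_j(t)\Ab{C}:=\lim_{k\to\infty}\widetilde{\mathcal{K}}_j^{\Lambdak}(t)\Ab{C}$ exists in norm for every $C\in\Aloc$. Since $\tau$ is norm-continuous and $\mathcal{K}_j^{\Lambdak}(t)\Ab{A}\to\mathcal{K}_j(t)\Ab{A}$, $\widetilde{\mathcal{K}}_j^{\Lambdak}(t)\Ab{C}\to\widetilde{\mathcal{K}}_j(t)\Ab{C}$ in norm, letting $k\to\infty$ yields the integration-by-parts formula
\begin{equation*}
\tau\big(C\,\mathcal{K}_j(t)\Ab{A}\big)\;=\;\tau\big(\widetilde{\mathcal{K}}_j(t)\Ab{C}\cdot A\big)\qquad\text{for all }A,C\in\Aloc\,.
\end{equation*}

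Closability is then immediate. If $A_n\in\Aloc$ with $\|A_n\|\to0$ and $\mathcal{K}_j(t)\Ab{A_n}\to B$ in norm, then for every $C\in\Aloc$
\begin{equation*}
\tau(C\,B)=\lim_{n\to\infty}\tau\big(C\,\mathcal{K}_j(t)\Ab{A_n}\big)=\lim_{n\to\infty}\tau\big(\widetilde{\mathcal{K}}_j(t)\Ab{C}\cdot A_n\big)=0\,,
\end{equation*}
the first equality by norm-continuity of $X\mapsto\tau(CX)$, the last because $\widetilde{\mathcal{K}}_j(t)\Ab{C}$ is fixed and $\|A_n\|\to0$. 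By norm-density of $\Aloc$ this extends to $\tau(CB)=0$ for all $C\in\mathcal{A}$, so $\tau(B^*B)=0$, and faithfulness of $\tau$ gives $B=0$. Thus $(\mathcal{K}_j(t),\Aloc)$ is closable; for $j=1$ this recovers the classical fact that a densely defined $*$-derivation of a C$^*$-algebra is closable.

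The one point requiring care is the middle paragraph: one must check that the formal adjoint $\widetilde{\mathcal{K}}_j^{\Lambdak}(t)$ still falls under the hypotheses of Lemma~\ref{cauchy4} — which it does, precisely because reversing a string of $\mathrm{ad}$'s costs nothing once all the constituent SLT operators already lie in $\mathcal{L}_{I,\mathcal{S},\infty}$ — and that the two limits involved (the thermodynamic limit of the operators and the norm-continuity of $\tau$) can be taken cleanly; both follow from the norm convergences already in hand. The consistency of the normalized traces under the fermionic inclusions and the faithfulness of $\tau$ are standard facts about the CAR algebra that I would only quote.
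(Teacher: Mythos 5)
Your proof is correct, but it proceeds by a genuinely different route than the paper. You construct an explicit formal adjoint $\widetilde{\mathcal{K}}_j(t)$ of $\mathcal{K}_j(t)$ with respect to the faithful tracial state $\tau$ on the CAR algebra: since $\tau(C[A,X])=-\tau([A,C]X)$, reversing the string of $\mathrm{ad}$'s (with a sign $(-1)^m$) yields $\tau(C\,\mathcal{K}_j^{\Lambdak}(t)\Ab{A})=\tau(\widetilde{\mathcal{K}}_j^{\Lambdak}(t)\Ab{C}\,A)$ at finite volume, and since the constituent $A_i(t)$ lie in $\mathcal{L}_{I,\mathcal{S},\infty}$ (and hence satisfy the decay hypotheses for every ordering), Lemma~\ref{cauchy4} applies to the reversed composition as well, so the identity passes to the thermodynamic limit; closability then follows from the standard adjoint criterion together with faithfulness of $\tau$. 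The paper instead exploits that $\sum_j\varepsilon^j\mathcal{K}_j^{\Lambdak}(t)$ is the $\varepsilon$-expansion of conjugation by a unitary, i.e.\ a $*$-automorphism, uses the operator inequality $\alpha(B^*B)+B^*B\ge\alpha(B^*)B+B^*\alpha(B)$, passes to the limit and compares $\varepsilon$-coefficients to obtain $\mathcal{K}_j(t)\Ab{B^*B}\ge\mathcal{K}_j(t)\Ab{B^*}B+B^*\mathcal{K}_j(t)\Ab{B}$, and then invokes the Bratteli--Robinson dissipativity criteria (Propositions 3.2.22 and 3.1.15) to conclude closability. Your argument is more elementary in that it avoids dissipativity theory entirely, but it leans on specific features of the lattice setting (the unique faithful trace on a UHF-type algebra and the consistency of normalized traces under fermionic inclusions); the paper's argument is more structural, using only that the $\mathcal{K}_j$ arise from a one-parameter family of automorphisms, and hence would carry over verbatim to settings without a canonical trace. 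Both approaches are sound; yours also yields as a by-product that $\mathcal{K}_j(t)$ has a densely defined transpose on $\Aloc$, which the paper does not record.
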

\begin{proof}
The proof of this lemma rests on two facts: First, the maps $\mathcal{K}_j^{\Lambdak}(t):\mathcal{A}_{\Lambdak} \to \mathcal{A}_{\Lambdak}$ are defined by the condition 
\begin{equation*}
\mathrm{e}^{-\mathrm{i}\sum_{\mu=1}^{n}\varepsilon^{\mu} A^{\Lambdak}_{\mu}(t)} B \mathrm{e}^{+\mathrm{i}\sum_{\mu=1}^{n}\varepsilon^{\mu} A^{\Lambdak}_{\mu}(t)} = \sum_{j=0}^{n} \varepsilon^j \mathcal{K}_j^{\Lambdak}(t)\Ab{B} + \mathcal{O}(\varepsilon^{n+1})
\end{equation*}
for any local observable $B \in \mathcal{A}_{\Lambdak}$, all $t \in I$ and $n \in \mathbb{N}_0$, uniformly in $k \in \mathbb{N}$. Second, $\mathcal{K}_j(t) : \Aloc  \to \mathcal{A}$ is the limit of $\mathcal{K}_j^{\Lambdak}(t):\mathcal{A}_{\Lambdak} \to \mathcal{A}_{\Lambdak}$ by definition. 

For the crucial step we need the following observation: Let $\mathcal{A}$ be a $C^*$-algebra and~$\alpha$ a $^*$-automorphism of $\mathcal{A}$. Then
\begin{equation*}
\alpha\Ab{B^* B} + B^* B \ge \alpha\Ab{B^*} B + B^* \alpha\Ab{B}
\end{equation*}
for any $B \in \mathcal{A}$, since
\begin{equation*}
\alpha\Ab{B^* B}  - \alpha\Ab{B^*} B - B^* \alpha\Ab{B}+ B^* B = (\alpha\Ab{B} - B)^* (\alpha\Ab{B} - B) \ge 0. 
\end{equation*}
As the conjugation with unitaries is a $^*$-automorphism and $\mathcal{K}_j^{\Lambdak}(t)\Ab{B} \to \mathcal{K}_j^{}(t)\Ab{B}$ as $k \to \infty$, we get
\[
 \sum_{j=0}^{n} \varepsilon^j \mathcal{K}_j^{}(t)\Ab{B^* B} + B^* B + \mathcal{O}(\varepsilon^{n+1})  \ge \sum_{j=0}^{n} \varepsilon^j \mathcal{K}_j^{}(t)\Ab{B^*}B + \sum_{j=0}^{n} \varepsilon^j B^* \mathcal{K}_j^{}(t)\Ab{B}+ \mathcal{O}(\varepsilon^{n+1})
\]
for any $B \in \mathcal{A}_{\mathrm{loc}}$, all $t \in I$ and $n \in \mathbb{N}_0$. Since $\mathcal{K}_0(t)$ is the identity on $\mathcal{A}_{\mathrm{loc}}$ (which is trivially closable) and by comparing coefficients, we get
\begin{align*}
 \mathcal{K}_j^{}(t)\Ab{B^* B} 
\ge \mathcal{K}_j^{}(t)\Ab{B^*}B +  B^* \mathcal{K}_j^{}(t)\Ab{B} 
\end{align*}
for all $j \in \mathbb{N}$, $B \in \mathcal{A}_{\mathrm{loc}}$ and $t \in I$. 

Since $\mathbf{1} \in \mathcal{A}_{\mathrm{loc}} $, $B \in\mathcal{A}_{\mathrm{loc}}$ with $B \ge 0 $ implies $B^{1/2} \in \mathcal{A}_{\mathrm{loc}}$ and $ \mathcal{K}_j^{}(t)\Ab{B}^* =  \mathcal{K}_j^{}(t)\Ab{B^*} $ for all $B \in \mathcal{A}_{\mathrm{loc}}$ (by definition), we get by application of Proposition~3.2.22 in \cite{bratteli2012operator} that $\mathcal{K}_j^{}(t)$ is dissipative and therefore closable by Proposition~3.1.15 in \cite{bratteli2012operator}. 
\end{proof}
In the following Lemma, we show that $\overline{\mathcal{K}}_j(t)$ satisfies the assumption of Lemma \ref{localtoflocalized}. 
\begin{lem}  {\rm (Adaption of Lemma 4.12 from \cite{MO})}\label{domainofderivlemma}  \\
Let $j \in \mathbb{N}_0$ and $\overline{\mathcal{K}}_j(t): D(\overline{\mathcal{K}}_j(t)) \to \mathcal{A}$ be the closure of the thermodynamic limit of $\mathcal{K}_j^{\Lambdak}:\mathcal{A}_{\Lambdak} \to \mathcal{A}_{\Lambdak}$. 
Let $f: [0,\infty) \to (0,\infty)$ be a bounded, non-increasing function such that 
\begin{equation*}
\sum_{k=1}^{\infty} (k+1)^{j\cdot d} f(k) < \infty.
\end{equation*}
Then $\mathcal{D}_f \subset D(\overline{\mathcal{K}}_j(t))$ for all $t \in I$ and there is a constant $C_{f} > 0 $ such that
\begin{equation*}
\sup_{t \in I} \Vert \overline{\mathcal{K}}_j(t)\Ab{A}\Vert \ \le \ C_{f}\Vert A \Vert_f 
\end{equation*}
for all $A \in \mathcal{D}_f$. So, using the properties of $\mathcal{D}_f$ (see Appendix B of \cite{MO}), if $g: [0,\infty) \to (0,\infty)$ is a bounded, non-increasing function such that $\lim\limits_{N \to \infty} \frac{g(N)}{f(N)} =0 $, we have in particular
\begin{equation*}
\lim\limits_{N \to \infty} \sup_{t \in I}\Vert  \overline{\mathcal{K}}_j(t) \circ (\mathrm{id} -  \mathbb{E}_{\LambdaN})\Ab{A} \Vert = 0
\end{equation*}
for all $A \in \mathcal{D}_{g}\subset \mathcal{D}_f$.
\end{lem}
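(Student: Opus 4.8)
The plan is to reduce the claim to a quantitative estimate on how well the maps $\mathcal{K}_j^{\Lambdak}(t)$ localise, exactly in the spirit of Lemma~4.12 in \cite{MO}. First I would recall from Lemma~\ref{cauchy4} and Proposition~\ref{asymptoticexpfinite} that each $\mathcal{K}_j^{\Lambdak}(t)$ is a finite sum of nested commutators $\mathrm{ad}_{A_{i_r}^{\Lambdak}(t)}\circ\cdots\circ\mathrm{ad}_{A_{i_1}^{\Lambdak}(t)}$ with $\SLT$-operator families $A_\mu^{\varepsilon,\eta}\in\mathcal{L}_{I,\mathcal{S},\infty}$ whose generating interactions have $\Vert\Phi_{A_\mu}\Vert_{I,\zeta,m}<\infty$ for every $m$ (with suitable $\zeta\in\mathcal{S}$). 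The key locality input is the bound used inside the proof of Lemma~\ref{cauchy4}: for $B\in\mathcal{A}_X$ with $X\subset\Lambdak$, applying $r$ nested adjoints gives an operator whose distance-$\delta$ commutator norm decays like $|X|^r\Vert B\Vert$ times the tails $\sup_{x\in X}\sum_{y:\,d(x,y)>\delta}F_\zeta(d(x,y))$, which is $\Or(|X|^{r}\,\tilde\zeta(\delta))$ for some $\tilde\zeta\in\mathcal{S}$ depending on the $A_\mu$'s. Passing to the thermodynamic limit (the limit exists by Lemma~\ref{cauchy4}) these bounds are preserved.

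Concretely, the steps would be: (1) Fix $A\in\mathcal{D}_f$ and use the telescoping decomposition $A=\sum_{m\ge 0}A^{(m)}$ with $A^{(0)}=\mathbb{E}_{\Lambda_1}\Ab{A}$ and $A^{(m)}=(\mathbb{E}_{\Lambda_{m+1}}-\mathbb{E}_{\Lambda_m})\Ab{A}$ for $m\ge1$; by definition of $\Vert\cdot\Vert_f$ we have $\Vert A^{(m)}\Vert\le 2f(m)\Vert A\Vert_f$ for $m\ge1$ and $A^{(m)}\in\mathcal{A}_{\Lambda_{m+1}}$, so $|X_m|:=|\Lambda_{m+1}|\le C(m+1)^d$. (2) Since $\mathcal{K}_j(t)$ is closable (Lemma~\ref{closureofderivation}), to show $A\in D(\overline{\mathcal{K}}_j(t))$ it suffices to prove that $\sum_{m}\Vert\mathcal{K}_j(t)\Ab{A^{(m)}}\Vert<\infty$; the bound from step~(1) combined with the nested-commutator estimate gives $\Vert\mathcal{K}_j(t)\Ab{A^{(m)}}\Vert\le C_j\,\Vert A^{(m)}\Vert\,|X_m|^{j}\le C_j'\,(m+1)^{jd}f(m)\Vert A\Vert_f$, which is summable exactly by the hypothesis $\sum_k(k+1)^{jd}f(k)<\infty$, yielding simultaneously $\mathcal{D}_f\subset D(\overline{\mathcal{K}}_j(t))$ and $\sup_{t\in I}\Vert\overline{\mathcal{K}}_j(t)\Ab{A}\Vert\le C_f\Vert A\Vert_f$ (the $\sup_{t\in I}$ is harmless because the interaction norms of the $A_\mu(t)$ are bounded uniformly in $t$ by the smoothness-and-boundedness assumption). (3) For the final assertion, write $(\mathrm{id}-\mathbb{E}_{\LambdaN})\Ab{A}=\sum_{m\ge N}A^{(m)}$ (using $\mathbb{E}_{\LambdaN}$ idempotency and compatibility), so $\Vert\overline{\mathcal{K}}_j(t)\circ(\mathrm{id}-\mathbb{E}_{\LambdaN})\Ab{A}\Vert\le C_f'\sum_{m\ge N}(m+1)^{jd}f(m)\Vert A\Vert_g\cdot\frac{g(m)}{f(m)}$ — more cleanly, since $A\in\mathcal{D}_g$ with $g/f\to0$, the tail $\sum_{m\ge N}(m+1)^{jd}g(m)$ (which converges because $g\le f$ up to constants for large arguments, hence $\sum(m+1)^{jd}g(m)<\infty$) tends to $0$ as $N\to\infty$, uniformly in $t$.

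For syntactic safety I would phrase the core estimate as a single displayed inequality, e.g.
\[
\Vert \overline{\mathcal{K}}_j(t)\Ab{A}\Vert \;\le\; \sum_{m=0}^{\infty} \Vert \mathcal{K}_j(t)\Ab{A^{(m)}}\Vert \;\le\; C_j\sum_{m=0}^{\infty} (m+1)^{jd}\,\Vert A^{(m)}\Vert \;\le\; C_f\,\Vert A\Vert_f\,,
\]
and then remark that replacing $A$ by $(\mathrm{id}-\mathbb{E}_{\LambdaN})\Ab{A}$ restricts the sum to $m\ge N$ and the remaining series is a convergent tail when $A\in\mathcal{D}_g$, so it vanishes as $N\to\infty$.

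I expect the main obstacle to be making precise the constant $C_j$ in the nested-commutator localisation bound: one must track how the decay function degrades under each $\mathrm{ad}_{A_\mu}$ (each application costs one power of $|X|$ and convolves the decay with $F_\zeta$, raising the polynomial prefactor by $d$, which is why the hypothesis carries $(k+1)^{jd}$ rather than $(k+1)^{jd+1}$ or a larger power), and one must ensure the resulting effective decay function lies in $\mathcal{S}$ and can be absorbed — this is exactly the bookkeeping already done in the proof of Lemma~\ref{cauchy4} and in Lemma~4.7 of \cite{BDF}, so I would cite those rather than redo it. The secondary subtlety is the uniformity in $t\in I$, handled by the uniform-in-$t$ interaction bounds built into the notion of smooth and bounded time-dependent interactions; and one should note $A^{(m)}\in\mathcal{A}^+$ so that the even conditional expectations $\mathbb{E}_{\Lambda_m}$ behave well (Lemma~\ref{partialtraces}), which is automatic since we redefined $\mathcal{A}:=\mathcal{A}^+$.
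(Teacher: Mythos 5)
Your proposal is correct and follows essentially the same route as the paper's proof: the telescoping decomposition via even conditional expectations, the local bound $\Vert\mathcal{K}_j(t)\Ab{B}\Vert\le C\Vert B\Vert\,|X|^j$ for $B\in\mathcal{A}_X$ (the paper obtains this from Lemma~C.3 of \cite{MT}), the resulting absolutely convergent series controlled by $\sum_k(k+1)^{jd}f(k)$, and closedness of $\overline{\mathcal{K}}_j(t)$ to place $A$ in the domain. The only cosmetic difference is that the paper phrases the convergence as a Cauchy estimate on $\mathcal{K}_j(t)\circ(\mathbb{E}_{\Lambda_N}-\mathbb{E}_{\Lambda_M})\Ab{A}$ rather than as absolute summability of $\sum_m\mathcal{K}_j(t)\Ab{A^{(m)}}$, which is equivalent.
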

\begin{proof}
The proof of this lemma is analogous to the one of Lemma 4.12 in \cite{MO} with minor changes, since we are not dealing with derivations generated by finite range interactions. Recall, that $\mathcal{K}_j(t)\Ab{A}$ (we drop $\overline{\,\cdot\,}$ from now on) is obtained as the limit of a finite sum of multi-commutators with the operators $A_1(t), ... , A_j(t) \in \mathcal{L}_{\mathcal{S}, \infty}$, so each summand has a maximum of $j$ commutators. 
Using Lemma~C.3 from \cite{MT}, we have for $A \in \mathcal{A}_X$ the estimate
\begin{align*}
\Vert \mathcal{K}_j(t)\Ab{A} \Vert &\le C \Vert A \Vert \vert X \vert^j, \quad t \in I,
\end{align*}
where $C>0$ is independent of $A$ and $t$. 
From this, for any $A \in \mathcal{D}_f$ and $N,M \in \mathbb{N}$ with $N>M$, we have
\begin{align}
 &\Vert \mathcal{K}_j(t)\circ (\mathbb{E}_{\LambdaN} -   \mathbb{E}_{\LambdaM})\Ab{A} \Vert \;=\; \left\Vert \sum_{k=M}^{N-1} \mathcal{K}_j(t)\circ (\mathbb{E}_{\Lambda_{k+1}} - \mathbb{E}_{\Lambdak} )\Ab{A}\right\Vert \nonumber\\ &\le  \; C \sum_{k=M}^{N-1} \vert \Lambda_{k+1}\vert^j \Vert (\mathbb{E}_{\Lambda_{k+1}} - \mathbb{E}_{\Lambdak} )\Ab{A}\Vert \; \le \;2C \left(\sum_{k=M}^{N-1} \vert \Lambda_{k+1}\vert^j f(k) \right) \Vert A \Vert_f. \label{derivationcauchy}
\end{align}
This implies, that $(\mathcal{K}_j(t)\circ \mathbb{E}_{\LambdaN} \Ab{A})_{N \in \mathbb{N}}$ with $A \in \mathcal{D}_f$ is a Cauchy sequence in $\mathcal{A}$, hence there exists a limit. Moreover, $\mathbb{E}_{\LambdaN}\Ab{A}$ converges to $A$ in $\Vert \cdot \Vert $. Since $\mathcal{K}_j(t)$ is closed, $A \in \mathcal{D}_f$ belongs to the domain $D(\mathcal{K}_j(t))$ of $\mathcal{K}_j(t)$ and 
\begin{equation*}
\mathcal{K}_j(t)\Ab{A}= \lim\limits_{N \to \infty} \mathcal{K}_j(t)\circ \mathbb{E}_{\LambdaN}\Ab{A}\,.
\end{equation*}
Hence,   $  \mathcal{D}_f \subset D(\mathcal{K}_j(t))$. Similarly to the estimate in \eqref{derivationcauchy}, one obtains
\begin{equation*}
\Vert \mathcal{K}_j(t)\Ab{A} \Vert \le \left(2C  \sum_{k=1}^{\infty} \vert \Lambda_{k+1}\vert^j f(k)  + C \vert \Lambda_1\vert^j \right) \Vert A \Vert_f = C_f \Vert A\Vert_f  
\end{equation*}
for any $A \in \mathcal{D}_f$ by considering   a limit.
Hence, we have proven the claim. 
\end{proof}

  \noindent {\bf Acknowledgment.} J.H.~acknowledges partial financial support by the ERC Advanced Grant ``RMTBeyond" No.~101020331.
  \\[5mm]
	\noindent \textbf{Data Availability.}
	Data sharing is not applicable to this article as no new data were created or analyzed in this study.

\end{document}